\newcommand{\ipic}[3][-0.5]{\raisebox{#1\height}{\scalebox{#3}{\includegraphics{#2}}}}
\newcommand{\tr}{\mbox{tr}}
\date{\today}
\newtheorem{theorem}{Theorem}
\newtheorem{lemma}{Lemma}
\begin{document}


\title{The Presence and Absence of Barren Plateaus in Tensor-network Based Machine Learning}

\author{Zidu Liu}\thanks{These authors contributed equally to this work.}
\author{Li-Wei Yu}\thanks{These authors contributed equally to this work.}
 \affiliation{Center for Quantum Information, IIIS, Tsinghua University, Beijing 100084, People's Republic of China}
 
 \author{L.-M. Duan}\email{lmduan@tsinghua.edu.cn}
\affiliation{Center for Quantum Information, IIIS, Tsinghua University, Beijing 100084, People's Republic of China}
\author{Dong-Ling Deng}
\email{dldeng@tsinghua.edu.cn}
 \affiliation{Center for Quantum Information, IIIS, Tsinghua University, Beijing 100084, People's Republic of China}
\affiliation{Shanghai Qi Zhi Institute, 41th Floor, AI Tower, No. 701 Yunjin Road, Xuhui District, Shanghai 200232, China}

\begin{abstract}
Tensor networks are efficient representations of high-dimensional tensors with widespread applications in quantum many-body physics. Recently, they have been adapted to the field of machine learning, giving rise to an emergent research frontier that has attracted considerable attention. Here, we study the trainability of tensor-network based machine learning models by exploring the landscapes of different loss functions, with a focus on the matrix product states (also called tensor trains) architecture. In particular, we rigorously prove that barren plateaus (i.e., exponentially vanishing gradients) prevail in the training process of the machine learning algorithms with global loss functions. 
Whereas, for local loss functions the gradients with respect to variational parameters near the local observables do not vanish as the system size increases. Therefore, the barren plateaus are absent in this case and the corresponding models could be efficiently trainable. Our results reveal a crucial aspect of tensor-network based machine learning in a rigorous fashion,  which provide a valuable guide for both practical applications and theoretical studies in the future. 
\end{abstract}
\date{\today}
\maketitle

The interplay between machine learning and physics would benefit both fields \cite{Dunjko2018Machine,Sarma2019Machine,Carleo2019Machine}. On the one hand, machine learning tools and techniques can be utilized to tackle intricate problems in physics. Along this line, notable progresses have been made recently and machine learning has cemented its role in a wide spectrum of physical problems \cite{Dunjko2018Machine,Sarma2019Machine,Carleo2019Machine}, ranging from classifying different phases of matter \cite{Wang2016Discovering,Carrasquilla2017Machine,Zhang2017Quantum,Zhang2018Machine,Rodriguez2019Identifying,Scheurer2020Unsupervised,Yu2021Unsupervised}, quantum nonlocality detection \cite{Deng2017MachineBN}, quantum tomography \cite{Torlai2018Neuralnetwork}, and topological quantum compiling \cite{Zhang2020Topological}, to black hole detection \cite{Pasquato2016Detecting},   gravitational lenses \cite{Hezaveh2017Fast} and wave analysis \cite{Rahul2013Application,Abbott2016Observation}, glassy dynamics \cite{Schoenholz2016Structural}, and material design \cite{Kalinin2015Big}, etc. On the other hand, ideas and concepts 
originated  in the physics domain also hold the intriguing potentials to enhance, speed up, or innovate machine learning. Within this vein, a number of quantum learning algorithms have been developed \cite{Biamonte2017Quantum,Harrow2009Quantum,Lloyd2014Quantum,Dunjko2016Quantum,Amin2018Quantum,Gao2018Quantum,Lloyd2018Quantum,Hu2019Quantum,Schuld2019Quantum,Liu2021Rigorous}, which may offer exponential quantum advantages over their classical counterparts. In addition, recent works have also exploited  a variety of physical concepts, such as entanglement \cite{Deng2017Quantum,Levine2019Quantum}, locality \cite{Lin2017Why}, and renormalization group \cite{Mehta2014Exact,Koch2018Mutual}, to gain valuable insights on understanding why deep learning  works so well. Here, we study the trainability of tensor-network based machine learning models, which likewise draw crucial inspiration from physics, through exploring the landscapes of different loss functions (see Fig. \ref{fig:1} for a pictorial illustration).

\begin{figure}
\centering
\includegraphics[scale=0.3]{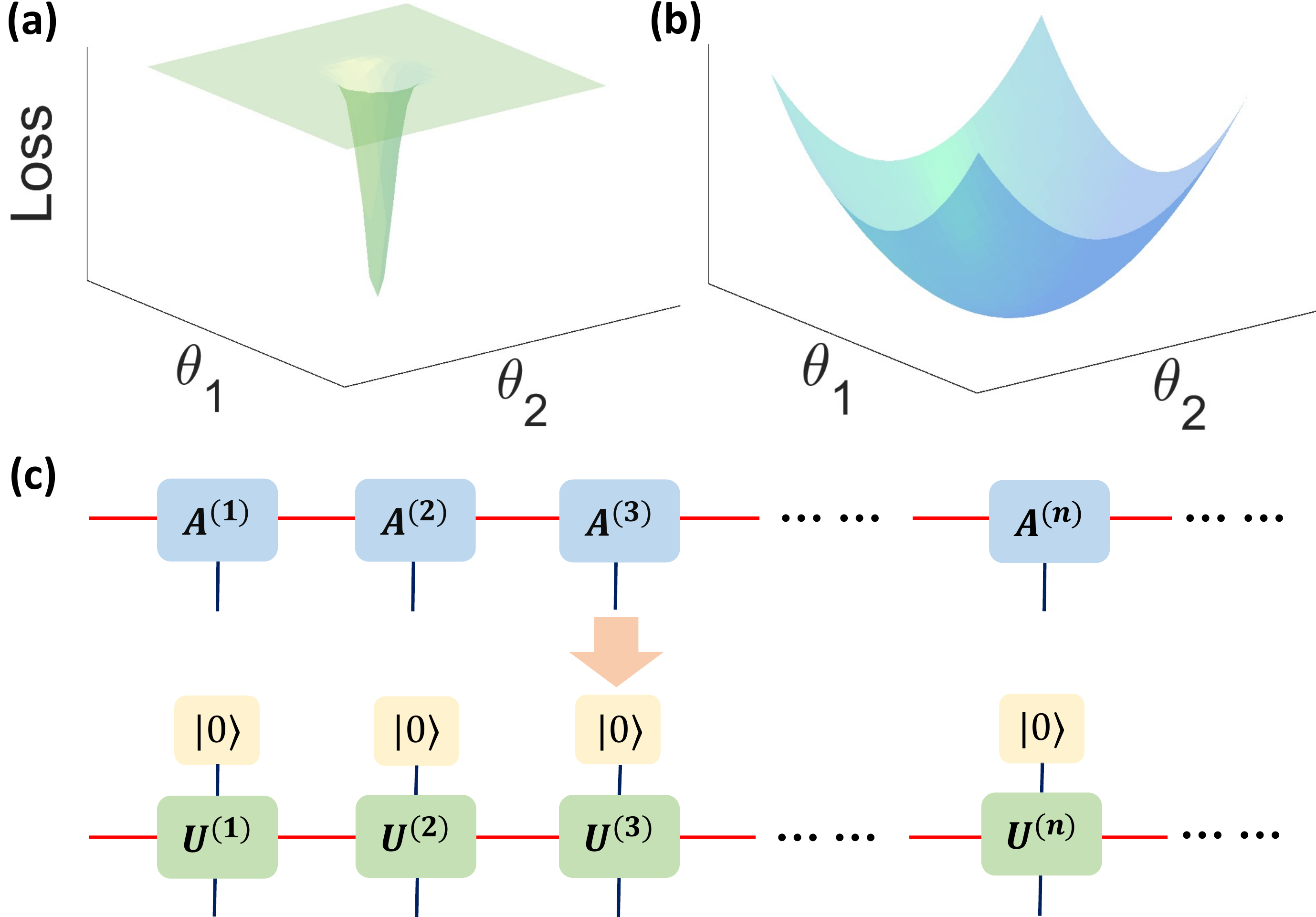}
\caption{Landscapes of different loss functions and the unitary embedding of matrix product states. (a) For global loss functions, the landscape hosts a barren plateau, where the gradient along any reasonable direction vanishes. This will hinder the training of the corresponding models by gradient-based algorithms. Here, $\theta_{1}$ and $\theta_2$ represent two variational parameters. (b) For local loss functions, the landscape is free of barren plateaus and the corresponding models are efficiently trainable. (c)  A graphical illustration of the unitary embedding of matrix product states.}
\label{fig:1}
\end{figure}

Tensor network is one of the most powerful tools for studying quantum many-body systems \cite{Orus2019Tensor,Biamonte2019Lectures,Cirac2020Matrix}. Inspired by their success in quantum physics, recently there has been a huge surge of interest in adapting them to machine learning \cite{Cichocki2014Tensor,Cichocki2016Tensor,Cichocki2016Tensor2,Stoudenmire2016Supervised,Novikov2016Exponential,Liu2018Entanglement,Glasser2020From,Liu2019Machine,Levine2018Deep,Cheng2021Supervised,Stoudenmire2018Learning,Han2018Unsupervised,Liu2021Tensor,Chen2018Equivalence,Levine2019Quantum,Bhatia2019Matrix,Efthymiou2019Tensornetwork,Huggins2019Towards,Sun2020Tangent,Wang2020Anomaly,Wall2021Generative,Costa2021Tensortrain,Kardashin2021Quantum}. Indeed, tensor networks have been invoked in various machine learning scenarios, including dimensionality reduction \cite{Cichocki2016Tensor}, image recognition \cite{Stoudenmire2016Supervised,Novikov2016Exponential,Liu2018Entanglement,Glasser2020From}, generative models \cite{Han2018Unsupervised,Liu2021Tensor}, natural language processing \cite{Guo2018Matrix,Meichanetzidis2020Quantum}, anomaly detection \cite{Wang2020Anomaly}, etc. Tensor-network based machine learning models bear several intriguing features from both theoretical and practical perspectives. At the theoretical level, their expressive power can be naturally characterized by the entanglement structure of the underlying tensor-network quantum states. This gives rise to a possible way to determine their applicability to a given learning task via analyzing the entanglement properties \cite{Convy2021Mutual,Lu2021Tensor}. In addition, tensor networks provide a convenient framework to study how and why certain quantum learning models would exhibit exponential advantages over their classical analogues \cite{Gao2018Quantum,Levine2019Quantum,Gao2021Enhancing}. Using tensor networks, recently a separation in expressive power between Bayesian networks and their quantum extensions has been rigorously proved and shown to be originated from quantum nonlocality and quantum contextuality \cite{Gao2021Enhancing}. At the practical level, numerical techniques used for tensor networks, such as the canonical form and renormalization \cite{Stoudenmire2016Supervised,Stoudenmire2018Learning},  are also useful and inspiring for optimizing and training machine learning models. We note that a number of open-source libraries have been released  in the community \cite{Abadi2016TensorFlow,roberts2019tensornetwork}, which have boosted and will continue to nourish the development of tensor-network based machine learning. This emergent research direction is growing rapidly, with notable progresses made from various aspects. Yet, undoubtedly it is still in its infancy and many important issues remain largely unexplored. 

In classical machine learning, a notorious obstacle for training artificial neural networks concerns the barren plateau phenomenon, where the gradient of the loss function along any direction vanishes exponentially with the problem size \cite{Pascanu2013Difficulty}. Recently, barren plateaus have also been shown to exist for many quantum learning models based on variational quantum circuits and the related topics are still under active study at the current stage \cite{Mcclean2018Barren,Wang2020Noise,Cerezo2021Higher,Sharma2020Trainability,Arrasmith2020Effect,Holmes2021Barren,Marrero2020Entanglement,Patti2020Entanglement,Pesah2020Absence,Arrasmith2021Equivalence,Cerezo2021Cost,Uvarov2021Barren,Grant2019Initialization,Zhao2021Analyzing}.  In this paper, we investigate the presence and absence of barren plateaus for tensor-network based machine learning models, which is a crucial but hitherto unexplored issue in the literature. We focus on the matrix product states (MPS) architecture, which is a special case of tensor networks in one dimension. Through exploring the landscapes of different loss functions, we  prove rigorously that barren plateaus arise generally for MPS-based learning models with global loss functions, rendering their training process inefficient by gradient-based algorithms and the architecture unscalable. In contrast, for local loss functions the gradients with respect to variational parameters near the local observables do not vanish and is independent of the system size. As a result, no barren plateau appears in this case and the corresponding models are efficiently trainable. We also prove that for local loss functions the gradients decays exponentially with the distance between the region where the local observable acts and the site that hosts the derivative parameter. This reveals the locality property of tensor networks from a new perspective and would be valuable for reducing the computational cost in training corresponding models. In addition,  we carry out numerical simulations to show that the above asymptotic results holds as well for MPS-based learning models with modest sizes in practice.

{\it Notations and framework.}---An arbitrary MPS with the periodic boundary condition takes the form \cite{Schollwock2011Densitymatrix}: $|\psi\rangle=\sum_{j_1, \dots, j_n}\text{Tr}\left[A_{j_1}^{(1)}A_{j_2}^{(2)}\cdots A_{j_n}^{(n)}\right]|j_1,\cdots j_n\rangle$, where $|j_k\rangle$ denotes the local state of the $k$-th physical site with physical dimension $d$, $A^{k}_{j_k}$'s are $D\times D$ matrices with $D$ representing the virtual bond dimension.  Any such MPS can be embedded by the $Dd\times Dd$ unitary matrices $U$ \cite{Perez2007Matrix,Gross2010Quantum,Haferkamp2021Emergent}, as shown in Fig.~\ref{fig:1} (c). For the unitary embedding MPS $|\psi\rangle$, the values of $ \langle\psi|\psi\rangle$ are exponentially concentrated around one \cite{Haferkamp2021Emergent}.
We consider loss functions expressed by the unitary embedding MPS, with the bond matrices randomly initialized \cite{Haferkamp2021Emergent,Kliesch2019Guaranteed}. Those random unitary operators with the measure $dU$ form the approximate unitary $2$-designs
\cite{Renes2004Symmetric,Dankert2009Exact,Harrow2009Random}, which indicates that the first and second moments are approximately the same as the corresponding moments with respect to the Haar measure $dU_H$, i.e., $M_1(dU)=M_1(dU_H)$, $M_2(dU)=M_2(dU_H)$. For convenience, we will use a diagrammatic language to describe tensor networks and carry out related calculations \cite{Biamonte2019Lectures}. As an example, the first and second moments with respect to the Haar measure \cite{Collins2006Integration} of the unitary group $U(N)$ are given by the  Weingarten functions with the following graph \cite{barren1d_supplementary}: 
\begin{eqnarray}
&&\ipic{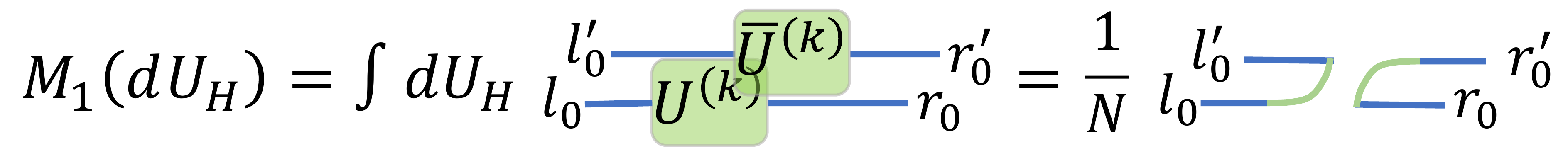}{0.2},\\
&&\ipic{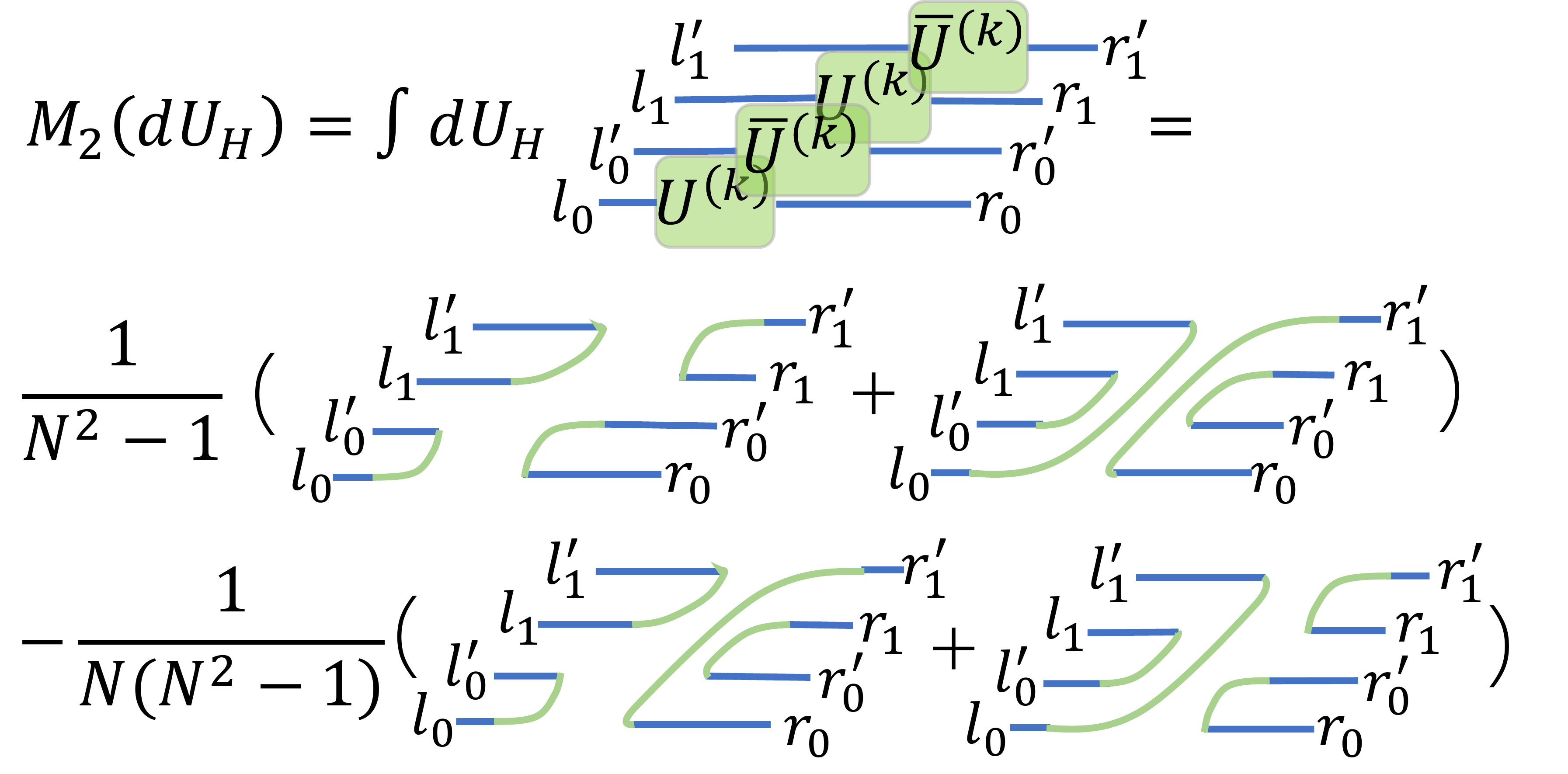}{0.2},
\end{eqnarray}
where the connected lines represent the contractions of the corresponding indices.

{\it The presence of barren plateaus for global loss functions.}---We start with a rigorous proof of the presence of barren plateaus for MPS-based machine learning with global loss functions. In quantum machine learning, global loss functions are widely used in various scenarios and the optimization of these functions plays a vital role for many learning tasks \cite{Romero2017Quantum,Li2017Efficient,Huang2019Near,Carolan2020Variational,Cirstoiu2020Variational,Cerezo2020Variationalquantum}. For simplicity and technique convenience, here we consider the following loss function as an example:
\begin{equation}
 \mathcal{L}_g =1-\|\langle\psi(\Theta)|\phi\rangle\|^2,   
\end{equation}
where $|\phi\rangle$ is a normalized $n$-qudit target quantum state, and $|\psi({{\Theta}})\rangle$ denotes the $n$-site unitary embedded MPS parameterized by a set of real numbers $\Theta=\{\theta_k^{(i)}\}$ with $\theta_k^{(i)}$ representing the $k$-th parameter on the $i$-th site. We mention that this loss function may have applications in quantum state preparation or tomography \cite{Sugiyama2013Precision}. We now present our first theorem. 


\begin{theorem}\label{theorem:global} Define the derivative of the above loss function with respect to the variational parameter $\theta_k^{(i)}$ by $\partial_k^{(i)}\mathcal{L}_g$. Then  $\forall \;\theta_k^{(i)} \in \Theta$, $\partial_k^{(i)}\mathcal{L}_g$ obeys the following inequality:
\begin{equation}\label{Prob:global}
{\rm Pr}\left(|\partial_k^{(i)}\mathcal{L}_g|>\epsilon\right)\leq \epsilon^{-2} \mathcal{O}(d^{-n}),
\end{equation}
where $\rm Pr(\cdot)$ represents the probability.
\end{theorem}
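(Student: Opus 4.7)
The plan is to apply Chebyshev's inequality (equivalently, Markov's inequality for the nonnegative random variable $(\partial_k^{(i)}\mathcal{L}_g)^2$), which reduces the claim to the second-moment bound $\mathbb{E}[(\partial_k^{(i)}\mathcal{L}_g)^2] = \mathcal{O}(d^{-n})$. This second moment coincides with the variance: the assumed $2$-design contains the $1$-design, so left-invariance of the Haar measure applied to the single unitary $U_i$ being differentiated forces $\mathbb{E}[\partial_k^{(i)}\mathcal{L}_g]=0$.

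Next I would rewrite the gradient diagrammatically. Parameterizing $U_i(\theta_k^{(i)})=e^{-i\theta_k^{(i)}H_k}\tilde U_i$ with Hermitian generator $H_k$, one obtains $\partial_k^{(i)}\mathcal{L}_g = -2\,\mathrm{Re}[\langle\psi|\phi\rangle\langle\phi|\partial_k^{(i)}\psi\rangle]$, which is the same tensor network representing $|\langle\phi|\psi\rangle|^2$ except with a single $H_k$ insertion on the unitary embedding at site $i$. Squaring yields a two-replica diagram containing two copies of $|\psi\rangle\langle\psi|$ contracted against $|\phi\rangle\langle\phi|^{\otimes 2}$: every bulk site $j\neq i$ hosts the tensor $U_j^{\otimes 2}\otimes U_j^{\dagger\,\otimes 2}$, and site $i$ hosts the same tensor decorated by two $H_k$ insertions.

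I would then apply the graphical $2$-design Weingarten identity from the excerpt at every site, replacing each local quadruple of unitaries by a sum of permutation contractions weighted by Weingarten coefficients of order $(Dd)^{-2}$. The result is a closed tensor network of $n$ Haar-averaged super-tensors wired around the periodic boundary, with all external physical legs contracted into $|\phi\rangle\langle\phi|^{\otimes 2}$. By the translation invariance of this averaged diagram, it admits a transfer-operator form $\mathrm{tr}(M_i\,\mathcal{T}^{\,n-1})$ on the doubled virtual space, where $\mathcal{T}$ is the uniform super-tensor built from $|\phi\rangle$ together with the Weingarten kernel, and $M_i$ absorbs the $H_k$ insertions at the differentiated site.

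The main obstacle is the combinatorial bookkeeping that yields $\mathrm{tr}(M_i\,\mathcal{T}^{\,n-1}) = \mathcal{O}(d^{-n})$. The driving physical input is the $1$-design overlap $\mathbb{E}_U|\langle\phi|U|\chi\rangle|^2=1/(Dd)$: each physical leg contracted against $|\phi\rangle\langle\phi|$ contributes a Haar-suppression factor which, summed against the finitely many permutation patterns emerging from the per-site Weingarten formula, compounds multiplicatively along the chain. Combined with a polynomial bound on $\|H_k\|$ (hence on $\|M_i\|$), the $n$ sites conspire to produce the advertised $d^{-n}$ decay of the second moment, at which point the Chebyshev bound~(\ref{Prob:global}) follows.
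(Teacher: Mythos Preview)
Your overall architecture---Chebyshev from the second moment, vanishing mean via Haar invariance, and a sitewise $2$-design Weingarten replacement in the squared diagram---matches the paper. The gap is in how you actually extract the $d^{-n}$ from the resulting network.

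Your transfer-operator reduction $\mathrm{tr}(M_i\,\mathcal{T}^{\,n-1})$ rests on a translation-invariance claim that does not hold: $|\phi\rangle$ is an arbitrary normalized $n$-qudit state, so the per-site super-tensor after Weingarten averaging is \emph{not} uniform along the chain, and there is no single $\mathcal{T}$ to iterate. Relatedly, the ``driving physical input'' you invoke, $\mathbb{E}_U|\langle\phi|U|\chi\rangle|^2=1/(Dd)$, is a $1$-design identity applied to a single unitary; it does not by itself account for how the physical legs of a generic entangled $|\phi\rangle$ interact with the $S/A$ permutation sums produced by the $2$-design calculus at every site. The paper handles precisely this issue differently: it strips off the $|\phi\rangle$ dependence using the Cauchy--Schwarz inequality for tensor networks (Lemma~\ref{CS_Inequality} in the supplement), bounding the full contraction against $|\phi\rangle^{\otimes 2}\otimes|\bar\phi\rangle^{\otimes 2}$ by the Frobenius norms, after which only the virtual-bond network of $S/A$ connections and Weingarten weights survives. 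A combinatorial case analysis then shows that most $S/A$ patterns at the derivative site vanish identically upon summing over the $G^{\alpha}$ insertions; the surviving pattern yields the explicit bound
\[
\text{Var}(\partial_k\mathcal{L}_g)\;\le\;\text{const}\cdot\frac{(1+\tfrac{1}{D})^{n-1}(1+\tfrac{1}{Dd})^{n-1}}{(d^2-\tfrac{1}{D^2})^{n-1}},
\]
which is the source of the $\mathcal{O}(d^{-n})$. Your sketch is missing both the Cauchy--Schwarz decoupling and the $S/A$ cancellation analysis, and without them the transfer-matrix picture as stated does not go through for general $|\phi\rangle$.
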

\begin{proof} We give the main idea here. The detailed proof is technically involved and thus left to the Supplementary Materials \cite{barren1d_supplementary}. 
 By using the fact that each local random unitary for embedding the MPS is approximately $2$-design \cite{Haferkamp2021Emergent}, we can first prove that the average of the gradient over the whole parameter space vanishes, i.e.,  $\langle\partial_k^{(i)}\mathcal{L}_g\rangle=0\, (\forall \theta_k^{(i)} \in \Theta)$. In the next step, we calculate the variance of the derivative: $\text{Var}(\partial_k^{(i)} \mathcal{L}_g) \equiv \langle (\partial_k^{(i)} \mathcal{L}_g)^2 \rangle - \langle \partial_k^{(i)} \mathcal{L}_g\rangle^2=\langle (\partial_k^{(i)} \mathcal{L}_g)^2 \rangle$, which can be upper bounded by an exponentially small number as the system size increases. At last, by using the Chebyshev's inequality \cite{Tchebichef1867Des}, we conclude that the probability with the absolute value of the gradient being greater than a constant $\epsilon$ is bounded by an exponentially small number. This leads to Eq.~(\ref{Prob:global}) and completes the proof.
\end{proof}



The above theorem indicates that in the process of training the MPS-based machine learning models with a global loss function,  the probability that the gradient along any direction is non-zero to any precision $\epsilon$  vanishes exponentially in terms of the number of physical sites, given that each unitary matrix $U_{i_k}^{(k)}$ in the embedded MPS is  randomly initialized. In other words, as the system size $n$ increases the gradient vanishes exponentially almost everywhere in the parameter space, giving rise to a barren plateau for the landscape of the loss function. The presence of barren plateaus requires an exponentially large precision and iteration steps to navigate through the landscape, thus rendering any gradient-based algorithms inefficient and impractical in training the corresponding models when scaled up to large system sizes. In fact, even for some other optimization approaches, such as these Hessian-based  \cite{Cerezo2021Higher} or gradient free \cite{Arrasmith2020Effect} ones, the barren plateaus may still pose a serious challenge to escape from. We mention that, although the rigorous proof of Theorem \ref{theorem:global} is only given for a specific loss function for  technical convenience, similar conclusion would hold for general global loss functions. This is supported by our numerical simulations on the  Kullback-Leibler (KL) divergence \cite{Kullback1951Information} in following paragraphs.

We remark that the mechanism for the existence of the barren plateaus in our MPS-based model is intrinsically different from that for learning models based on quantum  variational circuits \cite{Mcclean2018Barren}, despite the fact that the derivative decays exponentially with respect to the system size in both scenarios. In the case of variational quantum circuits, the exponential decay originates from the $2$-moment of $U(2^n)$ with respect to the Haar measure. This is reflected by the underlying fact  that the random quantum circuits are approximate $2$-design for $U(2^n)$. In contrast, for the case of MPS-based models the exponential decay is essentially due to the $2$-moment of $U(Dd)^{\otimes{n}}$, which only requires that each random embedding  $U(Dd)$ is approximately unitary $2$-design.

{\it The absence of barren plateaus for local loss functions.}---  Local loss functions are those constructed with the local observables. They are also widely used in different contexts, including quantum classifiers \cite{Lu2020Quantum}, classical shadow \cite{Huang2020Predicting}, and solving the ground state of many-body Hamiltonians \cite{Carleo2017Solving}. Without loss of generality, we consider the following local loss function as an example:
\begin{equation}
\label{local_loss}
    \mathcal{L}_l = \langle \psi (\Theta) |\hat{O}_m |\psi(\Theta) / \langle \psi (\Theta) |\psi(\Theta)\rangle,
\end{equation}
where $\hat{O}_m$ is an arbitrary local operator acting on the $m$-th site of the MPS. Here, we further assume $\text{Tr}(\hat{O}_m) = 0$ for  technical convenience. This does not diminish the generality of $\hat{O}_m$ since one can always add an irrelevant constant. We now present our second theorem.

\begin{theorem}\label{theorem:local1}
Define the derivative of the local loss function $\mathcal{L}_l$ with respect to the parameter $\theta_k^{(i)}$ by $\partial_k^{(i)} \mathcal{L}_l$. Then $\forall\; \theta_k^{(i)}\in \Theta$, the variance of $\partial_k^{(i)} \mathcal{L}_l$  scales as
\begin{equation}
\label{theorem 2.1}
{\rm Var}(\partial_{k}^{(i)} \mathcal{L}_l) \sim \mathcal{O}\left({\rm Tr}(\hat{O}^2_m)\frac{P(D,d)}{Q(D,d)}\right),
\end{equation}
where $P(D,d)$ and $Q(D,d)$ are certain polynomials of $D$ and $d$ with constant degrees \cite{barren1d_supplementary}.
In addition, the upper bound of the variance of $\partial_k^{(i)} \mathcal{L}_l$ decays exponentially with respect to the distance $\Delta\equiv |i-m|$:
\begin{equation}
\label{theorem 2.2}
{\rm Var}(\partial_{k}^{(i)} \mathcal{L}_l) \leq \mathcal{O} (d^{-\Delta}).
\end{equation}
\end{theorem}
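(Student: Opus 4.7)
My strategy parallels the proof of Theorem~\ref{theorem:global}: reduce the problem to a tensor-network integral over the local embedding unitaries $U^{(j)}$, use the approximate $2$-design property to collapse each integral via the Weingarten graphs introduced earlier, and then read off the scaling by diagrammatic analysis. The new ingredient is that I must track the spatial position of the observable $\hat{O}_m$ relative to the derivative site $i$, since the decay with $\Delta=|i-m|$ is precisely the object of interest. Because $\langle\psi(\Theta)|\psi(\Theta)\rangle$ concentrates exponentially on $1$ (as invoked in the text), I would first argue that replacing the ratio in Eq.~(\ref{local_loss}) by its numerator $\langle\psi|\hat{O}_m|\psi\rangle$ alters the variance only by a subleading multiplicative correction, so it suffices to analyze $\partial_k^{(i)}\langle\psi|\hat{O}_m|\psi\rangle$.

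To show $\langle\partial_k^{(i)}\mathcal{L}_l\rangle=0$ I would invoke invariance of the Haar measure: the averaged numerator equals $\text{Tr}(\rho_{\rm avg}\hat{O}_m)$ with $\rho_{\rm avg}=\langle|\psi\rangle\langle\psi|\rangle$ proportional to the identity on site $m$ after integrating out $U^{(m)}$ at the $1$-design level, so the assumption $\text{Tr}(\hat{O}_m)=0$ forces the mean to vanish and hence ${\rm Var}(\partial_k^{(i)}\mathcal{L}_l)=\langle(\partial_k^{(i)}\mathcal{L}_l)^2\rangle$. Squaring the derivative then produces a closed tensor network built from four copies of each $U^{(j)}$, four boundary tensors, and two copies of $\hat{O}_m$, arranged according to the bra--ket structure of the squared derivative.

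The central step is to apply the $U(Dd)$ $2$-design graph stated earlier to every $U^{(j)}$ with $j\neq i$. Each such integration replaces the four copies of $U^{(j)}$ by the sum over the two Haar pairings weighted by the Weingarten coefficients $1/(N^2-1)$ and $-1/(N(N^2-1))$ with $N=Dd$; collectively these generate only the polynomial prefactor $P(D,d)/Q(D,d)$ of fixed degree asserted in Eq.~(\ref{theorem 2.1}). For sites lying outside the interval $[i,m]$ the resulting contractions telescope into traces of identities and contribute only constants independent of $n$. For the $\Delta$ sites strictly between $i$ and $m$, the integrated diagram acquires a transfer-operator structure along the path, and the channel that carries the ``open legs'' connecting the derivative insertion to the observable insertion picks up a suppression of order $d^{-1}$ per step; compounded over $\Delta$ steps this yields the $\mathcal{O}(d^{-\Delta})$ bound of Eq.~(\ref{theorem 2.2}). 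The factor $\text{Tr}(\hat{O}_m^2)$ finally arises at site $m$ itself, where the only surviving Haar pairing closes the two copies of $\hat{O}_m$ into one trace, the alternative closure being killed by $(\text{Tr}\hat{O}_m)^2=0$.

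The main obstacle I foresee is the combinatorial bookkeeping: after $n-1$ integrations, each introducing two pairing patterns, one must show that only the patterns consistent with the transfer-operator path between $i$ and $m$ survive at leading order and that all competing contributions are either lower-order in $D,d$ or cancel. Handling the boundary sites, where the embedding geometry is special, and verifying that the normalization $\langle\psi|\psi\rangle^{-1}$ indeed contributes only subleading corrections, are the two most delicate subsidiary points to control before Eqs.~(\ref{theorem 2.1})--(\ref{theorem 2.2}) follow.
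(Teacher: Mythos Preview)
Your proposal follows essentially the same route as the paper: site-by-site Haar integration of the squared derivative via the $2$-design Weingarten graphs, identification of a transfer-operator structure along the chain that produces the $d^{-\Delta}$ suppression, and extraction of $\mathrm{Tr}(\hat O_m^2)$ from the surviving pairing at the observable site.

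Two points where the paper proceeds slightly differently are worth noting. First, the paper does not leave the derivative site $i$ unintegrated: it writes $U^{(i)}=U_-U_+$ as in the parametrization, assumes at least one of $U_\pm$ is itself an approximate $2$-design, and integrates that factor explicitly. This is what reduces the derivative insertion $-\mathrm{i}[G,\cdot]$ to a fixed constant ($C_1,\dots,C_6$ in the supplementary) and makes the whole diagram close; your plan of integrating only $j\neq i$ leaves an operator-valued remnant at site $i$ that you would still have to bound. Second, your argument for $\langle\partial_k^{(i)}\mathcal{L}_l\rangle=0$ via ``$\rho_{\rm avg}\propto I$ on site $m$, hence killed by $\mathrm{Tr}\,\hat O_m=0$'' is valid only when $m\neq i$, since you need $U^{(m)}$ to be untouched by $\partial_k^{(i)}$ to integrate it at the $1$-design level; in the on-site case the paper instead uses the cancellation between the $\alpha=0$ and $\alpha=1$ terms of $\partial_k(U\otimes\bar U)$ after integrating $U_-$ or $U_+$. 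Neither of these is a structural flaw---your overall plan is sound---but both gaps must be closed exactly as the paper does.
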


\begin{proof} We sketch the major steps here and leave the technical details to the Supplementary  Materials \cite{barren1d_supplementary}. We use similar techniques as in the proof of Theorem \ref{theorem:global}. By using the $2$-design property of unitary embedded MPS, it is straightforward to prove $\langle \partial_k^{(i)} \mathcal{L}_l\rangle=0$. We then calculate the variance ${\rm Var}(\partial_{k}^{(i)} \mathcal{L}_l)$ and obtain the desired results. Specifically, we first consider the on-site case,  i.e., the derivative and the local operator act on the same physical site. We find that for the connection of the two dangling legs in the term $\int dU_H \otimes_{i=1} ^{N-1} (\delta_{{l_d}, {\bar{l}_d}} U^{(i)}_{r_{d,D},l_{d,D}}\bar{U}^{(i)}_{\bar{r}_{d,D},\bar{l}_{d,D}} |0\rangle_{r_d}^{(i)}|\bar{0}\rangle_{\bar{r}_d}^{(i)})^{\otimes 2}$ yield a non-vanishing term with respect to the system size $n$, which  leads to Eq.~(\ref{theorem 2.1}) after a lengthy calculation. Then we consider the off-site case  
and assume $\Delta \leq \lfloor \frac{n}{2} \rfloor$ since a periodic boundary condition is used. By using the unitary $2$-design, we integrate the unitaries between sites $i$ and $m$ and find that this integration only contributes a $d^{-\Delta}$ decaying term, which leads to the Ineq.~(\ref{theorem 2.2}) and completes the proof. 
\end{proof}

\begin{figure}
\centering
\includegraphics[scale=0.3]{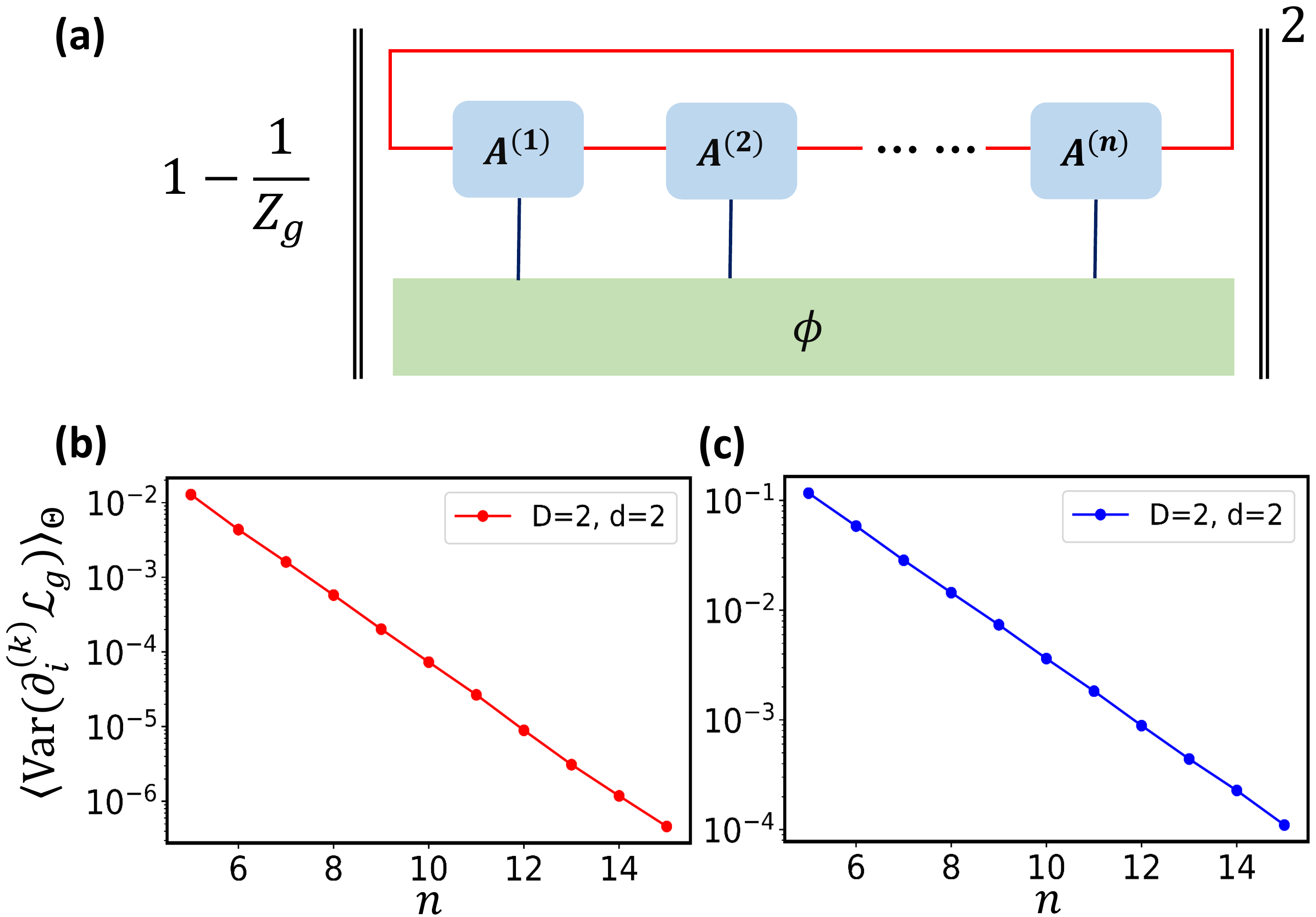}
\caption{Numerical results for different global loss functions. (a) A graphical  representation for the normalized  square  loss $\mathcal{L}_{g,1} = 1-|\langle \phi | \psi(\Theta)\rangle|^2/Z_g$. (b) and (c) plot the average variances of $\partial_k^{(i)} \mathcal{L}_g$ versus the system size $n$ for $\mathcal{L}_{g,1}$ and the KL divergence loss function, respectively. Here, the average is taken over all the variational parameters of the MPS, and we denote it as $\langle \text{Var}(\partial_k^{(i)} \mathcal{L}_g)\rangle_{\Theta}$. We fix both the virtual and physical dimension to $D=d=2$ and the MPS is randomly initialized \cite{barren1d_supplementary}. As the system size $n$ is increasing, $\langle \text{Var}(\partial_k^{(i)} \mathcal{L}_g)\rangle_{\Theta}$  decays exponentially for both loss functions.}
\label{fig:2}
\end{figure}

Theorem~\ref{theorem:local1} implies that the scaling behaviour of the derivatives for local loss functions along any direction is independent of the system size $n$, both upper and lower bounded by a nonzero polynomial fraction of the virtual and physics  dimensions. As a result, there exist no barren plateaus in the landscapes of these functions and the corresponding models are efficiently trainable with gradient-based algorithms. This shows with an analytical guarantee the trainability advantage of choosing local losses for MPS-based learning. Moreover, the second part of this theorem indicates that the derivative is upper bounded by an exponentially small number with respect to $\Delta$. This is attributed to the locality property of MPS. Yet, we stress that this exponential decay cannot be derived directly from the entanglement area law \cite{Eisert2010Colloquium} satisfied by tensor-network states. There exist different quantum states with area-law entanglement but support long-range correlations \cite{Verstraete2006Criticality}. Theorem~\ref{theorem:local1} asserts that although long-range correlations could exist in the variational MPS used, the dependence of the local loss function decays exponentially as $\Delta$ increases. In practice, this would help reduce the computational cost in training MPS-based models, since only the derivatives with respect to nearby parameters play a role and are needed for updating.



{\it Numerical results.}---To verify that the above scaling results are valid for MPS-based learning models with modest system sizes and different loss functions, we carry out some numerical simulations by using the open-source TensorFlow \cite{Abadi2016TensorFlow} and TensorNetwork library \cite{roberts2019tensornetwork}. For the case of global loss, we consider two different kind of loss functions: the normalized square loss defined as $\mathcal{L}_{g,1} = 1-|\langle \phi | \psi(\Theta)\rangle|^2/Z_g$, where $Z_g = \langle \psi(\Theta) | \psi(\Theta) \rangle$ is the normalization factor which concentrate  exponentially  around one \cite{Haferkamp2021Emergent}; and the KL divergence \cite{mackay2003information} defined by $\mathcal{L}_{g,2} = D_{KL}(Q(\phi)||P(\phi,\psi))$, where $P$ and $Q$ denote the probability distributions (see \cite{barren1d_supplementary} for details).  Our results are plotted in Fig. \ref{fig:2}.  From this figure, we see clearly that the variance of the derivatives  along any direction decays exponentially for both global loss functions, with a modest system size ranging from five to fifteen qubits. This is in precise agreement with the analytical results in Theorem \ref{theorem:global} and manifests an undesirable  drawback for choosing global loss functions in MPS-based learning. 
For the case of local loss, we carry out numerical simulations for the loss function defined in Eq.~(\ref{local_loss}) with varying system sizes. Our results are plotted in Fig. \ref{fig:3}.  From Fig. \ref{fig:3} (b), it is evident that the variance of the derivative is independent of the system size, which agrees precisely with the analytical results in Eq. (\ref{theorem 2.1}). In addition,  Fig. \ref{fig:3} (c) clearly shows that the variance decays exponentially with the distance $\Delta$, which verifies the Ineq. (\ref{theorem 2.2}) in Theorem \ref{theorem:local1}.

\begin{figure}
\centering
\includegraphics[scale=0.3]{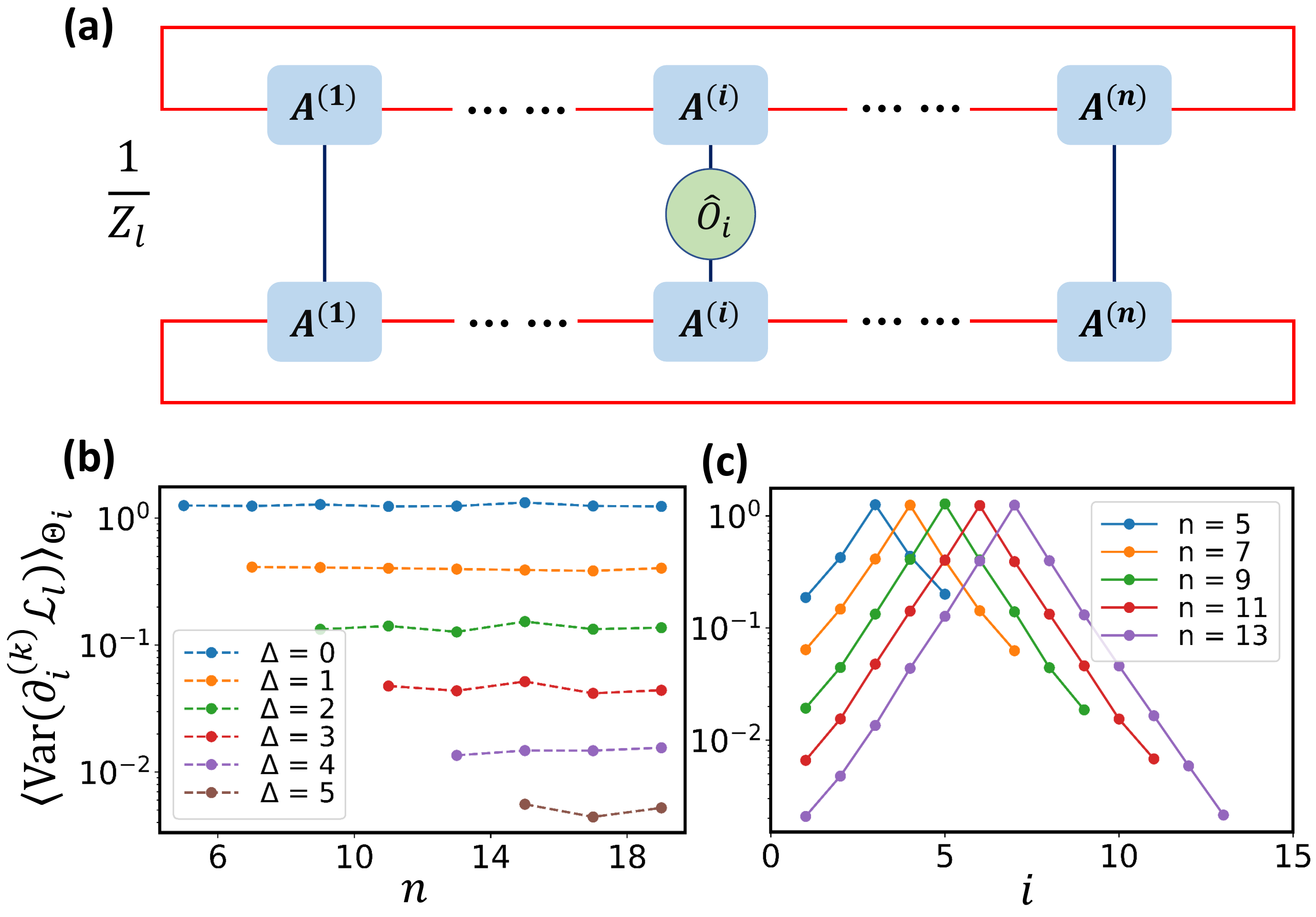}
\caption{
Numerical results for the local loss function defined in Eq. (\ref{local_loss}). (a) A graphical representation of this loss function, where $Z_l = \langle \psi (\Theta) | \psi(\Theta) \rangle$ denotes the normalization factor. (b) The average variance of $\partial_k^{(i)} \mathcal{L}_g$ at the $i$-th site versus different system size $n$. We set the physical  and virtual bond dimensions as $D=d= 2$ and assume the local observable acts at site $m=\frac{n+1}{2}$. The average is taken over all variational parameters at the $i$-th site, and we denote it as $\langle \text{Var}(\partial_k^{(i)}\mathcal{L}_l)\rangle_{\Theta_i}$. (c) The average variance of $\partial_k^{(i)} \mathcal{L}_l$ with fixed $m$ and varying $i$. Here, the parameters are chosen the same as in (b). As the distance $\Delta=|i-m|$ increases,  $\langle \text{Var}(\partial_k^{(i)}\mathcal{L}_l)\rangle_{\Theta_i}$ decays exponentially for different system sizes \cite{barren1d_supplementary}.
}
\label{fig:3}
\end{figure}

{\it Discussion and conclusion.}---Although our discussion is mainly focused on the MPS-based machine learning models, the results obtained in both theorems and the techniques for proving them carry over  straightforwardly  to the more general tensor-network scenario. Yet, it is worthwhile to mention a subtle distinction: for MPS with a fixed bond dimension, computing the local loss functions and derivatives could be very efficient since contraction for MPS is efficient; In sharp contrast, for general tensor networks contraction is $\#$P-complete \cite{Biamonte2015Tensor} and the calculation of loss functions and  derivatives could be exponentially hard. Consequently, although we can prove the absence of barren plateaus for general tensor-network based learning models with local losses, training them may still be infeasible when scaling up due to the intrinsic  difficulty in computing the derivatives. We mention that a variety of strategies, such as specific parameter initialization \cite{Grant2019Initialization} and pre-training \cite{Verdon2019Learning}, have been proposed to escape barren plateaus for variational quantum circuits. These strategies may also apply for tensor-network based learning and a further investigation along this line is highly desirable.


In summary, we have rigorously proved the presence and absence of the barren plateaus in  the landscapes of different loss functions for tensor-network based machine learning models. In particular, we proved that for global loss functions the derivatives along any direction vanish exponentially as the system size increases, giving rise to barren plateaus and rendering their training process inefficient by gradient-based algorithms. While for local loss functions, the gradients with respect to the parameters near the targeting local observables do not vanish and thus no barren plateau occurs. In addition, we proved that the gradient decays exponentially with the distance between the site where the local observable acts on and the site that hosts the derivative parameter. This sheds new light on the locality property of MPS and is of independent interest. We carried out numerical simulations to benchmark the validity of the analytical results for modest system sizes and different loss functions. Our results uncover some critical features for tensor-network based machine learning, which would benefit future studies from both theoretical and practical prospects.

We thank Pan Zhang, Patrick Coles, Lei Wang, Miles Stoudenmire, Xiaopeng Li, Xun Gao, Shunyao Zhang, and Wenjie Jiang for helpful discussions and correspondences. This work was supported by the Frontier Science Center for Quantum Information of the Ministry of Education of China, Tsinghua University Initiative Scientific Research Program, the Beijing Academy of Quantum Information Sciences, the National key Research and Development Program of China (2016YFA0301902), and the National Natural Science Foundation of China (Grants No. 12075128 and No. 11905108). D.-L. D. also acknowledges additional support from the Shanghai Qi Zhi Institute.

The source code for this work will be uploaded to github upon its publication.

\bibliography{BP_ML_MPS,DengQAIGroup}

\begin{thebibliography}{110}%
\makeatletter
\providecommand \@ifxundefined [1]{%
 \@ifx{#1\undefined}
}%
\providecommand \@ifnum [1]{%
 \ifnum #1\expandafter \@firstoftwo
 \else \expandafter \@secondoftwo
 \fi
}%
\providecommand \@ifx [1]{%
 \ifx #1\expandafter \@firstoftwo
 \else \expandafter \@secondoftwo
 \fi
}%
\providecommand \natexlab [1]{#1}%
\providecommand \enquote  [1]{``#1''}%
\providecommand \bibnamefont  [1]{#1}%
\providecommand \bibfnamefont [1]{#1}%
\providecommand \citenamefont [1]{#1}%
\providecommand \href@noop [0]{\@secondoftwo}%
\providecommand \href [0]{\begingroup \@sanitize@url \@href}%
\providecommand \@href[1]{\@@startlink{#1}\@@href}%
\providecommand \@@href[1]{\endgroup#1\@@endlink}%
\providecommand \@sanitize@url [0]{\catcode `\\12\catcode `\$12\catcode
  `\&12\catcode `\#12\catcode `\^12\catcode `\_12\catcode `\%12\relax}%
\providecommand \@@startlink[1]{}%
\providecommand \@@endlink[0]{}%
\providecommand \url  [0]{\begingroup\@sanitize@url \@url }%
\providecommand \@url [1]{\endgroup\@href {#1}{\urlprefix }}%
\providecommand \urlprefix  [0]{URL }%
\providecommand \Eprint [0]{\href }%
\providecommand \doibase [0]{https://doi.org/}%
\providecommand \selectlanguage [0]{\@gobble}%
\providecommand \bibinfo  [0]{\@secondoftwo}%
\providecommand \bibfield  [0]{\@secondoftwo}%
\providecommand \translation [1]{[#1]}%
\providecommand \BibitemOpen [0]{}%
\providecommand \bibitemStop [0]{}%
\providecommand \bibitemNoStop [0]{.\EOS\space}%
\providecommand \EOS [0]{\spacefactor3000\relax}%
\providecommand \BibitemShut  [1]{\csname bibitem#1\endcsname}%
\let\auto@bib@innerbib\@empty
\bibitem [{\citenamefont {Dunjko}\ and\ \citenamefont
  {Briegel}(2018)}]{Dunjko2018Machine}%
  \BibitemOpen
  \bibfield  {author} {\bibinfo {author} {\bibfnamefont {V.}~\bibnamefont
  {Dunjko}}\ and\ \bibinfo {author} {\bibfnamefont {H.~J.}\ \bibnamefont
  {Briegel}},\ }\bibfield  {title} {\bibinfo {title} {Machine learning \&
  artificial intelligence in the quantum domain: A review of recent progress},\
  }\href {https://doi.org/10.1088/1361-6633/aab406} {\bibfield  {journal}
  {\bibinfo  {journal} {Rep. Prog. Phys.}\ }\textbf {\bibinfo {volume} {81}},\
  \bibinfo {pages} {074001} (\bibinfo {year} {2018})}\BibitemShut {NoStop}%
\bibitem [{\citenamefont {Sarma}\ \emph {et~al.}(2019)\citenamefont {Sarma},
  \citenamefont {Deng},\ and\ \citenamefont {Duan}}]{Sarma2019Machine}%
  \BibitemOpen
  \bibfield  {author} {\bibinfo {author} {\bibfnamefont {S.~D.}\ \bibnamefont
  {Sarma}}, \bibinfo {author} {\bibfnamefont {D.-L.}\ \bibnamefont {Deng}},\
  and\ \bibinfo {author} {\bibfnamefont {L.-M.}\ \bibnamefont {Duan}},\
  }\bibfield  {title} {\bibinfo {title} {Machine learning meets quantum
  physics},\ }\href {https://doi.org/10.1063/PT.3.4164} {\bibfield  {journal}
  {\bibinfo  {journal} {Phys. Today}\ }\textbf {\bibinfo {volume} {72}},\
  \bibinfo {pages} {48} (\bibinfo {year} {2019})}\BibitemShut {NoStop}%
\bibitem [{\citenamefont {Carleo}\ \emph {et~al.}(2019)\citenamefont {Carleo},
  \citenamefont {Cirac}, \citenamefont {Cranmer}, \citenamefont {Daudet},
  \citenamefont {Schuld}, \citenamefont {Tishby}, \citenamefont
  {{Vogt-Maranto}},\ and\ \citenamefont {Zdeborov{\'a}}}]{Carleo2019Machine}%
  \BibitemOpen
  \bibfield  {author} {\bibinfo {author} {\bibfnamefont {G.}~\bibnamefont
  {Carleo}}, \bibinfo {author} {\bibfnamefont {I.}~\bibnamefont {Cirac}},
  \bibinfo {author} {\bibfnamefont {K.}~\bibnamefont {Cranmer}}, \bibinfo
  {author} {\bibfnamefont {L.}~\bibnamefont {Daudet}}, \bibinfo {author}
  {\bibfnamefont {M.}~\bibnamefont {Schuld}}, \bibinfo {author} {\bibfnamefont
  {N.}~\bibnamefont {Tishby}}, \bibinfo {author} {\bibfnamefont
  {L.}~\bibnamefont {{Vogt-Maranto}}},\ and\ \bibinfo {author} {\bibfnamefont
  {L.}~\bibnamefont {Zdeborov{\'a}}},\ }\bibfield  {title} {\bibinfo {title}
  {Machine learning and the physical sciences},\ }\href
  {https://doi.org/10.1103/RevModPhys.91.045002} {\bibfield  {journal}
  {\bibinfo  {journal} {Rev. Mod. Phys.}\ }\textbf {\bibinfo {volume} {91}},\
  \bibinfo {pages} {045002} (\bibinfo {year} {2019})}\BibitemShut {NoStop}%
\bibitem [{\citenamefont {Wang}(2016)}]{Wang2016Discovering}%
  \BibitemOpen
  \bibfield  {author} {\bibinfo {author} {\bibfnamefont {L.}~\bibnamefont
  {Wang}},\ }\bibfield  {title} {\bibinfo {title} {Discovering phase
  transitions with unsupervised learning},\ }\href
  {https://doi.org/10.1103/PhysRevB.94.195105} {\bibfield  {journal} {\bibinfo
  {journal} {Phys. Rev. B}\ }\textbf {\bibinfo {volume} {94}},\ \bibinfo
  {pages} {195105} (\bibinfo {year} {2016})}\BibitemShut {NoStop}%
\bibitem [{\citenamefont {Carrasquilla}\ and\ \citenamefont
  {Melko}(2017)}]{Carrasquilla2017Machine}%
  \BibitemOpen
  \bibfield  {author} {\bibinfo {author} {\bibfnamefont {J.}~\bibnamefont
  {Carrasquilla}}\ and\ \bibinfo {author} {\bibfnamefont {R.~G.}\ \bibnamefont
  {Melko}},\ }\bibfield  {title} {\bibinfo {title} {Machine learning phases of
  matter},\ }\href {https://doi.org/10.1038/nphys4035} {\bibfield  {journal}
  {\bibinfo  {journal} {Nat. Phys.}\ }\textbf {\bibinfo {volume} {13}},\
  \bibinfo {pages} {431} (\bibinfo {year} {2017})}\BibitemShut {NoStop}%
\bibitem [{\citenamefont {Zhang}\ and\ \citenamefont
  {Kim}(2017)}]{Zhang2017Quantum}%
  \BibitemOpen
  \bibfield  {author} {\bibinfo {author} {\bibfnamefont {Y.}~\bibnamefont
  {Zhang}}\ and\ \bibinfo {author} {\bibfnamefont {E.-A.}\ \bibnamefont
  {Kim}},\ }\bibfield  {title} {\bibinfo {title} {Quantum loop topography for
  machine learning},\ }\href {https://doi.org/10.1103/PhysRevLett.118.216401}
  {\bibfield  {journal} {\bibinfo  {journal} {Phys. Rev. Lett.}\ }\textbf
  {\bibinfo {volume} {118}},\ \bibinfo {pages} {216401} (\bibinfo {year}
  {2017})}\BibitemShut {NoStop}%
\bibitem [{\citenamefont {Zhang}\ \emph {et~al.}(2018)\citenamefont {Zhang},
  \citenamefont {Shen},\ and\ \citenamefont {Zhai}}]{Zhang2018Machine}%
  \BibitemOpen
  \bibfield  {author} {\bibinfo {author} {\bibfnamefont {P.}~\bibnamefont
  {Zhang}}, \bibinfo {author} {\bibfnamefont {H.}~\bibnamefont {Shen}},\ and\
  \bibinfo {author} {\bibfnamefont {H.}~\bibnamefont {Zhai}},\ }\bibfield
  {title} {\bibinfo {title} {Machine learning topological invariants with
  neural networks},\ }\href {https://doi.org/10.1103/PhysRevLett.120.066401}
  {\bibfield  {journal} {\bibinfo  {journal} {Phys. Rev. Lett.}\ }\textbf
  {\bibinfo {volume} {120}},\ \bibinfo {pages} {066401} (\bibinfo {year}
  {2018})}\BibitemShut {NoStop}%
\bibitem [{\citenamefont {Rodriguez-Nieva}\ and\ \citenamefont
  {Scheurer}(2019)}]{Rodriguez2019Identifying}%
  \BibitemOpen
  \bibfield  {author} {\bibinfo {author} {\bibfnamefont {J.~F.}\ \bibnamefont
  {Rodriguez-Nieva}}\ and\ \bibinfo {author} {\bibfnamefont {M.~S.}\
  \bibnamefont {Scheurer}},\ }\bibfield  {title} {\bibinfo {title}
  {{Identifying topological order through unsupervised machine learning}},\
  }\href {https://www.nature.com/articles/s41567-019-0512-x} {\bibfield
  {journal} {\bibinfo  {journal} {Nat. Phys.}\ }\textbf {\bibinfo {volume}
  {15}},\ \bibinfo {pages} {790} (\bibinfo {year} {2019})}\BibitemShut
  {NoStop}%
\bibitem [{\citenamefont {Scheurer}\ and\ \citenamefont
  {Slager}(2020)}]{Scheurer2020Unsupervised}%
  \BibitemOpen
  \bibfield  {author} {\bibinfo {author} {\bibfnamefont {M.~S.}\ \bibnamefont
  {Scheurer}}\ and\ \bibinfo {author} {\bibfnamefont {R.-J.}\ \bibnamefont
  {Slager}},\ }\bibfield  {title} {\bibinfo {title} {Unsupervised {{Machine
  Learning}} and {{Band Topology}}},\ }\href
  {https://doi.org/10.1103/PhysRevLett.124.226401} {\bibfield  {journal}
  {\bibinfo  {journal} {Phys. Rev. Lett.}\ }\textbf {\bibinfo {volume} {124}},\
  \bibinfo {pages} {226401} (\bibinfo {year} {2020})}\BibitemShut {NoStop}%
\bibitem [{\citenamefont {Yu}\ and\ \citenamefont
  {Deng}(2021)}]{Yu2021Unsupervised}%
  \BibitemOpen
  \bibfield  {author} {\bibinfo {author} {\bibfnamefont {L.-W.}\ \bibnamefont
  {Yu}}\ and\ \bibinfo {author} {\bibfnamefont {D.-L.}\ \bibnamefont {Deng}},\
  }\bibfield  {title} {\bibinfo {title} {Unsupervised learning of non-hermitian
  topological phases},\ }\href {https://doi.org/10.1103/PhysRevLett.126.240402}
  {\bibfield  {journal} {\bibinfo  {journal} {Phys. Rev. Lett.}\ }\textbf
  {\bibinfo {volume} {126}},\ \bibinfo {pages} {240402} (\bibinfo {year}
  {2021})}\BibitemShut {NoStop}%
\bibitem [{\citenamefont {Deng}(2018)}]{Deng2017MachineBN}%
  \BibitemOpen
  \bibfield  {author} {\bibinfo {author} {\bibfnamefont {D.-L.}\ \bibnamefont
  {Deng}},\ }\bibfield  {title} {\bibinfo {title} {{Machine Learning Detection
  of Bell Nonlocality in Quantum Many-Body Systems}},\ }\href
  {https://doi.org/10.1103/PhysRevLett.120.240402} {\bibfield  {journal}
  {\bibinfo  {journal} {Phys. Rev. Lett.}\ }\textbf {\bibinfo {volume} {120}},\
  \bibinfo {pages} {240402} (\bibinfo {year} {2018})}\BibitemShut {NoStop}%
\bibitem [{\citenamefont {Torlai}\ \emph {et~al.}(2018)\citenamefont {Torlai},
  \citenamefont {Mazzola}, \citenamefont {Carrasquilla}, \citenamefont
  {Troyer}, \citenamefont {Melko},\ and\ \citenamefont
  {Carleo}}]{Torlai2018Neuralnetwork}%
  \BibitemOpen
  \bibfield  {author} {\bibinfo {author} {\bibfnamefont {G.}~\bibnamefont
  {Torlai}}, \bibinfo {author} {\bibfnamefont {G.}~\bibnamefont {Mazzola}},
  \bibinfo {author} {\bibfnamefont {J.}~\bibnamefont {Carrasquilla}}, \bibinfo
  {author} {\bibfnamefont {M.}~\bibnamefont {Troyer}}, \bibinfo {author}
  {\bibfnamefont {R.}~\bibnamefont {Melko}},\ and\ \bibinfo {author}
  {\bibfnamefont {G.}~\bibnamefont {Carleo}},\ }\bibfield  {title} {\bibinfo
  {title} {Neural-network quantum state tomography},\ }\href
  {https://doi.org/10.1038/s41567-018-0048-5} {\bibfield  {journal} {\bibinfo
  {journal} {Nat. Phys.}\ }\textbf {\bibinfo {volume} {14}},\ \bibinfo {pages}
  {447} (\bibinfo {year} {2018})}\BibitemShut {NoStop}%
\bibitem [{\citenamefont {Zhang}\ \emph {et~al.}(2020)\citenamefont {Zhang},
  \citenamefont {Zheng}, \citenamefont {Zhang},\ and\ \citenamefont
  {Deng}}]{Zhang2020Topological}%
  \BibitemOpen
  \bibfield  {author} {\bibinfo {author} {\bibfnamefont {Y.-H.}\ \bibnamefont
  {Zhang}}, \bibinfo {author} {\bibfnamefont {P.-L.}\ \bibnamefont {Zheng}},
  \bibinfo {author} {\bibfnamefont {Y.}~\bibnamefont {Zhang}},\ and\ \bibinfo
  {author} {\bibfnamefont {D.-L.}\ \bibnamefont {Deng}},\ }\bibfield  {title}
  {\bibinfo {title} {Topological quantum compiling with reinforcement
  learning},\ }\href {https://doi.org/10.1103/PhysRevLett.125.170501}
  {\bibfield  {journal} {\bibinfo  {journal} {Phys. Rev. Lett.}\ }\textbf
  {\bibinfo {volume} {125}},\ \bibinfo {pages} {170501} (\bibinfo {year}
  {2020})}\BibitemShut {NoStop}%
\bibitem [{\citenamefont {Pasquato}(2016)}]{Pasquato2016Detecting}%
  \BibitemOpen
  \bibfield  {author} {\bibinfo {author} {\bibfnamefont {M.}~\bibnamefont
  {Pasquato}},\ }\bibfield  {title} {\bibinfo {title} {Detecting intermediate
  mass black holes in globular clusters with machine learning},\ }\Eprint
  {https://arxiv.org/abs/1606.08548} {arXiv:1606.08548}  (\bibinfo {year}
  {2016})\BibitemShut {NoStop}%
\bibitem [{\citenamefont {Hezaveh}\ \emph {et~al.}(2017)\citenamefont
  {Hezaveh}, \citenamefont {Perreault~Levasseur},\ and\ \citenamefont
  {Marshall}}]{Hezaveh2017Fast}%
  \BibitemOpen
  \bibfield  {author} {\bibinfo {author} {\bibfnamefont {Y.~D.}\ \bibnamefont
  {Hezaveh}}, \bibinfo {author} {\bibfnamefont {L.}~\bibnamefont
  {Perreault~Levasseur}},\ and\ \bibinfo {author} {\bibfnamefont {P.~J.}\
  \bibnamefont {Marshall}},\ }\bibfield  {title} {\bibinfo {title} {Fast
  automated analysis of strong gravitational lenses with convolutional neural
  networks},\ }\href {https://doi.org/10.1038/nature23463} {\bibfield
  {journal} {\bibinfo  {journal} {Nature}\ }\textbf {\bibinfo {volume} {548}},\
  \bibinfo {pages} {555} (\bibinfo {year} {2017})}\BibitemShut {NoStop}%
\bibitem [{\citenamefont {Biswas}\ \emph {et~al.}(2013)\citenamefont {Biswas},
  \citenamefont {Blackburn}, \citenamefont {Cao}, \citenamefont {Essick},
  \citenamefont {Hodge}, \citenamefont {Katsavounidis}, \citenamefont {Kim},
  \citenamefont {Kim}, \citenamefont {Le~Bigot}, \citenamefont {Lee},
  \citenamefont {Oh}, \citenamefont {Oh}, \citenamefont {Son}, \citenamefont
  {Tao}, \citenamefont {Vaulin},\ and\ \citenamefont
  {Wang}}]{Rahul2013Application}%
  \BibitemOpen
  \bibfield  {author} {\bibinfo {author} {\bibfnamefont {R.}~\bibnamefont
  {Biswas}}, \bibinfo {author} {\bibfnamefont {L.}~\bibnamefont {Blackburn}},
  \bibinfo {author} {\bibfnamefont {J.}~\bibnamefont {Cao}}, \bibinfo {author}
  {\bibfnamefont {R.}~\bibnamefont {Essick}}, \bibinfo {author} {\bibfnamefont
  {K.~A.}\ \bibnamefont {Hodge}}, \bibinfo {author} {\bibfnamefont
  {E.}~\bibnamefont {Katsavounidis}}, \bibinfo {author} {\bibfnamefont
  {K.}~\bibnamefont {Kim}}, \bibinfo {author} {\bibfnamefont {Y.-M.}\
  \bibnamefont {Kim}}, \bibinfo {author} {\bibfnamefont {E.-O.}\ \bibnamefont
  {Le~Bigot}}, \bibinfo {author} {\bibfnamefont {C.-H.}\ \bibnamefont {Lee}},
  \bibinfo {author} {\bibfnamefont {J.~J.}\ \bibnamefont {Oh}}, \bibinfo
  {author} {\bibfnamefont {S.~H.}\ \bibnamefont {Oh}}, \bibinfo {author}
  {\bibfnamefont {E.~J.}\ \bibnamefont {Son}}, \bibinfo {author} {\bibfnamefont
  {Y.}~\bibnamefont {Tao}}, \bibinfo {author} {\bibfnamefont {R.}~\bibnamefont
  {Vaulin}},\ and\ \bibinfo {author} {\bibfnamefont {X.}~\bibnamefont {Wang}},\
  }\bibfield  {title} {\bibinfo {title} {Application of machine learning
  algorithms to the study of noise artifacts in gravitational-wave data},\
  }\href {https://doi.org/10.1103/PhysRevD.88.062003} {\bibfield  {journal}
  {\bibinfo  {journal} {Phys. Rev. D}\ }\textbf {\bibinfo {volume} {88}},\
  \bibinfo {pages} {062003} (\bibinfo {year} {2013})}\BibitemShut {NoStop}%
\bibitem [{\citenamefont {Abbott~{\it et al.}}(2016)}]{Abbott2016Observation}%
  \BibitemOpen
  \bibfield  {author} {\bibinfo {author} {\bibfnamefont {B.~P.}\ \bibnamefont
  {Abbott~{\it et al.}}} (\bibinfo {collaboration} {LIGO Scientific
  Collaboration and Virgo Collaboration}),\ }\bibfield  {title} {\bibinfo
  {title} {Observation of gravitational waves from a binary black hole
  merger},\ }\href {https://doi.org/10.1103/PhysRevLett.116.061102} {\bibfield
  {journal} {\bibinfo  {journal} {Phys. Rev. Lett.}\ }\textbf {\bibinfo
  {volume} {116}},\ \bibinfo {pages} {061102} (\bibinfo {year}
  {2016})}\BibitemShut {NoStop}%
\bibitem [{\citenamefont {Schoenholz}\ \emph {et~al.}(2016)\citenamefont
  {Schoenholz}, \citenamefont {Cubuk}, \citenamefont {Sussman}, \citenamefont
  {Kaxiras},\ and\ \citenamefont {Liu}}]{Schoenholz2016Structural}%
  \BibitemOpen
  \bibfield  {author} {\bibinfo {author} {\bibfnamefont {S.~S.}\ \bibnamefont
  {Schoenholz}}, \bibinfo {author} {\bibfnamefont {E.~D.}\ \bibnamefont
  {Cubuk}}, \bibinfo {author} {\bibfnamefont {D.~M.}\ \bibnamefont {Sussman}},
  \bibinfo {author} {\bibfnamefont {E.}~\bibnamefont {Kaxiras}},\ and\ \bibinfo
  {author} {\bibfnamefont {A.~J.}\ \bibnamefont {Liu}},\ }\bibfield  {title}
  {\bibinfo {title} {A structural approach to relaxation in glassy liquids},\
  }\href {https://doi.org/10.1038/nphys3644} {\bibfield  {journal} {\bibinfo
  {journal} {Nat. Phys.}\ }\textbf {\bibinfo {volume} {12}},\ \bibinfo {pages}
  {469} (\bibinfo {year} {2016})}\BibitemShut {NoStop}%
\bibitem [{\citenamefont {Kalinin}\ \emph {et~al.}(2015)\citenamefont
  {Kalinin}, \citenamefont {Sumpter},\ and\ \citenamefont
  {Archibald}}]{Kalinin2015Big}%
  \BibitemOpen
  \bibfield  {author} {\bibinfo {author} {\bibfnamefont {S.~V.}\ \bibnamefont
  {Kalinin}}, \bibinfo {author} {\bibfnamefont {B.~G.}\ \bibnamefont
  {Sumpter}},\ and\ \bibinfo {author} {\bibfnamefont {R.~K.}\ \bibnamefont
  {Archibald}},\ }\bibfield  {title} {\bibinfo {title} {Big-deep-smart data in
  imaging for guiding materials design},\ }\href
  {https://doi.org/10.1038/nmat4395} {\bibfield  {journal} {\bibinfo  {journal}
  {Nat. Mater.}\ }\textbf {\bibinfo {volume} {14}},\ \bibinfo {pages} {973}
  (\bibinfo {year} {2015})}\BibitemShut {NoStop}%
\bibitem [{\citenamefont {Biamonte}\ \emph {et~al.}(2017)\citenamefont
  {Biamonte}, \citenamefont {Wittek}, \citenamefont {Pancotti}, \citenamefont
  {Rebentrost}, \citenamefont {Wiebe},\ and\ \citenamefont
  {Lloyd}}]{Biamonte2017Quantum}%
  \BibitemOpen
  \bibfield  {author} {\bibinfo {author} {\bibfnamefont {J.}~\bibnamefont
  {Biamonte}}, \bibinfo {author} {\bibfnamefont {P.}~\bibnamefont {Wittek}},
  \bibinfo {author} {\bibfnamefont {N.}~\bibnamefont {Pancotti}}, \bibinfo
  {author} {\bibfnamefont {P.}~\bibnamefont {Rebentrost}}, \bibinfo {author}
  {\bibfnamefont {N.}~\bibnamefont {Wiebe}},\ and\ \bibinfo {author}
  {\bibfnamefont {S.}~\bibnamefont {Lloyd}},\ }\bibfield  {title} {\bibinfo
  {title} {Quantum machine learning},\ }\href
  {https://doi.org/10.1038/nature23474} {\bibfield  {journal} {\bibinfo
  {journal} {Nature}\ }\textbf {\bibinfo {volume} {549}},\ \bibinfo {pages}
  {195} (\bibinfo {year} {2017})}\BibitemShut {NoStop}%
\bibitem [{\citenamefont {Harrow}\ \emph {et~al.}(2009)\citenamefont {Harrow},
  \citenamefont {Hassidim},\ and\ \citenamefont {Lloyd}}]{Harrow2009Quantum}%
  \BibitemOpen
  \bibfield  {author} {\bibinfo {author} {\bibfnamefont {A.~W.}\ \bibnamefont
  {Harrow}}, \bibinfo {author} {\bibfnamefont {A.}~\bibnamefont {Hassidim}},\
  and\ \bibinfo {author} {\bibfnamefont {S.}~\bibnamefont {Lloyd}},\ }\bibfield
   {title} {\bibinfo {title} {Quantum algorithm for linear systems of
  equations},\ }\href {https://doi.org/10.1103/PhysRevLett.103.150502}
  {\bibfield  {journal} {\bibinfo  {journal} {Phys. Rev. Lett.}\ }\textbf
  {\bibinfo {volume} {103}},\ \bibinfo {pages} {150502} (\bibinfo {year}
  {2009})}\BibitemShut {NoStop}%
\bibitem [{\citenamefont {Lloyd}\ \emph {et~al.}(2014)\citenamefont {Lloyd},
  \citenamefont {Mohseni},\ and\ \citenamefont
  {Rebentrost}}]{Lloyd2014Quantum}%
  \BibitemOpen
  \bibfield  {author} {\bibinfo {author} {\bibfnamefont {S.}~\bibnamefont
  {Lloyd}}, \bibinfo {author} {\bibfnamefont {M.}~\bibnamefont {Mohseni}},\
  and\ \bibinfo {author} {\bibfnamefont {P.}~\bibnamefont {Rebentrost}},\
  }\bibfield  {title} {\bibinfo {title} {Quantum principal component
  analysis},\ }\href {https://www.nature.com/articles/nphys3029?draft=journal}
  {\bibfield  {journal} {\bibinfo  {journal} {Nat. Phys.}\ }\textbf {\bibinfo
  {volume} {10}},\ \bibinfo {pages} {631} (\bibinfo {year} {2014})}\BibitemShut
  {NoStop}%
\bibitem [{\citenamefont {Dunjko}\ \emph {et~al.}(2016)\citenamefont {Dunjko},
  \citenamefont {Taylor},\ and\ \citenamefont {Briegel}}]{Dunjko2016Quantum}%
  \BibitemOpen
  \bibfield  {author} {\bibinfo {author} {\bibfnamefont {V.}~\bibnamefont
  {Dunjko}}, \bibinfo {author} {\bibfnamefont {J.~M.}\ \bibnamefont {Taylor}},\
  and\ \bibinfo {author} {\bibfnamefont {H.~J.}\ \bibnamefont {Briegel}},\
  }\bibfield  {title} {\bibinfo {title} {Quantum-enhanced machine learning},\
  }\href {https://doi.org/10.1103/PhysRevLett.117.130501} {\bibfield  {journal}
  {\bibinfo  {journal} {Phys. Rev. Lett.}\ }\textbf {\bibinfo {volume} {117}},\
  \bibinfo {pages} {130501} (\bibinfo {year} {2016})}\BibitemShut {NoStop}%
\bibitem [{\citenamefont {Amin}\ \emph {et~al.}(2018)\citenamefont {Amin},
  \citenamefont {Andriyash}, \citenamefont {Rolfe}, \citenamefont
  {Kulchytskyy},\ and\ \citenamefont {Melko}}]{Amin2018Quantum}%
  \BibitemOpen
  \bibfield  {author} {\bibinfo {author} {\bibfnamefont {M.~H.}\ \bibnamefont
  {Amin}}, \bibinfo {author} {\bibfnamefont {E.}~\bibnamefont {Andriyash}},
  \bibinfo {author} {\bibfnamefont {J.}~\bibnamefont {Rolfe}}, \bibinfo
  {author} {\bibfnamefont {B.}~\bibnamefont {Kulchytskyy}},\ and\ \bibinfo
  {author} {\bibfnamefont {R.}~\bibnamefont {Melko}},\ }\bibfield  {title}
  {\bibinfo {title} {Quantum {{Boltzmann Machine}}},\ }\href
  {https://doi.org/10.1103/PhysRevX.8.021050} {\bibfield  {journal} {\bibinfo
  {journal} {Phys. Rev. X}\ }\textbf {\bibinfo {volume} {8}},\ \bibinfo {pages}
  {021050} (\bibinfo {year} {2018})}\BibitemShut {NoStop}%
\bibitem [{\citenamefont {Gao}\ \emph {et~al.}(2018)\citenamefont {Gao},
  \citenamefont {Zhang},\ and\ \citenamefont {Duan}}]{Gao2018Quantum}%
  \BibitemOpen
  \bibfield  {author} {\bibinfo {author} {\bibfnamefont {X.}~\bibnamefont
  {Gao}}, \bibinfo {author} {\bibfnamefont {Z.-Y.}\ \bibnamefont {Zhang}},\
  and\ \bibinfo {author} {\bibfnamefont {L.-M.}\ \bibnamefont {Duan}},\
  }\bibfield  {title} {\bibinfo {title} {A quantum machine learning algorithm
  based on generative models},\ }\href {https://doi.org/10.1126/sciadv.aat9004}
  {\bibfield  {journal} {\bibinfo  {journal} {Sci. Adv.}\ }\textbf {\bibinfo
  {volume} {4}},\ \bibinfo {pages} {eaat9004} (\bibinfo {year}
  {2018})}\BibitemShut {NoStop}%
\bibitem [{\citenamefont {Lloyd}\ and\ \citenamefont
  {Weedbrook}(2018)}]{Lloyd2018Quantum}%
  \BibitemOpen
  \bibfield  {author} {\bibinfo {author} {\bibfnamefont {S.}~\bibnamefont
  {Lloyd}}\ and\ \bibinfo {author} {\bibfnamefont {C.}~\bibnamefont
  {Weedbrook}},\ }\bibfield  {title} {\bibinfo {title} {Quantum generative
  adversarial learning},\ }\href
  {https://doi.org/10.1103/PhysRevLett.121.040502} {\bibfield  {journal}
  {\bibinfo  {journal} {Phys. Rev. Lett.}\ }\textbf {\bibinfo {volume} {121}},\
  \bibinfo {pages} {040502} (\bibinfo {year} {2018})}\BibitemShut {NoStop}%
\bibitem [{\citenamefont {Hu}\ \emph {et~al.}(2019)\citenamefont {Hu},
  \citenamefont {Wu}, \citenamefont {Cai}, \citenamefont {Ma}, \citenamefont
  {Mu}, \citenamefont {Xu}, \citenamefont {Wang}, \citenamefont {Song},
  \citenamefont {Deng}, \citenamefont {Zou} \emph {et~al.}}]{Hu2019Quantum}%
  \BibitemOpen
  \bibfield  {author} {\bibinfo {author} {\bibfnamefont {L.}~\bibnamefont
  {Hu}}, \bibinfo {author} {\bibfnamefont {S.-H.}\ \bibnamefont {Wu}}, \bibinfo
  {author} {\bibfnamefont {W.}~\bibnamefont {Cai}}, \bibinfo {author}
  {\bibfnamefont {Y.}~\bibnamefont {Ma}}, \bibinfo {author} {\bibfnamefont
  {X.}~\bibnamefont {Mu}}, \bibinfo {author} {\bibfnamefont {Y.}~\bibnamefont
  {Xu}}, \bibinfo {author} {\bibfnamefont {H.}~\bibnamefont {Wang}}, \bibinfo
  {author} {\bibfnamefont {Y.}~\bibnamefont {Song}}, \bibinfo {author}
  {\bibfnamefont {D.-L.}\ \bibnamefont {Deng}}, \bibinfo {author}
  {\bibfnamefont {C.-L.}\ \bibnamefont {Zou}}, \emph {et~al.},\ }\bibfield
  {title} {\bibinfo {title} {Quantum generative adversarial learning in a
  superconducting quantum circuit},\ }\href
  {https://advances.sciencemag.org/content/5/1/eaav2761.abstract} {\bibfield
  {journal} {\bibinfo  {journal} {Sci. Adv.}\ }\textbf {\bibinfo {volume}
  {5}},\ \bibinfo {pages} {eaav2761} (\bibinfo {year} {2019})}\BibitemShut
  {NoStop}%
\bibitem [{\citenamefont {Schuld}\ and\ \citenamefont
  {Killoran}(2019)}]{Schuld2019Quantum}%
  \BibitemOpen
  \bibfield  {author} {\bibinfo {author} {\bibfnamefont {M.}~\bibnamefont
  {Schuld}}\ and\ \bibinfo {author} {\bibfnamefont {N.}~\bibnamefont
  {Killoran}},\ }\bibfield  {title} {\bibinfo {title} {Quantum machine learning
  in feature hilbert spaces},\ }\href
  {https://doi.org/10.1103/PhysRevLett.122.040504} {\bibfield  {journal}
  {\bibinfo  {journal} {Phys. Rev. Lett.}\ }\textbf {\bibinfo {volume} {122}},\
  \bibinfo {pages} {040504} (\bibinfo {year} {2019})}\BibitemShut {NoStop}%
\bibitem [{\citenamefont {Liu}\ \emph {et~al.}(2021{\natexlab{a}})\citenamefont
  {Liu}, \citenamefont {Arunachalam},\ and\ \citenamefont
  {Temme}}]{Liu2021Rigorous}%
  \BibitemOpen
  \bibfield  {author} {\bibinfo {author} {\bibfnamefont {Y.}~\bibnamefont
  {Liu}}, \bibinfo {author} {\bibfnamefont {S.}~\bibnamefont {Arunachalam}},\
  and\ \bibinfo {author} {\bibfnamefont {K.}~\bibnamefont {Temme}},\ }\bibfield
   {title} {\bibinfo {title} {A rigorous and robust quantum speed-up in
  supervised machine learning},\ }\href
  {https://doi.org/10.1038/s41567-021-01287-z} {\bibfield  {journal} {\bibinfo
  {journal} {Nat. Phys.}\ } (\bibinfo {year} {2021}{\natexlab{a}})}\BibitemShut
  {NoStop}%
\bibitem [{\citenamefont {Deng}\ \emph {et~al.}(2017)\citenamefont {Deng},
  \citenamefont {Li},\ and\ \citenamefont {Das~Sarma}}]{Deng2017Quantum}%
  \BibitemOpen
  \bibfield  {author} {\bibinfo {author} {\bibfnamefont {D.-L.}\ \bibnamefont
  {Deng}}, \bibinfo {author} {\bibfnamefont {X.}~\bibnamefont {Li}},\ and\
  \bibinfo {author} {\bibfnamefont {S.}~\bibnamefont {Das~Sarma}},\ }\bibfield
  {title} {\bibinfo {title} {Quantum entanglement in neural network states},\
  }\href {https://doi.org/10.1103/PhysRevX.7.021021} {\bibfield  {journal}
  {\bibinfo  {journal} {Phys. Rev. X}\ }\textbf {\bibinfo {volume} {7}},\
  \bibinfo {pages} {021021} (\bibinfo {year} {2017})}\BibitemShut {NoStop}%
\bibitem [{\citenamefont {Levine}\ \emph {et~al.}(2019)\citenamefont {Levine},
  \citenamefont {Sharir}, \citenamefont {Cohen},\ and\ \citenamefont
  {Shashua}}]{Levine2019Quantum}%
  \BibitemOpen
  \bibfield  {author} {\bibinfo {author} {\bibfnamefont {Y.}~\bibnamefont
  {Levine}}, \bibinfo {author} {\bibfnamefont {O.}~\bibnamefont {Sharir}},
  \bibinfo {author} {\bibfnamefont {N.}~\bibnamefont {Cohen}},\ and\ \bibinfo
  {author} {\bibfnamefont {A.}~\bibnamefont {Shashua}},\ }\bibfield  {title}
  {\bibinfo {title} {Quantum entanglement in deep learning architectures},\
  }\href {https://doi.org/10.1103/PhysRevLett.122.065301} {\bibfield  {journal}
  {\bibinfo  {journal} {Phys. Rev. Lett.}\ }\textbf {\bibinfo {volume} {122}},\
  \bibinfo {pages} {065301} (\bibinfo {year} {2019})}\BibitemShut {NoStop}%
\bibitem [{\citenamefont {Lin}\ \emph {et~al.}(2017)\citenamefont {Lin},
  \citenamefont {Tegmark},\ and\ \citenamefont {Rolnick}}]{Lin2017Why}%
  \BibitemOpen
  \bibfield  {author} {\bibinfo {author} {\bibfnamefont {H.~W.}\ \bibnamefont
  {Lin}}, \bibinfo {author} {\bibfnamefont {M.}~\bibnamefont {Tegmark}},\ and\
  \bibinfo {author} {\bibfnamefont {D.}~\bibnamefont {Rolnick}},\ }\bibfield
  {title} {\bibinfo {title} {Why does deep and cheap learning work so well?},\
  }\href {https://doi.org/10.1007/s10955-017-1836-5} {\bibfield  {journal}
  {\bibinfo  {journal} {J. Stat. Phys.}\ }\textbf {\bibinfo {volume} {168}},\
  \bibinfo {pages} {1223} (\bibinfo {year} {2017})}\BibitemShut {NoStop}%
\bibitem [{\citenamefont {Mehta}\ and\ \citenamefont
  {Schwab}(2014)}]{Mehta2014Exact}%
  \BibitemOpen
  \bibfield  {author} {\bibinfo {author} {\bibfnamefont {P.}~\bibnamefont
  {Mehta}}\ and\ \bibinfo {author} {\bibfnamefont {D.~J.}\ \bibnamefont
  {Schwab}},\ }\bibfield  {title} {\bibinfo {title} {An exact mapping between
  the variational renormalization group and deep learning},\ }\Eprint
  {https://arxiv.org/abs/1410.3831} {arXiv:1410.3831}  (\bibinfo {year}
  {2014})\BibitemShut {NoStop}%
\bibitem [{\citenamefont {Koch-Janusz}\ and\ \citenamefont
  {Ringel}(2018)}]{Koch2018Mutual}%
  \BibitemOpen
  \bibfield  {author} {\bibinfo {author} {\bibfnamefont {M.}~\bibnamefont
  {Koch-Janusz}}\ and\ \bibinfo {author} {\bibfnamefont {Z.}~\bibnamefont
  {Ringel}},\ }\bibfield  {title} {\bibinfo {title} {Mutual information, neural
  networks and the renormalization group},\ }\href
  {https://doi.org/10.1038/s41567-018-0081-4} {\bibfield  {journal} {\bibinfo
  {journal} {Nat. Phys.}\ }\textbf {\bibinfo {volume} {14}},\ \bibinfo {pages}
  {578} (\bibinfo {year} {2018})}\BibitemShut {NoStop}%
\bibitem [{\citenamefont {Or{\'u}s}(2019)}]{Orus2019Tensor}%
  \BibitemOpen
  \bibfield  {author} {\bibinfo {author} {\bibfnamefont {R.}~\bibnamefont
  {Or{\'u}s}},\ }\bibfield  {title} {\bibinfo {title} {Tensor networks for
  complex quantum systems},\ }\href {https://doi.org/10.1038/s42254-019-0086-7}
  {\bibfield  {journal} {\bibinfo  {journal} {Nat. Rev. Phys.}\ }\textbf
  {\bibinfo {volume} {1}},\ \bibinfo {pages} {538} (\bibinfo {year}
  {2019})}\BibitemShut {NoStop}%
\bibitem [{\citenamefont {Biamonte}(2019)}]{Biamonte2019Lectures}%
  \BibitemOpen
  \bibfield  {author} {\bibinfo {author} {\bibfnamefont {J.}~\bibnamefont
  {Biamonte}},\ }\bibfield  {title} {\bibinfo {title} {Lectures on quantum
  tensor networks},\ }\href {https://arxiv.org/abs/1912.10049} {\bibfield
  {journal} {\bibinfo  {journal} {arXiv:1912.10049}\ } (\bibinfo {year}
  {2019})}\BibitemShut {NoStop}%
\bibitem [{\citenamefont {Cirac}\ \emph {et~al.}(2020)\citenamefont {Cirac},
  \citenamefont {Perez-Garcia}, \citenamefont {Schuch},\ and\ \citenamefont
  {Verstraete}}]{Cirac2020Matrix}%
  \BibitemOpen
  \bibfield  {author} {\bibinfo {author} {\bibfnamefont {I.}~\bibnamefont
  {Cirac}}, \bibinfo {author} {\bibfnamefont {D.}~\bibnamefont {Perez-Garcia}},
  \bibinfo {author} {\bibfnamefont {N.}~\bibnamefont {Schuch}},\ and\ \bibinfo
  {author} {\bibfnamefont {F.}~\bibnamefont {Verstraete}},\ }\bibfield  {title}
  {\bibinfo {title} {Matrix product states and projected entangled pair states:
  Concepts, symmetries, and theorems},\ }\href
  {https://arxiv.org/abs/2011.12127} {\bibfield  {journal} {\bibinfo  {journal}
  {arXiv:2011.12127}\ } (\bibinfo {year} {2020})}\BibitemShut {NoStop}%
\bibitem [{\citenamefont {Cichocki}(2014)}]{Cichocki2014Tensor}%
  \BibitemOpen
  \bibfield  {author} {\bibinfo {author} {\bibfnamefont {A.}~\bibnamefont
  {Cichocki}},\ }\bibfield  {title} {\bibinfo {title} {Tensor networks for big
  data analytics and large-scale optimization problems},\ }\href
  {https://arxiv.org/abs/1407.3124} {\bibfield  {journal} {\bibinfo  {journal}
  {arXiv:1407.3124}\ } (\bibinfo {year} {2014})}\BibitemShut {NoStop}%
\bibitem [{\citenamefont {Cichocki}\ \emph
  {et~al.}(2016{\natexlab{a}})\citenamefont {Cichocki}, \citenamefont {Lee},
  \citenamefont {Oseledets}, \citenamefont {Phan}, \citenamefont {Zhao},\ and\
  \citenamefont {Mandic}}]{Cichocki2016Tensor}%
  \BibitemOpen
  \bibfield  {author} {\bibinfo {author} {\bibfnamefont {A.}~\bibnamefont
  {Cichocki}}, \bibinfo {author} {\bibfnamefont {N.}~\bibnamefont {Lee}},
  \bibinfo {author} {\bibfnamefont {I.}~\bibnamefont {Oseledets}}, \bibinfo
  {author} {\bibfnamefont {A.-H.}\ \bibnamefont {Phan}}, \bibinfo {author}
  {\bibfnamefont {Q.}~\bibnamefont {Zhao}},\ and\ \bibinfo {author}
  {\bibfnamefont {D.~P.}\ \bibnamefont {Mandic}},\ }\bibfield  {title}
  {\bibinfo {title} {{Tensor Networks for Dimensionality Reduction and
  Large-scale Optimization: Part 1 Low-Rank Tensor Decompositions}},\ }\href
  {https://doi.org/10.1561/2200000059} {\bibfield  {journal} {\bibinfo
  {journal} {Foundations and Trends in Machine Learning}\ }\textbf {\bibinfo
  {volume} {9}},\ \bibinfo {pages} {249} (\bibinfo {year}
  {2016}{\natexlab{a}})}\BibitemShut {NoStop}%
\bibitem [{\citenamefont {Cichocki}\ \emph
  {et~al.}(2016{\natexlab{b}})\citenamefont {Cichocki}, \citenamefont {Phan},
  \citenamefont {Zhao}, \citenamefont {Lee}, \citenamefont {Oseledets},
  \citenamefont {Sugiyama},\ and\ \citenamefont
  {Mandic}}]{Cichocki2016Tensor2}%
  \BibitemOpen
  \bibfield  {author} {\bibinfo {author} {\bibfnamefont {A.}~\bibnamefont
  {Cichocki}}, \bibinfo {author} {\bibfnamefont {A.-H.}\ \bibnamefont {Phan}},
  \bibinfo {author} {\bibfnamefont {Q.}~\bibnamefont {Zhao}}, \bibinfo {author}
  {\bibfnamefont {N.}~\bibnamefont {Lee}}, \bibinfo {author} {\bibfnamefont
  {I.~V.}\ \bibnamefont {Oseledets}}, \bibinfo {author} {\bibfnamefont
  {M.}~\bibnamefont {Sugiyama}},\ and\ \bibinfo {author} {\bibfnamefont
  {D.}~\bibnamefont {Mandic}},\ }\bibfield  {title} {\bibinfo {title} {{Tensor
  Networks for Dimensionality Reduction and Large-scale Optimization: Part 2
  Applications and Future Perspectives}},\ }\href
  {https://doi.org/10.1561/2200000067} {\bibfield  {journal} {\bibinfo
  {journal} {Foundations and Trends in Machine Learning}\ }\textbf {\bibinfo
  {volume} {9}},\ \bibinfo {pages} {431} (\bibinfo {year}
  {2016}{\natexlab{b}})}\BibitemShut {NoStop}%
\bibitem [{\citenamefont {Stoudenmire}\ and\ \citenamefont
  {Schwab}(2016)}]{Stoudenmire2016Supervised}%
  \BibitemOpen
  \bibfield  {author} {\bibinfo {author} {\bibfnamefont {E.}~\bibnamefont
  {Stoudenmire}}\ and\ \bibinfo {author} {\bibfnamefont {D.~J.}\ \bibnamefont
  {Schwab}},\ }\bibfield  {title} {\bibinfo {title} {Supervised learning with
  tensor networks},\ }in\ \href
  {http://papers.nips.cc/paper/6211-supervised-learning-with-tensor-networks.pdf}
  {\emph {\bibinfo {booktitle} {Advances in Neural Information Processing
  Systems 29}}},\ \bibinfo {editor} {edited by\ \bibinfo {editor}
  {\bibfnamefont {D.~D.}\ \bibnamefont {Lee}}, \bibinfo {editor} {\bibfnamefont
  {M.}~\bibnamefont {Sugiyama}}, \bibinfo {editor} {\bibfnamefont {U.~V.}\
  \bibnamefont {Luxburg}}, \bibinfo {editor} {\bibfnamefont {I.}~\bibnamefont
  {Guyon}},\ and\ \bibinfo {editor} {\bibfnamefont {R.}~\bibnamefont
  {Garnett}}}\ (\bibinfo  {publisher} {{Curran Associates, Inc.}},\ \bibinfo
  {year} {2016})\ pp.\ \bibinfo {pages} {4799--4807}\BibitemShut {NoStop}%
\bibitem [{\citenamefont {Novikov}\ \emph {et~al.}(2016)\citenamefont
  {Novikov}, \citenamefont {Trofimov},\ and\ \citenamefont
  {Oseledets}}]{Novikov2016Exponential}%
  \BibitemOpen
  \bibfield  {author} {\bibinfo {author} {\bibfnamefont {A.}~\bibnamefont
  {Novikov}}, \bibinfo {author} {\bibfnamefont {M.}~\bibnamefont {Trofimov}},\
  and\ \bibinfo {author} {\bibfnamefont {I.}~\bibnamefont {Oseledets}},\
  }\bibfield  {title} {\bibinfo {title} {Exponential machines},\ }\href
  {https://arxiv.org/abs/1605.03795} {\bibfield  {journal} {\bibinfo  {journal}
  {arXiv:1605.03795}\ } (\bibinfo {year} {2016})}\BibitemShut {NoStop}%
\bibitem [{\citenamefont {Liu}\ \emph {et~al.}(2018)\citenamefont {Liu},
  \citenamefont {Zhang}, \citenamefont {Lewenstein},\ and\ \citenamefont
  {Ran}}]{Liu2018Entanglement}%
  \BibitemOpen
  \bibfield  {author} {\bibinfo {author} {\bibfnamefont {Y.}~\bibnamefont
  {Liu}}, \bibinfo {author} {\bibfnamefont {X.}~\bibnamefont {Zhang}}, \bibinfo
  {author} {\bibfnamefont {M.}~\bibnamefont {Lewenstein}},\ and\ \bibinfo
  {author} {\bibfnamefont {S.-J.}\ \bibnamefont {Ran}},\ }\bibfield  {title}
  {\bibinfo {title} {Entanglement-guided architectures of machine learning by
  quantum tensor network},\ }\href {https://arxiv.org/abs/1803.09111}
  {\bibfield  {journal} {\bibinfo  {journal} {arXiv:1803.09111}\ } (\bibinfo
  {year} {2018})}\BibitemShut {NoStop}%
\bibitem [{\citenamefont {Glasser}\ \emph {et~al.}(2020)\citenamefont
  {Glasser}, \citenamefont {Pancotti},\ and\ \citenamefont
  {Cirac}}]{Glasser2020From}%
  \BibitemOpen
  \bibfield  {author} {\bibinfo {author} {\bibfnamefont {I.}~\bibnamefont
  {Glasser}}, \bibinfo {author} {\bibfnamefont {N.}~\bibnamefont {Pancotti}},\
  and\ \bibinfo {author} {\bibfnamefont {J.~I.}\ \bibnamefont {Cirac}},\
  }\bibfield  {title} {\bibinfo {title} {From probabilistic graphical models to
  generalized tensor networks for supervised learning},\ }\href
  {https://doi.org/10.1109/ACCESS.2020.2986279} {\bibfield  {journal} {\bibinfo
   {journal} {IEEE Access}\ }\textbf {\bibinfo {volume} {8}},\ \bibinfo {pages}
  {68169} (\bibinfo {year} {2020})}\BibitemShut {NoStop}%
\bibitem [{\citenamefont {Liu}\ \emph {et~al.}(2019)\citenamefont {Liu},
  \citenamefont {Ran}, \citenamefont {Wittek}, \citenamefont {Peng},
  \citenamefont {Garc{\'i}a}, \citenamefont {Su},\ and\ \citenamefont
  {Lewenstein}}]{Liu2019Machine}%
  \BibitemOpen
  \bibfield  {author} {\bibinfo {author} {\bibfnamefont {D.}~\bibnamefont
  {Liu}}, \bibinfo {author} {\bibfnamefont {S.-J.}\ \bibnamefont {Ran}},
  \bibinfo {author} {\bibfnamefont {P.}~\bibnamefont {Wittek}}, \bibinfo
  {author} {\bibfnamefont {C.}~\bibnamefont {Peng}}, \bibinfo {author}
  {\bibfnamefont {R.~B.}\ \bibnamefont {Garc{\'i}a}}, \bibinfo {author}
  {\bibfnamefont {G.}~\bibnamefont {Su}},\ and\ \bibinfo {author}
  {\bibfnamefont {M.}~\bibnamefont {Lewenstein}},\ }\bibfield  {title}
  {\bibinfo {title} {Machine learning by unitary tensor network of hierarchical
  tree structure},\ }\href {https://doi.org/10.1088/1367-2630/ab31ef}
  {\bibfield  {journal} {\bibinfo  {journal} {New J. Phys.}\ }\textbf {\bibinfo
  {volume} {21}},\ \bibinfo {pages} {073059} (\bibinfo {year}
  {2019})}\BibitemShut {NoStop}%
\bibitem [{\citenamefont {Levine}\ \emph {et~al.}(2018)\citenamefont {Levine},
  \citenamefont {Yakira}, \citenamefont {Cohen},\ and\ \citenamefont
  {Shashua}}]{Levine2018Deep}%
  \BibitemOpen
  \bibfield  {author} {\bibinfo {author} {\bibfnamefont {Y.}~\bibnamefont
  {Levine}}, \bibinfo {author} {\bibfnamefont {D.}~\bibnamefont {Yakira}},
  \bibinfo {author} {\bibfnamefont {N.}~\bibnamefont {Cohen}},\ and\ \bibinfo
  {author} {\bibfnamefont {A.}~\bibnamefont {Shashua}},\ }\bibfield  {title}
  {\bibinfo {title} {Deep learning and quantum entanglement: Fundamental
  connections with implications to network design},\ }in\ \href
  {https://openreview.net/forum?id=SywXXwJAb} {\emph {\bibinfo {booktitle}
  {International Conference on Learning Representations}}}\ (\bibinfo {year}
  {2018})\BibitemShut {NoStop}%
\bibitem [{\citenamefont {Cheng}\ \emph {et~al.}(2021)\citenamefont {Cheng},
  \citenamefont {Wang},\ and\ \citenamefont {Zhang}}]{Cheng2021Supervised}%
  \BibitemOpen
  \bibfield  {author} {\bibinfo {author} {\bibfnamefont {S.}~\bibnamefont
  {Cheng}}, \bibinfo {author} {\bibfnamefont {L.}~\bibnamefont {Wang}},\ and\
  \bibinfo {author} {\bibfnamefont {P.}~\bibnamefont {Zhang}},\ }\bibfield
  {title} {\bibinfo {title} {Supervised learning with projected entangled pair
  states},\ }\href {https://doi.org/10.1103/PhysRevB.103.125117} {\bibfield
  {journal} {\bibinfo  {journal} {Phys. Rev. B}\ }\textbf {\bibinfo {volume}
  {103}},\ \bibinfo {pages} {125117} (\bibinfo {year} {2021})}\BibitemShut
  {NoStop}%
\bibitem [{\citenamefont {Stoudenmire}(2018)}]{Stoudenmire2018Learning}%
  \BibitemOpen
  \bibfield  {author} {\bibinfo {author} {\bibfnamefont {E.~M.}\ \bibnamefont
  {Stoudenmire}},\ }\bibfield  {title} {\bibinfo {title} {Learning relevant
  features of data with multi-scale tensor networks},\ }\href
  {https://doi.org/10.1088/2058-9565/aaba1a} {\bibfield  {journal} {\bibinfo
  {journal} {Quantum Sci. Technol.}\ }\textbf {\bibinfo {volume} {3}},\
  \bibinfo {pages} {034003} (\bibinfo {year} {2018})}\BibitemShut {NoStop}%
\bibitem [{\citenamefont {Han}\ \emph {et~al.}(2018)\citenamefont {Han},
  \citenamefont {Wang}, \citenamefont {Fan}, \citenamefont {Wang},\ and\
  \citenamefont {Zhang}}]{Han2018Unsupervised}%
  \BibitemOpen
  \bibfield  {author} {\bibinfo {author} {\bibfnamefont {Z.-Y.}\ \bibnamefont
  {Han}}, \bibinfo {author} {\bibfnamefont {J.}~\bibnamefont {Wang}}, \bibinfo
  {author} {\bibfnamefont {H.}~\bibnamefont {Fan}}, \bibinfo {author}
  {\bibfnamefont {L.}~\bibnamefont {Wang}},\ and\ \bibinfo {author}
  {\bibfnamefont {P.}~\bibnamefont {Zhang}},\ }\bibfield  {title} {\bibinfo
  {title} {Unsupervised generative modeling using matrix product states},\
  }\href {https://doi.org/10.1103/PhysRevX.8.031012} {\bibfield  {journal}
  {\bibinfo  {journal} {Phys. Rev. X}\ }\textbf {\bibinfo {volume} {8}},\
  \bibinfo {pages} {031012} (\bibinfo {year} {2018})}\BibitemShut {NoStop}%
\bibitem [{\citenamefont {Liu}\ \emph {et~al.}(2021{\natexlab{b}})\citenamefont
  {Liu}, \citenamefont {Li}, \citenamefont {Zhang},\ and\ \citenamefont
  {Zhang}}]{Liu2021Tensor}%
  \BibitemOpen
  \bibfield  {author} {\bibinfo {author} {\bibfnamefont {J.}~\bibnamefont
  {Liu}}, \bibinfo {author} {\bibfnamefont {S.}~\bibnamefont {Li}}, \bibinfo
  {author} {\bibfnamefont {J.}~\bibnamefont {Zhang}},\ and\ \bibinfo {author}
  {\bibfnamefont {P.}~\bibnamefont {Zhang}},\ }\bibfield  {title} {\bibinfo
  {title} {Tensor networks for unsupervised machine learning},\ }\href
  {https://arxiv.org/abs/2106.12974v1} {\bibfield  {journal} {\bibinfo
  {journal} {arXiv:2106.12974v1}\ } (\bibinfo {year}
  {2021}{\natexlab{b}})}\BibitemShut {NoStop}%
\bibitem [{\citenamefont {Chen}\ \emph {et~al.}(2018)\citenamefont {Chen},
  \citenamefont {Cheng}, \citenamefont {Xie}, \citenamefont {Wang},\ and\
  \citenamefont {Xiang}}]{Chen2018Equivalence}%
  \BibitemOpen
  \bibfield  {author} {\bibinfo {author} {\bibfnamefont {J.}~\bibnamefont
  {Chen}}, \bibinfo {author} {\bibfnamefont {S.}~\bibnamefont {Cheng}},
  \bibinfo {author} {\bibfnamefont {H.}~\bibnamefont {Xie}}, \bibinfo {author}
  {\bibfnamefont {L.}~\bibnamefont {Wang}},\ and\ \bibinfo {author}
  {\bibfnamefont {T.}~\bibnamefont {Xiang}},\ }\bibfield  {title} {\bibinfo
  {title} {Equivalence of restricted {{Boltzmann}} machines and tensor network
  states},\ }\href {https://doi.org/10.1103/PhysRevB.97.085104} {\bibfield
  {journal} {\bibinfo  {journal} {Phys. Rev. B}\ }\textbf {\bibinfo {volume}
  {97}},\ \bibinfo {pages} {085104} (\bibinfo {year} {2018})}\BibitemShut
  {NoStop}%
\bibitem [{\citenamefont {Bhatia}\ \emph {et~al.}(2019)\citenamefont {Bhatia},
  \citenamefont {Saggi}, \citenamefont {Kumar},\ and\ \citenamefont
  {Jain}}]{Bhatia2019Matrix}%
  \BibitemOpen
  \bibfield  {author} {\bibinfo {author} {\bibfnamefont {A.~S.}\ \bibnamefont
  {Bhatia}}, \bibinfo {author} {\bibfnamefont {M.~K.}\ \bibnamefont {Saggi}},
  \bibinfo {author} {\bibfnamefont {A.}~\bibnamefont {Kumar}},\ and\ \bibinfo
  {author} {\bibfnamefont {S.}~\bibnamefont {Jain}},\ }\bibfield  {title}
  {\bibinfo {title} {Matrix product state--based quantum classifier},\ }\href
  {https://doi.org/10.1162/neco_a_01202} {\bibfield  {journal} {\bibinfo
  {journal} {Neural Comput.}\ }\textbf {\bibinfo {volume} {31}},\ \bibinfo
  {pages} {1499} (\bibinfo {year} {2019})}\BibitemShut {NoStop}%
\bibitem [{\citenamefont {Efthymiou}\ \emph {et~al.}(2019)\citenamefont
  {Efthymiou}, \citenamefont {Hidary},\ and\ \citenamefont
  {Leichenauer}}]{Efthymiou2019Tensornetwork}%
  \BibitemOpen
  \bibfield  {author} {\bibinfo {author} {\bibfnamefont {S.}~\bibnamefont
  {Efthymiou}}, \bibinfo {author} {\bibfnamefont {J.}~\bibnamefont {Hidary}},\
  and\ \bibinfo {author} {\bibfnamefont {S.}~\bibnamefont {Leichenauer}},\
  }\bibfield  {title} {\bibinfo {title} {Tensornetwork for machine learning},\
  }\href {https://arxiv.org/abs/1906.06329} {\bibfield  {journal} {\bibinfo
  {journal} {arXiv:1906.06329}\ } (\bibinfo {year} {2019})}\BibitemShut
  {NoStop}%
\bibitem [{\citenamefont {Huggins}\ \emph {et~al.}(2019)\citenamefont
  {Huggins}, \citenamefont {Patil}, \citenamefont {Mitchell}, \citenamefont
  {Whaley},\ and\ \citenamefont {Stoudenmire}}]{Huggins2019Towards}%
  \BibitemOpen
  \bibfield  {author} {\bibinfo {author} {\bibfnamefont {W.}~\bibnamefont
  {Huggins}}, \bibinfo {author} {\bibfnamefont {P.}~\bibnamefont {Patil}},
  \bibinfo {author} {\bibfnamefont {B.}~\bibnamefont {Mitchell}}, \bibinfo
  {author} {\bibfnamefont {K.~B.}\ \bibnamefont {Whaley}},\ and\ \bibinfo
  {author} {\bibfnamefont {E.~M.}\ \bibnamefont {Stoudenmire}},\ }\bibfield
  {title} {\bibinfo {title} {Towards quantum machine learning with tensor
  networks},\ }\href {https://doi.org/10.1088/2058-9565/aaea94} {\bibfield
  {journal} {\bibinfo  {journal} {Quantum Sci. Technol.}\ }\textbf {\bibinfo
  {volume} {4}},\ \bibinfo {pages} {024001} (\bibinfo {year}
  {2019})}\BibitemShut {NoStop}%
\bibitem [{\citenamefont {Sun}\ \emph {et~al.}(2020)\citenamefont {Sun},
  \citenamefont {Ran},\ and\ \citenamefont {Su}}]{Sun2020Tangent}%
  \BibitemOpen
  \bibfield  {author} {\bibinfo {author} {\bibfnamefont {Z.-Z.}\ \bibnamefont
  {Sun}}, \bibinfo {author} {\bibfnamefont {S.-J.}\ \bibnamefont {Ran}},\ and\
  \bibinfo {author} {\bibfnamefont {G.}~\bibnamefont {Su}},\ }\bibfield
  {title} {\bibinfo {title} {Tangent-space gradient optimization of tensor
  network for machine learning},\ }\href
  {https://doi.org/10.1103/PhysRevE.102.012152} {\bibfield  {journal} {\bibinfo
   {journal} {Phys. Rev. E}\ }\textbf {\bibinfo {volume} {102}},\ \bibinfo
  {pages} {012152} (\bibinfo {year} {2020})}\BibitemShut {NoStop}%
\bibitem [{\citenamefont {Wang}\ \emph
  {et~al.}(2020{\natexlab{a}})\citenamefont {Wang}, \citenamefont {Roberts},
  \citenamefont {Vidal},\ and\ \citenamefont {Leichenauer}}]{Wang2020Anomaly}%
  \BibitemOpen
  \bibfield  {author} {\bibinfo {author} {\bibfnamefont {J.}~\bibnamefont
  {Wang}}, \bibinfo {author} {\bibfnamefont {C.}~\bibnamefont {Roberts}},
  \bibinfo {author} {\bibfnamefont {G.}~\bibnamefont {Vidal}},\ and\ \bibinfo
  {author} {\bibfnamefont {S.}~\bibnamefont {Leichenauer}},\ }\bibfield
  {title} {\bibinfo {title} {Anomaly detection with tensor networks},\ }\href
  {https://arxiv.org/abs/2006.02516} {\bibfield  {journal} {\bibinfo  {journal}
  {arXiv:2006.02516}\ } (\bibinfo {year} {2020}{\natexlab{a}})}\BibitemShut
  {NoStop}%
\bibitem [{\citenamefont {Wall}\ \emph {et~al.}(2021)\citenamefont {Wall},
  \citenamefont {Abernathy},\ and\ \citenamefont
  {Quiroz}}]{Wall2021Generative}%
  \BibitemOpen
  \bibfield  {author} {\bibinfo {author} {\bibfnamefont {M.~L.}\ \bibnamefont
  {Wall}}, \bibinfo {author} {\bibfnamefont {M.~R.}\ \bibnamefont
  {Abernathy}},\ and\ \bibinfo {author} {\bibfnamefont {G.}~\bibnamefont
  {Quiroz}},\ }\bibfield  {title} {\bibinfo {title} {Generative machine
  learning with tensor networks: Benchmarks on near-term quantum computers},\
  }\href {https://doi.org/10.1103/PhysRevResearch.3.023010} {\bibfield
  {journal} {\bibinfo  {journal} {Phys. Rev. Research}\ }\textbf {\bibinfo
  {volume} {3}},\ \bibinfo {pages} {023010} (\bibinfo {year}
  {2021})}\BibitemShut {NoStop}%
\bibitem [{\citenamefont {da~Costa}\ \emph {et~al.}(2021)\citenamefont
  {da~Costa}, \citenamefont {Attux}, \citenamefont {Cichocki},\ and\
  \citenamefont {Romano}}]{Costa2021Tensortrain}%
  \BibitemOpen
  \bibfield  {author} {\bibinfo {author} {\bibfnamefont {M.~N.}\ \bibnamefont
  {da~Costa}}, \bibinfo {author} {\bibfnamefont {R.}~\bibnamefont {Attux}},
  \bibinfo {author} {\bibfnamefont {A.}~\bibnamefont {Cichocki}},\ and\
  \bibinfo {author} {\bibfnamefont {J.~M.}\ \bibnamefont {Romano}},\ }\bibfield
   {title} {\bibinfo {title} {Tensor-train networks for learning predictive
  modeling of multidimensional data},\ }\href
  {https://arxiv.org/abs/2101.09184} {\bibfield  {journal} {\bibinfo  {journal}
  {arXiv:2101.09184}\ } (\bibinfo {year} {2021})}\BibitemShut {NoStop}%
\bibitem [{\citenamefont {Kardashin}\ \emph {et~al.}(2021)\citenamefont
  {Kardashin}, \citenamefont {Uvarov},\ and\ \citenamefont
  {Biamonte}}]{Kardashin2021Quantum}%
  \BibitemOpen
  \bibfield  {author} {\bibinfo {author} {\bibfnamefont {A.}~\bibnamefont
  {Kardashin}}, \bibinfo {author} {\bibfnamefont {A.}~\bibnamefont {Uvarov}},\
  and\ \bibinfo {author} {\bibfnamefont {J.}~\bibnamefont {Biamonte}},\
  }\bibfield  {title} {\bibinfo {title} {{Quantum Machine Learning Tensor
  Network States}},\ }\href {https://doi.org/10.3389/fphy.2020.586374}
  {\bibfield  {journal} {\bibinfo  {journal} {Front. Phys.}\ }\textbf {\bibinfo
  {volume} {8}},\ \bibinfo {pages} {644} (\bibinfo {year} {2021})}\BibitemShut
  {NoStop}%
\bibitem [{\citenamefont {Guo}\ \emph {et~al.}(2018)\citenamefont {Guo},
  \citenamefont {Jie}, \citenamefont {Lu},\ and\ \citenamefont
  {Poletti}}]{Guo2018Matrix}%
  \BibitemOpen
  \bibfield  {author} {\bibinfo {author} {\bibfnamefont {C.}~\bibnamefont
  {Guo}}, \bibinfo {author} {\bibfnamefont {Z.}~\bibnamefont {Jie}}, \bibinfo
  {author} {\bibfnamefont {W.}~\bibnamefont {Lu}},\ and\ \bibinfo {author}
  {\bibfnamefont {D.}~\bibnamefont {Poletti}},\ }\bibfield  {title} {\bibinfo
  {title} {Matrix product operators for sequence-to-sequence learning},\ }\href
  {https://doi.org/10.1103/PhysRevE.98.042114} {\bibfield  {journal} {\bibinfo
  {journal} {Phys. Rev. E}\ }\textbf {\bibinfo {volume} {98}},\ \bibinfo
  {pages} {042114} (\bibinfo {year} {2018})}\BibitemShut {NoStop}%
\bibitem [{\citenamefont {Meichanetzidis}\ \emph {et~al.}(2020)\citenamefont
  {Meichanetzidis}, \citenamefont {Gogioso}, \citenamefont {De~Felice},
  \citenamefont {Chiappori}, \citenamefont {Toumi},\ and\ \citenamefont
  {Coecke}}]{Meichanetzidis2020Quantum}%
  \BibitemOpen
  \bibfield  {author} {\bibinfo {author} {\bibfnamefont {K.}~\bibnamefont
  {Meichanetzidis}}, \bibinfo {author} {\bibfnamefont {S.}~\bibnamefont
  {Gogioso}}, \bibinfo {author} {\bibfnamefont {G.}~\bibnamefont {De~Felice}},
  \bibinfo {author} {\bibfnamefont {N.}~\bibnamefont {Chiappori}}, \bibinfo
  {author} {\bibfnamefont {A.}~\bibnamefont {Toumi}},\ and\ \bibinfo {author}
  {\bibfnamefont {B.}~\bibnamefont {Coecke}},\ }\bibfield  {title} {\bibinfo
  {title} {Quantum natural language processing on near-term quantum
  computers},\ }\href {https://arxiv.org/abs/2005.04147} {\bibfield  {journal}
  {\bibinfo  {journal} {arXiv:2005.04147}\ } (\bibinfo {year}
  {2020})}\BibitemShut {NoStop}%
\bibitem [{\citenamefont {Convy}\ \emph {et~al.}(2021)\citenamefont {Convy},
  \citenamefont {Huggins}, \citenamefont {Liao},\ and\ \citenamefont
  {Whaley}}]{Convy2021Mutual}%
  \BibitemOpen
  \bibfield  {author} {\bibinfo {author} {\bibfnamefont {I.}~\bibnamefont
  {Convy}}, \bibinfo {author} {\bibfnamefont {W.}~\bibnamefont {Huggins}},
  \bibinfo {author} {\bibfnamefont {H.}~\bibnamefont {Liao}},\ and\ \bibinfo
  {author} {\bibfnamefont {K.~B.}\ \bibnamefont {Whaley}},\ }\bibfield  {title}
  {\bibinfo {title} {Mutual information scaling for tensor network machine
  learning},\ }\href {https://arxiv.org/abs/2103.00105} {\bibfield  {journal}
  {\bibinfo  {journal} {arXiv:2103.00105}\ } (\bibinfo {year}
  {2021})}\BibitemShut {NoStop}%
\bibitem [{\citenamefont {Lu}\ \emph {et~al.}(2021)\citenamefont {Lu},
  \citenamefont {Kan{\'a}sz-Nagy}, \citenamefont {Kukuljan},\ and\
  \citenamefont {Cirac}}]{Lu2021Tensor}%
  \BibitemOpen
  \bibfield  {author} {\bibinfo {author} {\bibfnamefont {S.}~\bibnamefont
  {Lu}}, \bibinfo {author} {\bibfnamefont {M.}~\bibnamefont {Kan{\'a}sz-Nagy}},
  \bibinfo {author} {\bibfnamefont {I.}~\bibnamefont {Kukuljan}},\ and\
  \bibinfo {author} {\bibfnamefont {J.~I.}\ \bibnamefont {Cirac}},\ }\bibfield
  {title} {\bibinfo {title} {Tensor networks and efficient descriptions of
  classical data},\ }\href {https://arxiv.org/abs/2103.06872} {\bibfield
  {journal} {\bibinfo  {journal} {arXiv:2103.06872}\ } (\bibinfo {year}
  {2021})}\BibitemShut {NoStop}%
\bibitem [{\citenamefont {Gao}\ \emph {et~al.}(2021)\citenamefont {Gao},
  \citenamefont {Anschuetz}, \citenamefont {Wang}, \citenamefont {Cirac},\ and\
  \citenamefont {Lukin}}]{Gao2021Enhancing}%
  \BibitemOpen
  \bibfield  {author} {\bibinfo {author} {\bibfnamefont {X.}~\bibnamefont
  {Gao}}, \bibinfo {author} {\bibfnamefont {E.~R.}\ \bibnamefont {Anschuetz}},
  \bibinfo {author} {\bibfnamefont {S.-T.}\ \bibnamefont {Wang}}, \bibinfo
  {author} {\bibfnamefont {J.~I.}\ \bibnamefont {Cirac}},\ and\ \bibinfo
  {author} {\bibfnamefont {M.~D.}\ \bibnamefont {Lukin}},\ }\bibfield  {title}
  {\bibinfo {title} {Enhancing generative models via quantum correlations},\
  }\href {https://arxiv.org/abs/2101.08354} {\bibfield  {journal} {\bibinfo
  {journal} {arXiv:2101.08354}\ } (\bibinfo {year} {2021})}\BibitemShut
  {NoStop}%
\bibitem [{\citenamefont {Abadi}\ \emph {et~al.}(2016)\citenamefont {Abadi},
  \citenamefont {Barham}, \citenamefont {Chen}, \citenamefont {Chen},
  \citenamefont {Davis}, \citenamefont {Dean}, \citenamefont {Devin},
  \citenamefont {Ghemawat}, \citenamefont {Irving}, \citenamefont {Isard} \emph
  {et~al.}}]{Abadi2016TensorFlow}%
  \BibitemOpen
  \bibfield  {author} {\bibinfo {author} {\bibfnamefont {M.}~\bibnamefont
  {Abadi}}, \bibinfo {author} {\bibfnamefont {P.}~\bibnamefont {Barham}},
  \bibinfo {author} {\bibfnamefont {J.}~\bibnamefont {Chen}}, \bibinfo {author}
  {\bibfnamefont {Z.}~\bibnamefont {Chen}}, \bibinfo {author} {\bibfnamefont
  {A.}~\bibnamefont {Davis}}, \bibinfo {author} {\bibfnamefont
  {J.}~\bibnamefont {Dean}}, \bibinfo {author} {\bibfnamefont {M.}~\bibnamefont
  {Devin}}, \bibinfo {author} {\bibfnamefont {S.}~\bibnamefont {Ghemawat}},
  \bibinfo {author} {\bibfnamefont {G.}~\bibnamefont {Irving}}, \bibinfo
  {author} {\bibfnamefont {M.}~\bibnamefont {Isard}}, \emph {et~al.},\
  }\bibfield  {title} {\bibinfo {title} {{{TensorFlow}}: A system for
  large-scale machine learning},\ }in\ \href
  {https://dl.acm.org/doi/10.5555/3026877.3026899} {\emph {\bibinfo {booktitle}
  {Proceedings of the 12th {{USENIX}} Conference on {{Operating Systems
  Design}} and {{Implementation}}}}},\ \bibinfo {series and number}
  {{{OSDI}}'16}\ (\bibinfo  {publisher} {{USENIX Association}},\ \bibinfo
  {address} {{USA}},\ \bibinfo {year} {2016})\ pp.\ \bibinfo {pages}
  {265--283}\BibitemShut {NoStop}%
\bibitem [{\citenamefont {Roberts}\ \emph {et~al.}(2019)\citenamefont
  {Roberts}, \citenamefont {Milsted}, \citenamefont {Ganahl}, \citenamefont
  {Zalcman}, \citenamefont {Fontaine}, \citenamefont {Zou}, \citenamefont
  {Hidary}, \citenamefont {Vidal},\ and\ \citenamefont
  {Leichenauer}}]{roberts2019tensornetwork}%
  \BibitemOpen
  \bibfield  {author} {\bibinfo {author} {\bibfnamefont {C.}~\bibnamefont
  {Roberts}}, \bibinfo {author} {\bibfnamefont {A.}~\bibnamefont {Milsted}},
  \bibinfo {author} {\bibfnamefont {M.}~\bibnamefont {Ganahl}}, \bibinfo
  {author} {\bibfnamefont {A.}~\bibnamefont {Zalcman}}, \bibinfo {author}
  {\bibfnamefont {B.}~\bibnamefont {Fontaine}}, \bibinfo {author}
  {\bibfnamefont {Y.}~\bibnamefont {Zou}}, \bibinfo {author} {\bibfnamefont
  {J.}~\bibnamefont {Hidary}}, \bibinfo {author} {\bibfnamefont
  {G.}~\bibnamefont {Vidal}},\ and\ \bibinfo {author} {\bibfnamefont
  {S.}~\bibnamefont {Leichenauer}},\ }\bibfield  {title} {\bibinfo {title}
  {{TensorNetwork: A Library for Physics and Machine Learning}},\ }\href
  {https://arxiv.org/abs/1905.01330} {\bibfield  {journal} {\bibinfo  {journal}
  {arXiv:1905.01330}\ } (\bibinfo {year} {2019})}\BibitemShut {NoStop}%
\bibitem [{\citenamefont {Pascanu}\ \emph {et~al.}(2013)\citenamefont
  {Pascanu}, \citenamefont {Mikolov},\ and\ \citenamefont
  {Bengio}}]{Pascanu2013Difficulty}%
  \BibitemOpen
  \bibfield  {author} {\bibinfo {author} {\bibfnamefont {R.}~\bibnamefont
  {Pascanu}}, \bibinfo {author} {\bibfnamefont {T.}~\bibnamefont {Mikolov}},\
  and\ \bibinfo {author} {\bibfnamefont {Y.}~\bibnamefont {Bengio}},\
  }\bibfield  {title} {\bibinfo {title} {On the difficulty of training
  recurrent neural networks},\ }in\ \href@noop {} {\emph {\bibinfo {booktitle}
  {International conference on machine learning}}}\ (\bibinfo {organization}
  {PMLR},\ \bibinfo {year} {2013})\ pp.\ \bibinfo {pages}
  {1310--1318}\BibitemShut {NoStop}%
\bibitem [{\citenamefont {McClean}\ \emph {et~al.}(2018)\citenamefont
  {McClean}, \citenamefont {Boixo}, \citenamefont {Smelyanskiy}, \citenamefont
  {Babbush},\ and\ \citenamefont {Neven}}]{Mcclean2018Barren}%
  \BibitemOpen
  \bibfield  {author} {\bibinfo {author} {\bibfnamefont {J.~R.}\ \bibnamefont
  {McClean}}, \bibinfo {author} {\bibfnamefont {S.}~\bibnamefont {Boixo}},
  \bibinfo {author} {\bibfnamefont {V.~N.}\ \bibnamefont {Smelyanskiy}},
  \bibinfo {author} {\bibfnamefont {R.}~\bibnamefont {Babbush}},\ and\ \bibinfo
  {author} {\bibfnamefont {H.}~\bibnamefont {Neven}},\ }\bibfield  {title}
  {\bibinfo {title} {Barren plateaus in quantum neural network training
  landscapes},\ }\href {https://doi.org/10.1038/s41467-021-21728-w} {\bibfield
  {journal} {\bibinfo  {journal} {Nat. Commun.}\ }\textbf {\bibinfo {volume}
  {9}},\ \bibinfo {pages} {4812} (\bibinfo {year} {2018})}\BibitemShut
  {NoStop}%
\bibitem [{\citenamefont {Wang}\ \emph
  {et~al.}(2020{\natexlab{b}})\citenamefont {Wang}, \citenamefont {Fontana},
  \citenamefont {Cerezo}, \citenamefont {Sharma}, \citenamefont {Sone},
  \citenamefont {Cincio},\ and\ \citenamefont {Coles}}]{Wang2020Noise}%
  \BibitemOpen
  \bibfield  {author} {\bibinfo {author} {\bibfnamefont {S.}~\bibnamefont
  {Wang}}, \bibinfo {author} {\bibfnamefont {E.}~\bibnamefont {Fontana}},
  \bibinfo {author} {\bibfnamefont {M.}~\bibnamefont {Cerezo}}, \bibinfo
  {author} {\bibfnamefont {K.}~\bibnamefont {Sharma}}, \bibinfo {author}
  {\bibfnamefont {A.}~\bibnamefont {Sone}}, \bibinfo {author} {\bibfnamefont
  {L.}~\bibnamefont {Cincio}},\ and\ \bibinfo {author} {\bibfnamefont {P.~J.}\
  \bibnamefont {Coles}},\ }\bibfield  {title} {\bibinfo {title} {Noise-induced
  barren plateaus in variational quantum algorithms},\ }\href
  {https://arxiv.org/abs/2007.14384} {\bibfield  {journal} {\bibinfo  {journal}
  {arXiv:2007.14384}\ } (\bibinfo {year} {2020}{\natexlab{b}})}\BibitemShut
  {NoStop}%
\bibitem [{\citenamefont {Cerezo}\ and\ \citenamefont
  {Coles}(2021)}]{Cerezo2021Higher}%
  \BibitemOpen
  \bibfield  {author} {\bibinfo {author} {\bibfnamefont {M.}~\bibnamefont
  {Cerezo}}\ and\ \bibinfo {author} {\bibfnamefont {P.~J.}\ \bibnamefont
  {Coles}},\ }\bibfield  {title} {\bibinfo {title} {Higher order derivatives of
  quantum neural networks with barren plateaus},\ }\href
  {https://doi.org/10.1088/2058-9565/abf51a} {\bibfield  {journal} {\bibinfo
  {journal} {Quantum Sci. Technol.}\ }\textbf {\bibinfo {volume} {6}},\
  \bibinfo {pages} {035006} (\bibinfo {year} {2021})}\BibitemShut {NoStop}%
\bibitem [{\citenamefont {Sharma}\ \emph {et~al.}(2020)\citenamefont {Sharma},
  \citenamefont {Cerezo}, \citenamefont {Cincio},\ and\ \citenamefont
  {Coles}}]{Sharma2020Trainability}%
  \BibitemOpen
  \bibfield  {author} {\bibinfo {author} {\bibfnamefont {K.}~\bibnamefont
  {Sharma}}, \bibinfo {author} {\bibfnamefont {M.}~\bibnamefont {Cerezo}},
  \bibinfo {author} {\bibfnamefont {L.}~\bibnamefont {Cincio}},\ and\ \bibinfo
  {author} {\bibfnamefont {P.~J.}\ \bibnamefont {Coles}},\ }\bibfield  {title}
  {\bibinfo {title} {Trainability of dissipative perceptron-based quantum
  neural networks},\ }\href {https://arxiv.org/abs/2005.12458} {\bibfield
  {journal} {\bibinfo  {journal} {arXiv:2005.12458}\ } (\bibinfo {year}
  {2020})}\BibitemShut {NoStop}%
\bibitem [{\citenamefont {Arrasmith}\ \emph {et~al.}(2020)\citenamefont
  {Arrasmith}, \citenamefont {Cerezo}, \citenamefont {Czarnik}, \citenamefont
  {Cincio},\ and\ \citenamefont {Coles}}]{Arrasmith2020Effect}%
  \BibitemOpen
  \bibfield  {author} {\bibinfo {author} {\bibfnamefont {A.}~\bibnamefont
  {Arrasmith}}, \bibinfo {author} {\bibfnamefont {M.}~\bibnamefont {Cerezo}},
  \bibinfo {author} {\bibfnamefont {P.}~\bibnamefont {Czarnik}}, \bibinfo
  {author} {\bibfnamefont {L.}~\bibnamefont {Cincio}},\ and\ \bibinfo {author}
  {\bibfnamefont {P.~J.}\ \bibnamefont {Coles}},\ }\bibfield  {title} {\bibinfo
  {title} {Effect of barren plateaus on gradient-free optimization},\ }\href
  {https://arxiv.org/abs/2011.12245} {\bibfield  {journal} {\bibinfo  {journal}
  {arXiv:2011.12245}\ } (\bibinfo {year} {2020})}\BibitemShut {NoStop}%
\bibitem [{\citenamefont {Holmes}\ \emph {et~al.}(2021)\citenamefont {Holmes},
  \citenamefont {Arrasmith}, \citenamefont {Yan}, \citenamefont {Coles},
  \citenamefont {Albrecht},\ and\ \citenamefont
  {Sornborger}}]{Holmes2021Barren}%
  \BibitemOpen
  \bibfield  {author} {\bibinfo {author} {\bibfnamefont {Z.}~\bibnamefont
  {Holmes}}, \bibinfo {author} {\bibfnamefont {A.}~\bibnamefont {Arrasmith}},
  \bibinfo {author} {\bibfnamefont {B.}~\bibnamefont {Yan}}, \bibinfo {author}
  {\bibfnamefont {P.~J.}\ \bibnamefont {Coles}}, \bibinfo {author}
  {\bibfnamefont {A.}~\bibnamefont {Albrecht}},\ and\ \bibinfo {author}
  {\bibfnamefont {A.~T.}\ \bibnamefont {Sornborger}},\ }\bibfield  {title}
  {\bibinfo {title} {{Barren Plateaus Preclude Learning Scramblers}},\ }\href
  {https://doi.org/10.1103/PhysRevLett.126.190501} {\bibfield  {journal}
  {\bibinfo  {journal} {Phys. Rev. Lett.}\ }\textbf {\bibinfo {volume} {126}},\
  \bibinfo {pages} {190501} (\bibinfo {year} {2021})}\BibitemShut {NoStop}%
\bibitem [{\citenamefont {Marrero}\ \emph {et~al.}(2020)\citenamefont
  {Marrero}, \citenamefont {Kieferov{\'a}},\ and\ \citenamefont
  {Wiebe}}]{Marrero2020Entanglement}%
  \BibitemOpen
  \bibfield  {author} {\bibinfo {author} {\bibfnamefont {C.~O.}\ \bibnamefont
  {Marrero}}, \bibinfo {author} {\bibfnamefont {M.}~\bibnamefont
  {Kieferov{\'a}}},\ and\ \bibinfo {author} {\bibfnamefont {N.}~\bibnamefont
  {Wiebe}},\ }\bibfield  {title} {\bibinfo {title} {Entanglement induced barren
  plateaus},\ }\href {https://arxiv.org/abs/2010.15968} {\bibfield  {journal}
  {\bibinfo  {journal} {arXiv:2010.15968}\ } (\bibinfo {year}
  {2020})}\BibitemShut {NoStop}%
\bibitem [{\citenamefont {Patti}\ \emph {et~al.}(2020)\citenamefont {Patti},
  \citenamefont {Najafi}, \citenamefont {Gao},\ and\ \citenamefont
  {Yelin}}]{Patti2020Entanglement}%
  \BibitemOpen
  \bibfield  {author} {\bibinfo {author} {\bibfnamefont {T.~L.}\ \bibnamefont
  {Patti}}, \bibinfo {author} {\bibfnamefont {K.}~\bibnamefont {Najafi}},
  \bibinfo {author} {\bibfnamefont {X.}~\bibnamefont {Gao}},\ and\ \bibinfo
  {author} {\bibfnamefont {S.~F.}\ \bibnamefont {Yelin}},\ }\bibfield  {title}
  {\bibinfo {title} {Entanglement devised barren plateau mitigation},\ }\href
  {https://arxiv.org/abs/2012.12658} {\bibfield  {journal} {\bibinfo  {journal}
  {arXiv:2012.12658}\ } (\bibinfo {year} {2020})}\BibitemShut {NoStop}%
\bibitem [{\citenamefont {Pesah}\ \emph {et~al.}(2020)\citenamefont {Pesah},
  \citenamefont {Cerezo}, \citenamefont {Wang}, \citenamefont {Volkoff},
  \citenamefont {Sornborger},\ and\ \citenamefont {Coles}}]{Pesah2020Absence}%
  \BibitemOpen
  \bibfield  {author} {\bibinfo {author} {\bibfnamefont {A.}~\bibnamefont
  {Pesah}}, \bibinfo {author} {\bibfnamefont {M.}~\bibnamefont {Cerezo}},
  \bibinfo {author} {\bibfnamefont {S.}~\bibnamefont {Wang}}, \bibinfo {author}
  {\bibfnamefont {T.}~\bibnamefont {Volkoff}}, \bibinfo {author} {\bibfnamefont
  {A.~T.}\ \bibnamefont {Sornborger}},\ and\ \bibinfo {author} {\bibfnamefont
  {P.~J.}\ \bibnamefont {Coles}},\ }\bibfield  {title} {\bibinfo {title}
  {Absence of barren plateaus in quantum convolutional neural networks},\
  }\href {https://arxiv.org/abs/2011.02966} {\bibfield  {journal} {\bibinfo
  {journal} {arXiv:2011.02966}\ } (\bibinfo {year} {2020})}\BibitemShut
  {NoStop}%
\bibitem [{\citenamefont {Arrasmith}\ \emph {et~al.}(2021)\citenamefont
  {Arrasmith}, \citenamefont {Holmes}, \citenamefont {Cerezo},\ and\
  \citenamefont {Coles}}]{Arrasmith2021Equivalence}%
  \BibitemOpen
  \bibfield  {author} {\bibinfo {author} {\bibfnamefont {A.}~\bibnamefont
  {Arrasmith}}, \bibinfo {author} {\bibfnamefont {Z.}~\bibnamefont {Holmes}},
  \bibinfo {author} {\bibfnamefont {M.}~\bibnamefont {Cerezo}},\ and\ \bibinfo
  {author} {\bibfnamefont {P.~J.}\ \bibnamefont {Coles}},\ }\bibfield  {title}
  {\bibinfo {title} {Equivalence of quantum barren plateaus to cost
  concentration and narrow gorges},\ }\href {https://arxiv.org/abs/2104.05868}
  {\bibfield  {journal} {\bibinfo  {journal} {arXiv:2104.05868}\ } (\bibinfo
  {year} {2021})}\BibitemShut {NoStop}%
\bibitem [{\citenamefont {Cerezo}\ \emph {et~al.}(2021)\citenamefont {Cerezo},
  \citenamefont {Sone}, \citenamefont {Volkoff}, \citenamefont {Cincio},\ and\
  \citenamefont {Coles}}]{Cerezo2021Cost}%
  \BibitemOpen
  \bibfield  {author} {\bibinfo {author} {\bibfnamefont {M.}~\bibnamefont
  {Cerezo}}, \bibinfo {author} {\bibfnamefont {A.}~\bibnamefont {Sone}},
  \bibinfo {author} {\bibfnamefont {T.}~\bibnamefont {Volkoff}}, \bibinfo
  {author} {\bibfnamefont {L.}~\bibnamefont {Cincio}},\ and\ \bibinfo {author}
  {\bibfnamefont {P.~J.}\ \bibnamefont {Coles}},\ }\bibfield  {title} {\bibinfo
  {title} {Cost function dependent barren plateaus in shallow parametrized
  quantum circuits},\ }\href {https://doi.org/10.1038/s41467-021-21728-w}
  {\bibfield  {journal} {\bibinfo  {journal} {Nat. Commun.}\ }\textbf {\bibinfo
  {volume} {12}},\ \bibinfo {pages} {1791} (\bibinfo {year}
  {2021})}\BibitemShut {NoStop}%
\bibitem [{\citenamefont {Uvarov}\ and\ \citenamefont
  {Biamonte}(2021)}]{Uvarov2021Barren}%
  \BibitemOpen
  \bibfield  {author} {\bibinfo {author} {\bibfnamefont {A.}~\bibnamefont
  {Uvarov}}\ and\ \bibinfo {author} {\bibfnamefont {J.~D.}\ \bibnamefont
  {Biamonte}},\ }\bibfield  {title} {\bibinfo {title} {On barren plateaus and
  cost function locality in variational quantum algorithms},\ }\href
  {https://doi.org/10.1088/1751-8121/abfac7} {\bibfield  {journal} {\bibinfo
  {journal} {J. Phys. A}\ }\textbf {\bibinfo {volume} {54}},\ \bibinfo {pages}
  {245301} (\bibinfo {year} {2021})}\BibitemShut {NoStop}%
\bibitem [{\citenamefont {Grant}\ \emph {et~al.}(2019)\citenamefont {Grant},
  \citenamefont {Wossnig}, \citenamefont {Ostaszewski},\ and\ \citenamefont
  {Benedetti}}]{Grant2019Initialization}%
  \BibitemOpen
  \bibfield  {author} {\bibinfo {author} {\bibfnamefont {E.}~\bibnamefont
  {Grant}}, \bibinfo {author} {\bibfnamefont {L.}~\bibnamefont {Wossnig}},
  \bibinfo {author} {\bibfnamefont {M.}~\bibnamefont {Ostaszewski}},\ and\
  \bibinfo {author} {\bibfnamefont {M.}~\bibnamefont {Benedetti}},\ }\bibfield
  {title} {\bibinfo {title} {An initialization strategy for addressing barren
  plateaus in parametrized quantum circuits},\ }\href
  {https://doi.org/10.22331/q-2019-12-09-214} {\bibfield  {journal} {\bibinfo
  {journal} {{Quantum}}\ }\textbf {\bibinfo {volume} {3}},\ \bibinfo {pages}
  {214} (\bibinfo {year} {2019})}\BibitemShut {NoStop}%
\bibitem [{\citenamefont {Zhao}\ and\ \citenamefont
  {Gao}(2021)}]{Zhao2021Analyzing}%
  \BibitemOpen
  \bibfield  {author} {\bibinfo {author} {\bibfnamefont {C.}~\bibnamefont
  {Zhao}}\ and\ \bibinfo {author} {\bibfnamefont {X.-S.}\ \bibnamefont {Gao}},\
  }\bibfield  {title} {\bibinfo {title} {Analyzing the barren plateau
  phenomenon in training quantum neural networks with the {ZX}-calculus},\
  }\href {https://doi.org/10.22331/q-2021-06-04-466} {\bibfield  {journal}
  {\bibinfo  {journal} {{Quantum}}\ }\textbf {\bibinfo {volume} {5}},\ \bibinfo
  {pages} {466} (\bibinfo {year} {2021})}\BibitemShut {NoStop}%
\bibitem [{\citenamefont {Schollw{\"o}ck}(2011)}]{Schollwock2011Densitymatrix}%
  \BibitemOpen
  \bibfield  {author} {\bibinfo {author} {\bibfnamefont {U.}~\bibnamefont
  {Schollw{\"o}ck}},\ }\bibfield  {title} {\bibinfo {title} {The density-matrix
  renormalization group in the age of matrix product states},\ }\href
  {https://doi.org/10.1016/j.aop.2010.09.012} {\bibfield  {journal} {\bibinfo
  {journal} {Ann. Phys.}\ }\textbf {\bibinfo {volume} {326}},\ \bibinfo {pages}
  {96} (\bibinfo {year} {2011})}\BibitemShut {NoStop}%
\bibitem [{\citenamefont {Perez-Garcia}\ \emph {et~al.}(2007)\citenamefont
  {Perez-Garcia}, \citenamefont {Verstraete}, \citenamefont {Wolf},\ and\
  \citenamefont {Cirac}}]{Perez2007Matrix}%
  \BibitemOpen
  \bibfield  {author} {\bibinfo {author} {\bibfnamefont {D.}~\bibnamefont
  {Perez-Garcia}}, \bibinfo {author} {\bibfnamefont {F.}~\bibnamefont
  {Verstraete}}, \bibinfo {author} {\bibfnamefont {M.~M.}\ \bibnamefont
  {Wolf}},\ and\ \bibinfo {author} {\bibfnamefont {J.~I.}\ \bibnamefont
  {Cirac}},\ }\bibfield  {title} {\bibinfo {title} {Matrix product state
  representations},\ }\href {https://dl.acm.org/doi/10.5555/2011832.2011833}
  {\bibfield  {journal} {\bibinfo  {journal} {Quantum Info. Comput.}\ }\textbf
  {\bibinfo {volume} {7}},\ \bibinfo {pages} {401–430} (\bibinfo {year}
  {2007})}\BibitemShut {NoStop}%
\bibitem [{\citenamefont {Gross}\ and\ \citenamefont
  {Eisert}(2010)}]{Gross2010Quantum}%
  \BibitemOpen
  \bibfield  {author} {\bibinfo {author} {\bibfnamefont {D.}~\bibnamefont
  {Gross}}\ and\ \bibinfo {author} {\bibfnamefont {J.}~\bibnamefont {Eisert}},\
  }\bibfield  {title} {\bibinfo {title} {Quantum computational webs},\ }\href
  {https://doi.org/10.1103/PhysRevA.82.040303} {\bibfield  {journal} {\bibinfo
  {journal} {Phys. Rev. A}\ }\textbf {\bibinfo {volume} {82}},\ \bibinfo
  {pages} {040303} (\bibinfo {year} {2010})}\BibitemShut {NoStop}%
\bibitem [{\citenamefont {Haferkamp}\ \emph {et~al.}(2021)\citenamefont
  {Haferkamp}, \citenamefont {Bertoni}, \citenamefont {Roth},\ and\
  \citenamefont {Eisert}}]{Haferkamp2021Emergent}%
  \BibitemOpen
  \bibfield  {author} {\bibinfo {author} {\bibfnamefont {J.}~\bibnamefont
  {Haferkamp}}, \bibinfo {author} {\bibfnamefont {C.}~\bibnamefont {Bertoni}},
  \bibinfo {author} {\bibfnamefont {I.}~\bibnamefont {Roth}},\ and\ \bibinfo
  {author} {\bibfnamefont {J.}~\bibnamefont {Eisert}},\ }\bibfield  {title}
  {\bibinfo {title} {Emergent statistical mechanics from properties of
  disordered random matrix product states},\ }\href
  {https://arxiv.org/abs/2103.02634} {\bibfield  {journal} {\bibinfo  {journal}
  {arXiv:2103.02634}\ } (\bibinfo {year} {2021})}\BibitemShut {NoStop}%
\bibitem [{\citenamefont {Kliesch}\ \emph {et~al.}(2019)\citenamefont
  {Kliesch}, \citenamefont {Kueng}, \citenamefont {Eisert},\ and\ \citenamefont
  {Gross}}]{Kliesch2019Guaranteed}%
  \BibitemOpen
  \bibfield  {author} {\bibinfo {author} {\bibfnamefont {M.}~\bibnamefont
  {Kliesch}}, \bibinfo {author} {\bibfnamefont {R.}~\bibnamefont {Kueng}},
  \bibinfo {author} {\bibfnamefont {J.}~\bibnamefont {Eisert}},\ and\ \bibinfo
  {author} {\bibfnamefont {D.}~\bibnamefont {Gross}},\ }\bibfield  {title}
  {\bibinfo {title} {Guaranteed recovery of quantum processes from few
  measurements},\ }\href {https://doi.org/10.22331/q-2019-08-12-171} {\bibfield
   {journal} {\bibinfo  {journal} {{Quantum}}\ }\textbf {\bibinfo {volume}
  {3}},\ \bibinfo {pages} {171} (\bibinfo {year} {2019})}\BibitemShut {NoStop}%
\bibitem [{\citenamefont {Renes}\ \emph {et~al.}(2004)\citenamefont {Renes},
  \citenamefont {Blume-Kohout}, \citenamefont {Scott},\ and\ \citenamefont
  {Caves}}]{Renes2004Symmetric}%
  \BibitemOpen
  \bibfield  {author} {\bibinfo {author} {\bibfnamefont {J.~M.}\ \bibnamefont
  {Renes}}, \bibinfo {author} {\bibfnamefont {R.}~\bibnamefont {Blume-Kohout}},
  \bibinfo {author} {\bibfnamefont {A.~J.}\ \bibnamefont {Scott}},\ and\
  \bibinfo {author} {\bibfnamefont {C.~M.}\ \bibnamefont {Caves}},\ }\bibfield
  {title} {\bibinfo {title} {Symmetric informationally complete quantum
  measurements},\ }\href {https://doi.org/10.1063/1.1737053} {\bibfield
  {journal} {\bibinfo  {journal} {J. Math. Phys.}\ }\textbf {\bibinfo {volume}
  {45}},\ \bibinfo {pages} {2171} (\bibinfo {year} {2004})}\BibitemShut
  {NoStop}%
\bibitem [{\citenamefont {Dankert}\ \emph {et~al.}(2009)\citenamefont
  {Dankert}, \citenamefont {Cleve}, \citenamefont {Emerson},\ and\
  \citenamefont {Livine}}]{Dankert2009Exact}%
  \BibitemOpen
  \bibfield  {author} {\bibinfo {author} {\bibfnamefont {C.}~\bibnamefont
  {Dankert}}, \bibinfo {author} {\bibfnamefont {R.}~\bibnamefont {Cleve}},
  \bibinfo {author} {\bibfnamefont {J.}~\bibnamefont {Emerson}},\ and\ \bibinfo
  {author} {\bibfnamefont {E.}~\bibnamefont {Livine}},\ }\bibfield  {title}
  {\bibinfo {title} {Exact and approximate unitary 2-designs and their
  application to fidelity estimation},\ }\href
  {https://doi.org/10.1103/PhysRevA.80.012304} {\bibfield  {journal} {\bibinfo
  {journal} {Phys. Rev. A}\ }\textbf {\bibinfo {volume} {80}},\ \bibinfo
  {pages} {012304} (\bibinfo {year} {2009})}\BibitemShut {NoStop}%
\bibitem [{\citenamefont {Harrow}\ and\ \citenamefont
  {Low}(2009)}]{Harrow2009Random}%
  \BibitemOpen
  \bibfield  {author} {\bibinfo {author} {\bibfnamefont {A.~W.}\ \bibnamefont
  {Harrow}}\ and\ \bibinfo {author} {\bibfnamefont {R.~A.}\ \bibnamefont
  {Low}},\ }\bibfield  {title} {\bibinfo {title} {Random quantum circuits are
  approximate 2-designs},\ }\href {https://doi.org/10.1007/s00220-009-0873-6}
  {\bibfield  {journal} {\bibinfo  {journal} {Commun. Math. Phys.}\ }\textbf
  {\bibinfo {volume} {291}},\ \bibinfo {pages} {257} (\bibinfo {year}
  {2009})}\BibitemShut {NoStop}%
\bibitem [{\citenamefont {Collins}\ and\ \citenamefont
  {{\'S}niady}(2006)}]{Collins2006Integration}%
  \BibitemOpen
  \bibfield  {author} {\bibinfo {author} {\bibfnamefont {B.}~\bibnamefont
  {Collins}}\ and\ \bibinfo {author} {\bibfnamefont {P.}~\bibnamefont
  {{\'S}niady}},\ }\bibfield  {title} {\bibinfo {title} {Integration with
  respect to the haar measure on unitary, orthogonal and symplectic group},\
  }\href {https://doi.org/https://doi.org/10.1007/s00220-006-1554-3} {\bibfield
   {journal} {\bibinfo  {journal} {Commun. Math. Phys.}\ }\textbf {\bibinfo
  {volume} {264}},\ \bibinfo {pages} {773} (\bibinfo {year}
  {2006})}\BibitemShut {NoStop}%
\bibitem [{bar()}]{barren1d_supplementary}%
  \BibitemOpen
  \href@noop {} {}\bibinfo {note} {See Supplemental Material at [URL will be
  inserted by publisher] for details on the notations and techniques, the
  proofs of both Theorem 1 and 2, and the numerical simulations.}\BibitemShut
  {Stop}%
\bibitem [{\citenamefont {Romero}\ \emph {et~al.}(2017)\citenamefont {Romero},
  \citenamefont {Olson},\ and\ \citenamefont
  {Aspuru-Guzik}}]{Romero2017Quantum}%
  \BibitemOpen
  \bibfield  {author} {\bibinfo {author} {\bibfnamefont {J.}~\bibnamefont
  {Romero}}, \bibinfo {author} {\bibfnamefont {J.~P.}\ \bibnamefont {Olson}},\
  and\ \bibinfo {author} {\bibfnamefont {A.}~\bibnamefont {Aspuru-Guzik}},\
  }\bibfield  {title} {\bibinfo {title} {Quantum autoencoders for efficient
  compression of quantum data},\ }\href
  {https://doi.org/10.1088/2058-9565/aa8072} {\bibfield  {journal} {\bibinfo
  {journal} {Quantum Sci. Technol.}\ }\textbf {\bibinfo {volume} {2}},\
  \bibinfo {pages} {045001} (\bibinfo {year} {2017})}\BibitemShut {NoStop}%
\bibitem [{\citenamefont {Li}\ and\ \citenamefont
  {Benjamin}(2017)}]{Li2017Efficient}%
  \BibitemOpen
  \bibfield  {author} {\bibinfo {author} {\bibfnamefont {Y.}~\bibnamefont
  {Li}}\ and\ \bibinfo {author} {\bibfnamefont {S.~C.}\ \bibnamefont
  {Benjamin}},\ }\bibfield  {title} {\bibinfo {title} {Efficient variational
  quantum simulator incorporating active error minimization},\ }\href
  {https://doi.org/10.1103/PhysRevX.7.021050} {\bibfield  {journal} {\bibinfo
  {journal} {Phys. Rev. X}\ }\textbf {\bibinfo {volume} {7}},\ \bibinfo {pages}
  {021050} (\bibinfo {year} {2017})}\BibitemShut {NoStop}%
\bibitem [{\citenamefont {Huang}\ \emph {et~al.}(2019)\citenamefont {Huang},
  \citenamefont {Bharti},\ and\ \citenamefont {Rebentrost}}]{Huang2019Near}%
  \BibitemOpen
  \bibfield  {author} {\bibinfo {author} {\bibfnamefont {H.-Y.}\ \bibnamefont
  {Huang}}, \bibinfo {author} {\bibfnamefont {K.}~\bibnamefont {Bharti}},\ and\
  \bibinfo {author} {\bibfnamefont {P.}~\bibnamefont {Rebentrost}},\ }\bibfield
   {title} {\bibinfo {title} {Near-term quantum algorithms for linear systems
  of equations},\ }\href {https://arxiv.org/abs/1909.07344} {\bibfield
  {journal} {\bibinfo  {journal} {arXiv:1909.07344}\ } (\bibinfo {year}
  {2019})}\BibitemShut {NoStop}%
\bibitem [{\citenamefont {Carolan}\ \emph {et~al.}(2020)\citenamefont
  {Carolan}, \citenamefont {Mohseni}, \citenamefont {Olson}, \citenamefont
  {Prabhu}, \citenamefont {Chen}, \citenamefont {Bunandar}, \citenamefont
  {Niu}, \citenamefont {Harris}, \citenamefont {Wong}, \citenamefont {Hochberg}
  \emph {et~al.}}]{Carolan2020Variational}%
  \BibitemOpen
  \bibfield  {author} {\bibinfo {author} {\bibfnamefont {J.}~\bibnamefont
  {Carolan}}, \bibinfo {author} {\bibfnamefont {M.}~\bibnamefont {Mohseni}},
  \bibinfo {author} {\bibfnamefont {J.~P.}\ \bibnamefont {Olson}}, \bibinfo
  {author} {\bibfnamefont {M.}~\bibnamefont {Prabhu}}, \bibinfo {author}
  {\bibfnamefont {C.}~\bibnamefont {Chen}}, \bibinfo {author} {\bibfnamefont
  {D.}~\bibnamefont {Bunandar}}, \bibinfo {author} {\bibfnamefont {M.~Y.}\
  \bibnamefont {Niu}}, \bibinfo {author} {\bibfnamefont {N.~C.}\ \bibnamefont
  {Harris}}, \bibinfo {author} {\bibfnamefont {F.~N.}\ \bibnamefont {Wong}},
  \bibinfo {author} {\bibfnamefont {M.}~\bibnamefont {Hochberg}}, \emph
  {et~al.},\ }\bibfield  {title} {\bibinfo {title} {Variational quantum
  unsampling on a quantum photonic processor},\ }\href
  {https://doi.org/10.1038/s41567-019-0747-6} {\bibfield  {journal} {\bibinfo
  {journal} {Nat. Phys.}\ }\textbf {\bibinfo {volume} {16}},\ \bibinfo {pages}
  {322} (\bibinfo {year} {2020})}\BibitemShut {NoStop}%
\bibitem [{\citenamefont {Cirstoiu}\ \emph {et~al.}(2020)\citenamefont
  {Cirstoiu}, \citenamefont {Holmes}, \citenamefont {Iosue}, \citenamefont
  {Cincio}, \citenamefont {Coles},\ and\ \citenamefont
  {Sornborger}}]{Cirstoiu2020Variational}%
  \BibitemOpen
  \bibfield  {author} {\bibinfo {author} {\bibfnamefont {C.}~\bibnamefont
  {Cirstoiu}}, \bibinfo {author} {\bibfnamefont {Z.}~\bibnamefont {Holmes}},
  \bibinfo {author} {\bibfnamefont {J.}~\bibnamefont {Iosue}}, \bibinfo
  {author} {\bibfnamefont {L.}~\bibnamefont {Cincio}}, \bibinfo {author}
  {\bibfnamefont {P.~J.}\ \bibnamefont {Coles}},\ and\ \bibinfo {author}
  {\bibfnamefont {A.}~\bibnamefont {Sornborger}},\ }\bibfield  {title}
  {\bibinfo {title} {Variational fast forwarding for quantum simulation beyond
  the coherence time},\ }\href {https://doi.org/10.1038/s41534-020-00302-0}
  {\bibfield  {journal} {\bibinfo  {journal} {npj Quantum Inf.}\ }\textbf
  {\bibinfo {volume} {6}},\ \bibinfo {pages} {82} (\bibinfo {year}
  {2020})}\BibitemShut {NoStop}%
\bibitem [{\citenamefont {Cerezo}\ \emph {et~al.}(2020)\citenamefont {Cerezo},
  \citenamefont {Poremba}, \citenamefont {Cincio},\ and\ \citenamefont
  {Coles}}]{Cerezo2020Variationalquantum}%
  \BibitemOpen
  \bibfield  {author} {\bibinfo {author} {\bibfnamefont {M.}~\bibnamefont
  {Cerezo}}, \bibinfo {author} {\bibfnamefont {A.}~\bibnamefont {Poremba}},
  \bibinfo {author} {\bibfnamefont {L.}~\bibnamefont {Cincio}},\ and\ \bibinfo
  {author} {\bibfnamefont {P.~J.}\ \bibnamefont {Coles}},\ }\bibfield  {title}
  {\bibinfo {title} {Variational {Q}uantum {F}idelity {E}stimation},\ }\href
  {https://doi.org/10.22331/q-2020-03-26-248} {\bibfield  {journal} {\bibinfo
  {journal} {{Quantum}}\ }\textbf {\bibinfo {volume} {4}},\ \bibinfo {pages}
  {248} (\bibinfo {year} {2020})}\BibitemShut {NoStop}%
\bibitem [{\citenamefont {Sugiyama}\ \emph {et~al.}(2013)\citenamefont
  {Sugiyama}, \citenamefont {Turner},\ and\ \citenamefont
  {Murao}}]{Sugiyama2013Precision}%
  \BibitemOpen
  \bibfield  {author} {\bibinfo {author} {\bibfnamefont {T.}~\bibnamefont
  {Sugiyama}}, \bibinfo {author} {\bibfnamefont {P.~S.}\ \bibnamefont
  {Turner}},\ and\ \bibinfo {author} {\bibfnamefont {M.}~\bibnamefont
  {Murao}},\ }\bibfield  {title} {\bibinfo {title} {Precision-guaranteed
  quantum tomography},\ }\href {https://doi.org/10.1103/PhysRevLett.111.160406}
  {\bibfield  {journal} {\bibinfo  {journal} {Phys. Rev. Lett.}\ }\textbf
  {\bibinfo {volume} {111}},\ \bibinfo {pages} {160406} (\bibinfo {year}
  {2013})}\BibitemShut {NoStop}%
\bibitem [{\citenamefont {Tchebichef}(1867)}]{Tchebichef1867Des}%
  \BibitemOpen
  \bibfield  {author} {\bibinfo {author} {\bibfnamefont {P.}~\bibnamefont
  {Tchebichef}},\ }\bibfield  {title} {\bibinfo {title} {{Des valeurs
  moyennes}},\ }\href@noop {} {\bibfield  {journal} {\bibinfo  {journal} {J.
  Math. Pur. Appl.}\ }\textbf {\bibinfo {volume} {12}},\ \bibinfo {pages} {177}
  (\bibinfo {year} {1867})}\BibitemShut {NoStop}%
\bibitem [{\citenamefont {Kullback}\ and\ \citenamefont
  {Leibler}(1951)}]{Kullback1951Information}%
  \BibitemOpen
  \bibfield  {author} {\bibinfo {author} {\bibfnamefont {S.}~\bibnamefont
  {Kullback}}\ and\ \bibinfo {author} {\bibfnamefont {R.~A.}\ \bibnamefont
  {Leibler}},\ }\bibfield  {title} {\bibinfo {title} {{On Information and
  Sufficiency}},\ }\href {https://doi.org/10.1214/aoms/1177729694} {\bibfield
  {journal} {\bibinfo  {journal} {Ann. Math. Stat.}\ }\textbf {\bibinfo
  {volume} {22}},\ \bibinfo {pages} {79 } (\bibinfo {year} {1951})}\BibitemShut
  {NoStop}%
\bibitem [{\citenamefont {Lu}\ \emph {et~al.}(2020)\citenamefont {Lu},
  \citenamefont {Duan},\ and\ \citenamefont {Deng}}]{Lu2020Quantum}%
  \BibitemOpen
  \bibfield  {author} {\bibinfo {author} {\bibfnamefont {S.}~\bibnamefont
  {Lu}}, \bibinfo {author} {\bibfnamefont {L.-M.}\ \bibnamefont {Duan}},\ and\
  \bibinfo {author} {\bibfnamefont {D.-L.}\ \bibnamefont {Deng}},\ }\bibfield
  {title} {\bibinfo {title} {Quantum adversarial machine learning},\ }\href
  {https://doi.org/10.1103/PhysRevResearch.2.033212} {\bibfield  {journal}
  {\bibinfo  {journal} {Phys. Rev. Research}\ }\textbf {\bibinfo {volume}
  {2}},\ \bibinfo {pages} {033212} (\bibinfo {year} {2020})}\BibitemShut
  {NoStop}%
\bibitem [{\citenamefont {Huang}\ \emph {et~al.}(2020)\citenamefont {Huang},
  \citenamefont {Kueng},\ and\ \citenamefont {Preskill}}]{Huang2020Predicting}%
  \BibitemOpen
  \bibfield  {author} {\bibinfo {author} {\bibfnamefont {H.-Y.}\ \bibnamefont
  {Huang}}, \bibinfo {author} {\bibfnamefont {R.}~\bibnamefont {Kueng}},\ and\
  \bibinfo {author} {\bibfnamefont {J.}~\bibnamefont {Preskill}},\ }\bibfield
  {title} {\bibinfo {title} {Predicting many properties of a quantum system
  from very few measurements},\ }\href
  {https://doi.org/10.1038/s41567-020-0932-7} {\bibfield  {journal} {\bibinfo
  {journal} {Nat. Phys.}\ }\textbf {\bibinfo {volume} {16}},\ \bibinfo {pages}
  {1050} (\bibinfo {year} {2020})}\BibitemShut {NoStop}%
\bibitem [{\citenamefont {Carleo}\ and\ \citenamefont
  {Troyer}(2017)}]{Carleo2017Solving}%
  \BibitemOpen
  \bibfield  {author} {\bibinfo {author} {\bibfnamefont {G.}~\bibnamefont
  {Carleo}}\ and\ \bibinfo {author} {\bibfnamefont {M.}~\bibnamefont
  {Troyer}},\ }\bibfield  {title} {\bibinfo {title} {Solving the quantum
  many-body problem with artificial neural networks},\ }\href
  {https://doi.org/10.1126/science.aag2302} {\bibfield  {journal} {\bibinfo
  {journal} {Science}\ }\textbf {\bibinfo {volume} {355}},\ \bibinfo {pages}
  {602} (\bibinfo {year} {2017})}\BibitemShut {NoStop}%
\bibitem [{\citenamefont {Eisert}\ \emph {et~al.}(2010)\citenamefont {Eisert},
  \citenamefont {Cramer},\ and\ \citenamefont {Plenio}}]{Eisert2010Colloquium}%
  \BibitemOpen
  \bibfield  {author} {\bibinfo {author} {\bibfnamefont {J.}~\bibnamefont
  {Eisert}}, \bibinfo {author} {\bibfnamefont {M.}~\bibnamefont {Cramer}},\
  and\ \bibinfo {author} {\bibfnamefont {M.~B.}\ \bibnamefont {Plenio}},\
  }\bibfield  {title} {\bibinfo {title} {Colloquium: {{Area}} laws for the
  entanglement entropy},\ }\href {https://doi.org/10.1103/RevModPhys.82.277}
  {\bibfield  {journal} {\bibinfo  {journal} {Rev. Mod. Phys.}\ }\textbf
  {\bibinfo {volume} {82}},\ \bibinfo {pages} {277} (\bibinfo {year}
  {2010})}\BibitemShut {NoStop}%
\bibitem [{\citenamefont {Verstraete}\ \emph {et~al.}(2006)\citenamefont
  {Verstraete}, \citenamefont {Wolf}, \citenamefont {Perez-Garcia},\ and\
  \citenamefont {Cirac}}]{Verstraete2006Criticality}%
  \BibitemOpen
  \bibfield  {author} {\bibinfo {author} {\bibfnamefont {F.}~\bibnamefont
  {Verstraete}}, \bibinfo {author} {\bibfnamefont {M.~M.}\ \bibnamefont
  {Wolf}}, \bibinfo {author} {\bibfnamefont {D.}~\bibnamefont {Perez-Garcia}},\
  and\ \bibinfo {author} {\bibfnamefont {J.~I.}\ \bibnamefont {Cirac}},\
  }\bibfield  {title} {\bibinfo {title} {Criticality, the area law, and the
  computational power of projected entangled pair states},\ }\href
  {https://doi.org/10.1103/PhysRevLett.96.220601} {\bibfield  {journal}
  {\bibinfo  {journal} {Phys. Rev. Lett.}\ }\textbf {\bibinfo {volume} {96}},\
  \bibinfo {pages} {220601} (\bibinfo {year} {2006})}\BibitemShut {NoStop}%
\bibitem [{\citenamefont {MacKay}(2003)}]{mackay2003information}%
  \BibitemOpen
  \bibfield  {author} {\bibinfo {author} {\bibfnamefont {D.~J.}\ \bibnamefont
  {MacKay}},\ }\href@noop {} {\emph {\bibinfo {title} {Information Theory,
  Inference and Learning Algorithms}}}\ (\bibinfo  {publisher} {{Cambridge
  university press}},\ \bibinfo {year} {2003})\BibitemShut {NoStop}%
\bibitem [{\citenamefont {Biamonte}\ \emph {et~al.}(2015)\citenamefont
  {Biamonte}, \citenamefont {Morton},\ and\ \citenamefont
  {Turner}}]{Biamonte2015Tensor}%
  \BibitemOpen
  \bibfield  {author} {\bibinfo {author} {\bibfnamefont {J.~D.}\ \bibnamefont
  {Biamonte}}, \bibinfo {author} {\bibfnamefont {J.}~\bibnamefont {Morton}},\
  and\ \bibinfo {author} {\bibfnamefont {J.}~\bibnamefont {Turner}},\
  }\bibfield  {title} {\bibinfo {title} {{Tensor Network Contractions for
  \#SAT}},\ }\href {https://doi.org/10.1007/s10955-015-1276-z} {\bibfield
  {journal} {\bibinfo  {journal} {J. Stat. Phys.}\ }\textbf {\bibinfo {volume}
  {160}},\ \bibinfo {pages} {1389} (\bibinfo {year} {2015})}\BibitemShut
  {NoStop}%
\bibitem [{\citenamefont {Verdon}\ \emph {et~al.}(2019)\citenamefont {Verdon},
  \citenamefont {Broughton}, \citenamefont {McClean}, \citenamefont {Sung},
  \citenamefont {Babbush}, \citenamefont {Jiang}, \citenamefont {Neven},\ and\
  \citenamefont {Mohseni}}]{Verdon2019Learning}%
  \BibitemOpen
  \bibfield  {author} {\bibinfo {author} {\bibfnamefont {G.}~\bibnamefont
  {Verdon}}, \bibinfo {author} {\bibfnamefont {M.}~\bibnamefont {Broughton}},
  \bibinfo {author} {\bibfnamefont {J.~R.}\ \bibnamefont {McClean}}, \bibinfo
  {author} {\bibfnamefont {K.~J.}\ \bibnamefont {Sung}}, \bibinfo {author}
  {\bibfnamefont {R.}~\bibnamefont {Babbush}}, \bibinfo {author} {\bibfnamefont
  {Z.}~\bibnamefont {Jiang}}, \bibinfo {author} {\bibfnamefont
  {H.}~\bibnamefont {Neven}},\ and\ \bibinfo {author} {\bibfnamefont
  {M.}~\bibnamefont {Mohseni}},\ }\bibfield  {title} {\bibinfo {title}
  {Learning to learn with quantum neural networks via classical neural
  networks},\ }\href {https://arxiv.org/abs/1907.05415} {\bibfield  {journal}
  {\bibinfo  {journal} {arXiv:1907.05415}\ } (\bibinfo {year}
  {2019})}\BibitemShut {NoStop}%
\bibitem [{\citenamefont {Jamiolkowski}(1972)}]{jamiolkowski1972linear}%
  \BibitemOpen
  \bibfield  {author} {\bibinfo {author} {\bibfnamefont {A.}~\bibnamefont
  {Jamiolkowski}},\ }\bibfield  {title} {\bibinfo {title} {Linear
  transformations which preserve trace and positive semidefiniteness of
  operators},\ }\href
  {https://www.sciencedirect.com/science/article/abs/pii/0034487772900110}
  {\bibfield  {journal} {\bibinfo  {journal} {Rep. Math. Phys.}\ }\textbf
  {\bibinfo {volume} {3}},\ \bibinfo {pages} {275} (\bibinfo {year}
  {1972})}\BibitemShut {NoStop}%
\bibitem [{\citenamefont {Choi}(1975)}]{choi1975completely}%
  \BibitemOpen
  \bibfield  {author} {\bibinfo {author} {\bibfnamefont {M.-D.}\ \bibnamefont
  {Choi}},\ }\bibfield  {title} {\bibinfo {title} {Completely positive linear
  maps on complex matrices},\ }\href
  {https://www.sciencedirect.com/science/article/pii/0024379575900750}
  {\bibfield  {journal} {\bibinfo  {journal} {Linear. algebra and its
  applications}\ }\textbf {\bibinfo {volume} {10}},\ \bibinfo {pages} {285}
  (\bibinfo {year} {1975})}\BibitemShut {NoStop}%
\end{thebibliography}%

\clearpage
\onecolumngrid
\makeatletter
\setcounter{figure}{0}
\setcounter{equation}{0}
\renewcommand{\thefigure}{S\@arabic\c@figure}
\renewcommand \theequation{S\@arabic\c@equation}
\renewcommand \thetable{S\@arabic\c@table}

\DeclarePairedDelimiter\ceil{\lceil}{\rceil}
\DeclarePairedDelimiter\floor{\lfloor}{\rfloor}

\newcommand{\taui}{{\tau}^{z}_{i}}
\newcommand{\ZZ}{\mathcal{Z}}
\newcommand{\CC}{\mathbb{C}}
\newcommand{\nn}{\mathbb{N}}
\newcommand{\rr}{\mathbb{R}}
\renewcommand\thefigure{S\arabic{figure}}
\renewcommand\thetable{S\arabic{table}}
\renewcommand\theequation{S\arabic{equation}}
	\newtheorem{corollary}{Corollary}

%
%
%
%
%
%
%
%
%
%
%
\usetikzlibrary{
  shapes,
  shapes.geometric,
	trees,
	matrix,
  positioning,
    pgfplots.groupplots,
  }
\newcommand{\ipict}[3][-0.4]{\raisebox{#1\height}{\scalebox{#3}{\includegraphics{#2}}}}
\newcommand{\ipicg}[3][-0.32]{\raisebox{#1\height}{\scalebox{#3}{\includegraphics{#2}}}}
\hypersetup{
    colorlinks=true, linktocpage=true, pdfstartpage=3, pdfstartview=FitV,
    breaklinks=true, pdfpagemode=UseNone, pageanchor=true, pdfpagemode=UseOutlines,
    plainpages=false, bookmarksnumbered, bookmarksopen=true, bookmarksopenlevel=1,
    hypertexnames=true, pdfhighlight=/O,
    urlcolor=blue, linkcolor=blue, citecolor=black,
}

%
%
%
%
%
%
%
%
%
%
%
%
%
%
%
%
%

\begin{center} 
	{\large \bf Supplementary Material for:  The Presence and Absence of Barren Plateaus in Tensor-network Based Machine Learning}
\end{center}

\section{Details of proof techniques}

\subsection{Unitary designs for the random unitary matrices}
In this section, we introduce the concept of unitary $t$-design. For any measure $dU$ on the unitary group $U(N)$, the $t$-th moments $M_t(dU)$ of $dU$ are defined by the following integral \cite{Renes2004Symmetric,Dankert2009Exact,Harrow2009Random}
 \begin{equation}
 M_t(dU)=\int _{U(N)}dU \prod_{\lambda=1}^t U_{i_\lambda,j_\lambda}\bar{U}_{i'_\lambda,j'_\lambda},
 \end{equation}
 where $U_{i,j}$ represents the $(i,j)$-element of the unitary matrix $U$, and $\bar{U}$ represents the complex conjugation of $U$.
 Generally, the $t$-th moment with respect to the Haar measure $dU_H$ on the unitary group takes the following formula \cite{Collins2006Integration}
\begin{equation}
M_t(dU_H)=\int _{U(N)}dU \prod_{\lambda=1}^t U_{i_\lambda,j_\lambda}\bar{U}_{i'_\lambda,j'_\lambda}=
\sum_{\sigma,\tau}\delta_{i_1,i'_{\sigma(1)}}\cdots\delta_{i_t,i'_{\sigma(t)}} \delta_{j_1,j'_{\tau(1)}}\cdots\delta_{j_t,j'_{\tau(t)}} {\rm Wg}(\tau\sigma^{-1},N),
\end{equation} 
 where ${\rm Wg}(\cdot)$ represents the  Weingarten function, $\sigma$ and $\tau$ are the permutation operators in the symmetric group $S_t$. The  Weingarten function is defined with respect to the permutation operator $\sigma\in S_t$ and the dimension $N$ of the unitary group,
 \begin{equation}
{\rm Wg}(\sigma,N)=\frac{1}{(t!)^2}\sum_{\eta\vdash t, \\ l(\eta)\leq N} \frac{\chi^\eta(1)^2\chi^\eta(\sigma)}{s_{\eta,N}(1)}, 
 \end{equation}
 where the sum is over all the cases of non-negative integer partitions $\eta$ of $t$ with length $l(\eta)\leq N$. Here the partition $\eta$ of integer $t$ (abbreviated by $\eta\vdash t$) means a non-increasing sequence of non-negative integers $\eta=(\eta_1,\eta_2,\dots)$ satisfying $\sum_i\eta_i=t$. The length $l(\eta)$ means the number of non-zero values in the sequence $\eta$.   $\chi^\eta$ represents the irreducible character of the symmetry group $S_t$ with respect to the partition $\eta$, $s_{\eta,d}$ is the Schur polynomial of $\eta$ evaluated at $\underbrace{(1,1,\dots, 1)}_{N}$.
 
 {\bf The Schur polynomial $s_{\eta,d}$.}--- The Schur polynomial $s_{\eta,d}$ at $\underbrace{(1,1,\dots, 1)}_{N}$ is equivalent to the dimension of irreducible representation of $U(N)$ that corresponds to the partition $\eta$, which takes the following formula
 \begin{equation}
     s_{\eta,N}=s_\eta(1,1,\dots,1)=\prod_{1\leq i<j\leq N}\frac{\lambda_i-\lambda_j+j-i}{j-i}.
 \end{equation}
 
 {\bf The irreducible character $\chi^\eta$ of the symmetric group $S_t$.}--- In the case of the identity permutation, the irreducible character value $\chi^\eta(1)$ is equal to the dimension of the irreducible representation of the symmetric group $S_t$ indexed by $\eta$, which is given by the celebrated {\it hook length} formula
 \begin{equation}
     \chi^\eta(1) =\frac{|\eta|!}{\prod_{i,j}h_{i,j}^\eta},
 \end{equation}
 where  $h^\eta_{i,j}$ denotes the hook length of the cell $(i,j)$ in a Young diagram with respect to the partition $\eta$.
 
 In the case of the non-trivial partition $\eta$, the character value $\chi^\eta(1)$ of the symmetric group $S_t$ can be evaluated based on the Murnaghan-Nakayama rule, which is a combinatorial approach that is used for calculating the character.

 {\bf Unitary t-design.}--- The measure $dU$ is called the unitary $t$-design, if the $k$-th moment with respect to $dU$ equals to the $k$-th moment with respect to the Haar measure $dU_H$ for all the positive integers $k\leq t$,
\begin{equation}
    M_1(dU) = M_1(dU_H),\quad M_2(dU) = M_2(dU_H),\,\, \cdots,\,\, M_t(dU) = M_t(dU_H).  
\end{equation}

Based on the theory of the unitary $t$-design, here we calculate the probability distribution of the values of the loss functions associated with the unitary embedding matrix product states (MPS).  To calculate the average and variance of the loss functions with respect to the whole parameter space, we only need to care about the 1-moment and 2-moment of the random unitary matrices. It has been shown that the random unitary matrices form the approximate unitary $2$-design \cite{Harrow2009Random,Haferkamp2021Emergent}.  The approximate unitary 2-design indicates that the first and second moments are approximately the same as the corresponding moments over the Haar measure $dU_H$, i.e., $M_1(dU)=M_1(dU_H)$, $M_2(dU)=M_2(dU_H)$. The first and second moments over the Haar measure are given by the Weingarten functions with the following formula
\begin{eqnarray}
&& M_1(dU_H)=\int_{U(N)}dU_H\;U_{l_0,r_0}\bar{U}_{l_0',r_0'}=\frac{1}{N}\delta_{l_0l_0'}\delta_{r_0r_0'},\label{1-design}\\
&&\begin{aligned}M_2(dU_H)
   =&\int_{U(N)}dU_H\;U_{l_0,r_0}U_{l_1,r_1}{\bar{U}_{l_0',r_0'}}\bar{U}_{l_1',r_1'},\\
=& \frac{1}{N^2-1}(\delta_{l_0l_0'}\delta_{l_1l_1'}\delta_{r_0r_0'}\delta_{r_1r_1'}+\delta_{l_0l_1'}\delta_{l_1l_0'}\delta_{r_0r_1'}\delta_{r_1r_0'})\\
& -\frac{1}{N(N^2-1)}(\delta_{l_0l_0'}\delta_{l_1l_1'}\delta_{r_0r_1'}\delta_{r_1r_0'}
 +\delta_{l_0l_1'}\delta_{l_1l_0'}\delta_{r_0r_0'}\delta_{r_1r_1'}), 
\end{aligned}\label{2-design}
\end{eqnarray}
where $U_{l,r}$ represents the left and right dangling leg of the unitary tensor $U$, and $\bar{U}_{l,r}$ represents the complex conjugation of $U_{l,r}$.  
For the convenient illustrations of the following analytical proofs, we represent the 1-moment and the 2-moment in Eqs.~(\ref{1-design}, \ref{2-design}) by the following graph:
\begin{eqnarray}
&&\ipic{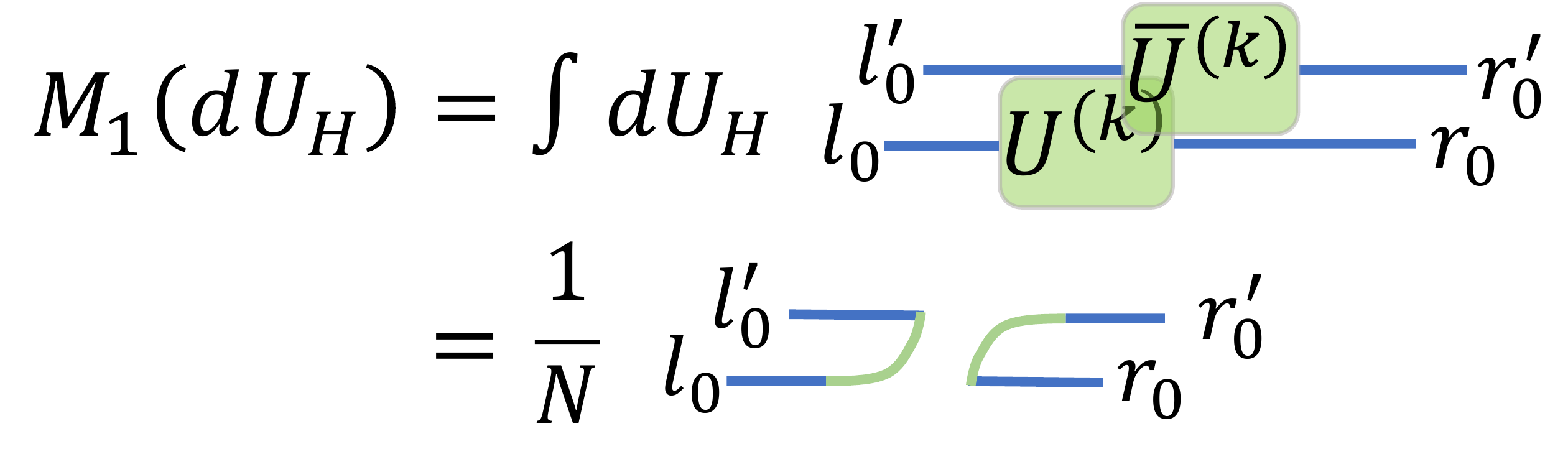}{0.2},\label{graph-1-design}\\
&&\ipic{Unitary_2_design.pdf}{0.2}.\label{graph-2-design}
\end{eqnarray}
In the calculation of the 2-moment in Eqs.~(\ref{2-design}) and (\ref{graph-2-design}),  for the convenience of our latter proofs,  we define the connection $\delta_{l_0 l'_0}\delta_{l_1 l'_1}$ as the symmetric connection (denoted by $S$), and $\delta_{l_0 l'_1}\delta_{l_1 l'_0}$ as the anti-symmetric connection (denoted by $A$). 

\subsection{Parameterization of the unitary embedding matrix product states}
An arbitrary MPS with the periodic boundary condition takes the form,
\begin{equation}
|\psi\rangle=\sum_{i_1, \dots, i_n}\tr\left[A_{i_1}^{(1)}A_{i_2}^{(2)}\cdots A_{i_n}^{(n)}\right]|i_1,\cdots i_n\rangle,
\end{equation}
where $|j_k\rangle$ means the local state of the $k$-th physical site with dimension $d$, $A^{k}_{j_k}$'s are the $D\times D$ matrices, $D$ represents the bond dimension.  Any such MPS can be unitarily embedded by the $Dd\times Dd$ unitary matrices $U$ with the graphical expression \cite{Perez2007Matrix,Gross2010Quantum,Haferkamp2021Emergent}
\begin{equation}\label{eq:stateUprep}
	\ipic{MPS_Embed}{0.3} .
\end{equation}
The unitary $U^{(i)} \in U(Dd)$ is represented in $\mathbb{C}^{D}\otimes\mathbb{C}^{d}$, where $D$ means the bond dimension, and $|0\rangle\in \mathbb{C}^d$ represents the measurement state in the subsystem $\mathbb{C}^d$.

Here the parameters in the unitary matrices $U^{(i)}$ are randomly initialized. For the convenience of illustrating our results, without loss of generality, here we parameterize the $Dd\times Dd$  matrices  $U^{(i)}$ by the production of a set of exponential unitaries, as
\begin{equation}\label{U_Decomposition}
\begin{aligned}
&U^{(i)}(\bm{\theta}^{(i)})=\prod_{\xi=1}^{{\rm Poly}(Dd)}e^{i\theta_{\xi}^{(i)}G^{(i)}_\xi}=U_{-}^{(i)}U_{+}^{(i)},\\
&U_{-}^{(i)} = \prod_{\xi=1}^{\lambda}e^{i\theta_{\xi}^{(i)}G^{(i)}_\xi},\;
U_{+} ^{(i)}= \prod_{\xi=\lambda+1}^{{\rm Poly}(Dd)}e^{i\theta_{\xi}^{(i)}G^{(i)}_\xi},
\end{aligned}
\end{equation}
where $\bm{\theta}^{(i)}=\{\theta_\xi^{(i)}\}$ represents a set of  real random numbers that parameterize the unitary $U^{(i)}$,  and the set $\{G^{(i)}_\xi\}$ represents a set of Hermitian operators (the number is of order $\mathcal{O}({\rm Poly}(Dd))$) that is required to ensure the universality and randomness of $U^{(i)}$ and $U_-^{(i)}$ (or $U_+^{(i)}$) in the unitary group $U(Dd)$. This setting is inspired by the quantum circuit model proposed in \cite{Mcclean2018Barren},  
 hence we can reasonably make the assumptions that $U^{(i)}$ approximately forms $2$-design, and at least one of the $U_-^{(i)}$ and $U_+^{(i)}$ forms the approximate unitary $2$-design \cite{Harrow2009Random,Mcclean2018Barren}.



\subsection{Cauchy-Schawrz inequality for the  tensor networks}
Here we introduce the Cauchy-Schawrz inequality for the  tensor networks \cite{Haferkamp2021Emergent,Kliesch2019Guaranteed} that will be used in the following proofs.
	\begin{lemma}\label{CS_Inequality}
		Define a tensor network by $(T,C)$  with $J\geq 2$ tensors $T=(t^j)_{\in \{1,\cdots, J\}}$. If no tensor  self-contracts  in the contraction $C$, then the following inequality holds,
		\begin{equation}
		|C(T)|\leq \prod_{j=1}^J \left|\left|t^j\right|\right|_F,
		\end{equation}
		where $||.||_F$ represents the Frobenius norm of the vectorized tensor $t^j$.
	\end{lemma}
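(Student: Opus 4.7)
The plan is to prove the inequality by induction on the number of tensors $J$, with the two-tensor Cauchy--Schwarz inequality providing the key building block at each step. Throughout, I interpret $|C(T)|$ as the Frobenius norm of the resulting tensor, which reduces to the ordinary absolute value when the contraction is complete and the output is a scalar.

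For the base case $J=2$, I would split the indices of $t^1$ into a multi-index $\alpha$ contracted with $t^2$ and a dangling multi-index $\beta$, and analogously split the indices of $t^2$ into the same $\alpha$ and a dangling multi-index $\gamma$. Writing $C(T)_{\beta,\gamma}=\sum_{\alpha} t^1_{\beta,\alpha}\,t^2_{\alpha,\gamma}$ and applying the scalar Cauchy--Schwarz inequality in $\alpha$ for each fixed $(\beta,\gamma)$, then squaring and summing over the free indices, yields $\|C(T)\|_F^2 \leq \|t^1\|_F^2\|t^2\|_F^2$, as required.

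For the inductive step, I would assume the statement for all networks with at most $J-1$ tensors and no self-contractions. If the contraction graph of the given network is disconnected, $C(T)$ factors across components as a tensor product of smaller contractions, and the claim follows from multiplicativity of the Frobenius norm under tensor products. Otherwise, some pair $t^1, t^2$ must share at least one contracted leg, and I would merge them by contracting \emph{all} their shared indices simultaneously to form a single tensor $\tilde t$. This reduces the network to $J-1$ tensors $\tilde t, t^3, \ldots, t^J$ with the same overall contraction value. The $J=2$ argument immediately yields $\|\tilde t\|_F \leq \|t^1\|_F\|t^2\|_F$, and invoking the inductive hypothesis on the reduced network completes the bound.

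The step I expect to require the most care is verifying that the reduced network remains free of self-contractions, so that the inductive hypothesis genuinely applies. This is why merging must contract \emph{all} shared indices between $t^1$ and $t^2$ simultaneously: leaving even a single $t^1$--$t^2$ edge uncontracted would turn it into a self-contraction of $\tilde t$ in the reduced network and break the induction (and indeed Cauchy--Schwarz can fail for a general trace, as the rank-one example $t^1=t^2=\bigl(\begin{smallmatrix}1&0\\0&-1\end{smallmatrix}\bigr)$ reminds us). Given that no $t^j$ self-contracts in the original network, the simultaneous merge guarantees that every surviving leg of $\tilde t$ either dangles or contracts with some $t^{j}$ for $j\geq 3$, preserving the no-self-contraction invariant and closing the induction.
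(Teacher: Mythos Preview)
Your induction argument is correct and complete: the two-tensor Cauchy--Schwarz bound is exactly the right base case, and the merge step---contracting \emph{all} shared legs between a chosen adjacent pair---both preserves the no-self-contraction hypothesis and reduces $J$ by one, so the induction closes cleanly. The paper itself does not prove this lemma at all; it is stated with a citation to \cite{Haferkamp2021Emergent,Kliesch2019Guaranteed} and then used as a black box, so there is no in-paper proof to compare against. Your route matches the standard argument in those references.

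One cosmetic remark: the parenthetical example $t^1=t^2=\bigl(\begin{smallmatrix}1&0\\0&-1\end{smallmatrix}\bigr)$ is not rank one, and in fact its full two-index contraction gives $2=\|t^1\|_F\|t^2\|_F$, so it does not exhibit a violation. A cleaner cautionary example for why self-contractions must be forbidden is a single identity $I_N$ traced with itself: $\mathrm{Tr}(I_N)=N$ exceeds $\|I_N\|_F=\sqrt{N}$ for $N>1$. This is purely illustrative and does not affect the soundness of your proof.
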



\section{Proof of theorem 1}
In this section, we will consider the loss function with the following form:
\begin{equation}
\label{global_loss_eq}
    \mathcal{L}_g = 1-|\langle \psi (\Theta) | \phi \rangle |^2,
\end{equation}
where $|\psi(\Theta) \rangle $ represents the parameterized matrix product state  and $|\phi \rangle$ represents an arbitrary constant quantum state.

According to the graphical representation of the unitary embedded matrix product state in Eq.~(\ref{eq:stateUprep}), we can then represent the loss function of Eq.~(\ref{global_loss_eq}) in the following graphical formula:
\begin{equation}\label{loss_global}
    \mathcal{L}_g = 1 - \ipic{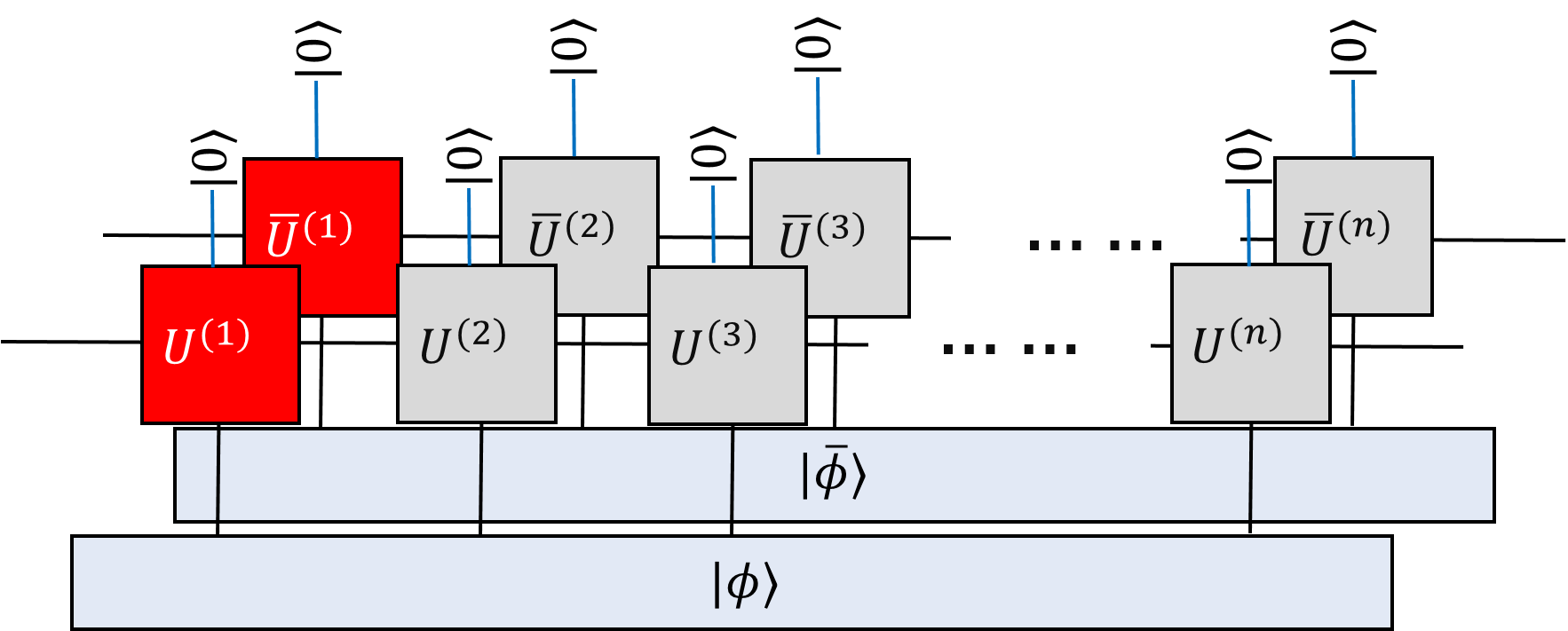}{0.3}.
\end{equation}
Based on the global loss function $\mathcal{L}_g$, we will calculate the mean value and the variance of the gradient $\partial_k\mathcal{L}_g$.
Without loss of generality, here we suppose that the derivative parameter of $\partial_k\mathcal{L}_g$ locates on the first site (denoted by the red color in Eq.~(\ref{loss_global}) ).

\subsection{Mean value of the derivative of global loss function}

Let us first introduce the expectation values of the derivative of the global loss function. For the first site that hosts the derivative parameter, 
we have:
\begin{equation}
\label{derivative_mean}
    \partial_k (U^{(1)} \otimes \bar{U}^{(1)}) = -\text{i} (U_-^{(1)} G^{(1)}_k U_+^{(1)}) \otimes (\bar{U}_-^{(1)} \bar{U}_+^{(1)}) + \text{i} (U_-^{(1)} U_+^{(1)})\otimes (\bar{U}_-^{(1)}\bar{G}^{(1)}_k \bar{U}_+^{(1)}),
\end{equation}
where we express the unitary operator $U^{(1)} = U_-^{(1)} U_+^{(1)} = U_1^{(1)} U_2^{(1)} \dots U_{\text{Poly}(Dd)}^{(1)}$ as in Eq.~(\ref{U_Decomposition}), and $G^{(1)}_k$ is the Hermitian operator of the corresponding derivative unitary operator $U_k^{(1)} = e^{-\text{i} G^{(1)}_k \theta_k}$, $\bar{G}^{(1)}_k$ is the complex conjugation of $G^{(1)}_k$. For latter convenience, here we omit  some unnecessary indices for the notations, such as $U^{(i)}\rightarrow U$,  $\bar{U}^{(i)}\rightarrow \bar{U}$, $G^{(1)}_k\rightarrow G$, and $\bar{G}^{(1)}_k\rightarrow \bar{G}$. We then express the Eq.~(\ref{derivative_mean}) in a compact form:
\begin{equation}
\label{derivative_mean_compact}
    \partial_k (U^{(1)} \otimes \bar{U}^{(1)}) = \sum_{\alpha = 0,1} (-1)^{\alpha} \text{i} U_- G^\alpha U_+ \otimes \bar{U}_- \bar{G}^{1-\alpha} \bar{U}_+,
\end{equation}
where $\alpha$ represents the power of $G$ ($\bar{G}$), i.e., $G^0$ means the identity operator, $G^1=G$. 

 Integrating the right side of Eq.~(\ref{derivative_mean_compact}) with respect to the full unitary space, one can obtain the following graphical expression:
\begin{equation}
\label{derivative_mean_compact_graph}
   \int dU_-\int dU_+\left(\sum_{\alpha = 0,1} (-1)^{\alpha} \text{i} U_- G^\alpha U_+ \otimes \bar{U}_- \bar{G}^{1-\alpha} \bar{U}_+\right) =\text{i}  \ipict{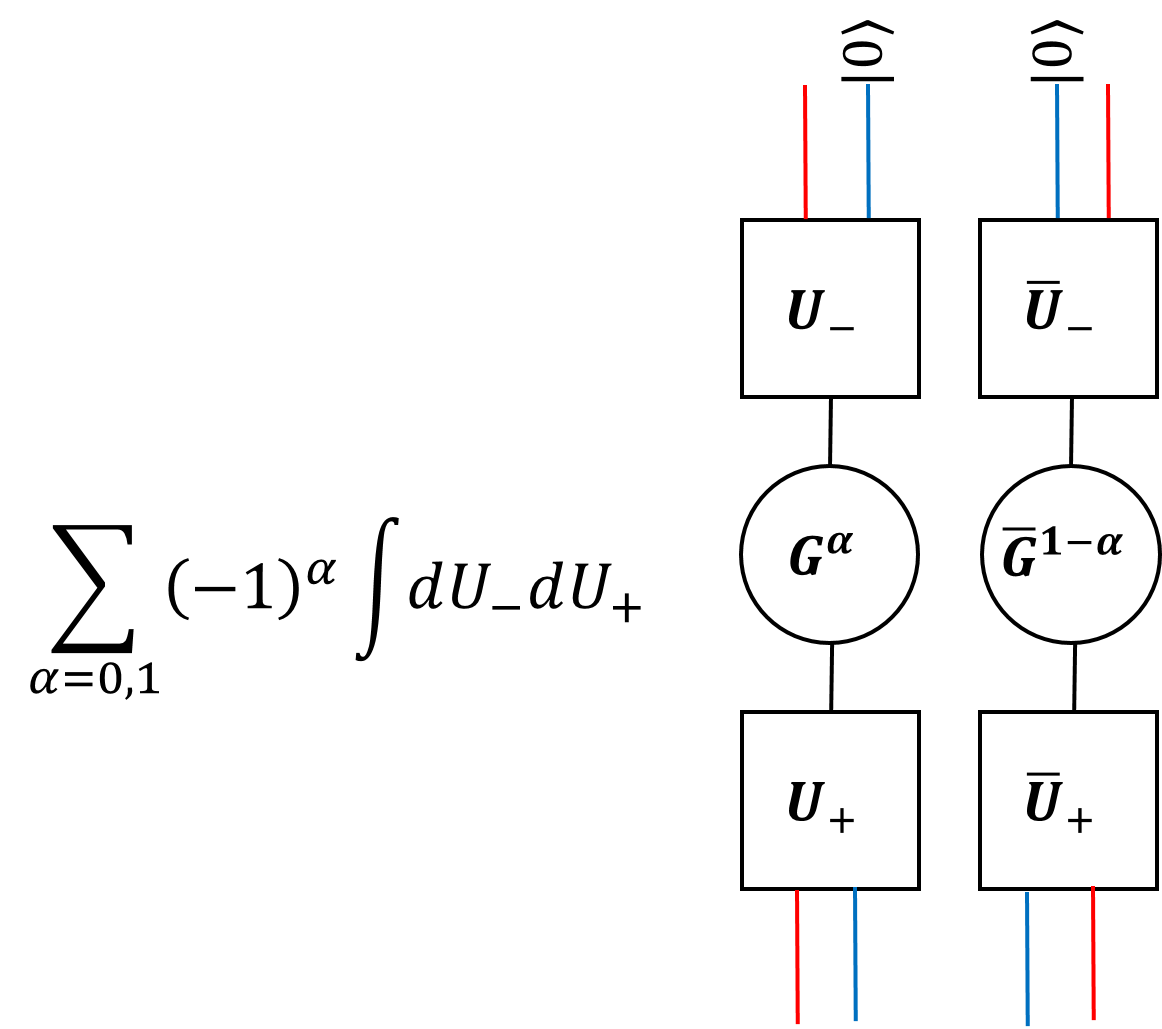}{0.3},
\end{equation}
where the red lines indicate the bond dimension $D$, the blue lines indicate the physical dimension $d$, and the black lines indicate the multiplication of physical dimension and bond dimension $d\times D$. 

Here for the random matrix product states, one can reasonably assume that the unitary embedded operators $\{U^{(i)}\}$ are random enough, where each unitary operator can be treated as a local deep quantum circuit, and it is reasonable to assume that at least one of the $U_-$ and $U_+$ forms $1$-design. Let us consider the first case, where $U_-$ forms the unitary $1$-design. By integrating the $U_-$ operator with respect to the Haar measure $dU_-$ according to the rule in Eq.~(\ref{graph-1-design}), we obtain the following graph:
\begin{equation}
\label{derivative_mean_compact_u-}
    \ipic{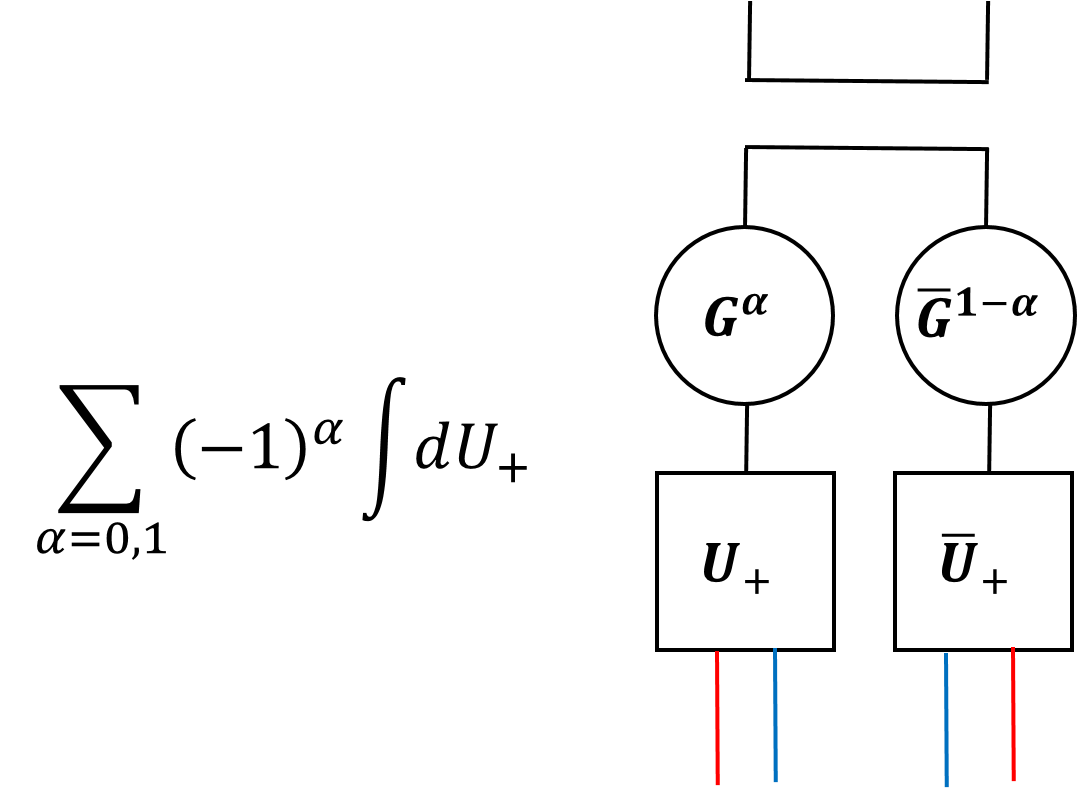}{0.3}.
\end{equation}
We then integrate all the unitary operators from the $2$-th site to the  $n$-th site and ignore the trivial connected parts, the dangling legs  are contracted and Eq.~(\ref{derivative_mean_compact_u-}) becomes:
\begin{equation}
\label{derivative_mean_compact_u-_int}
    \ipic{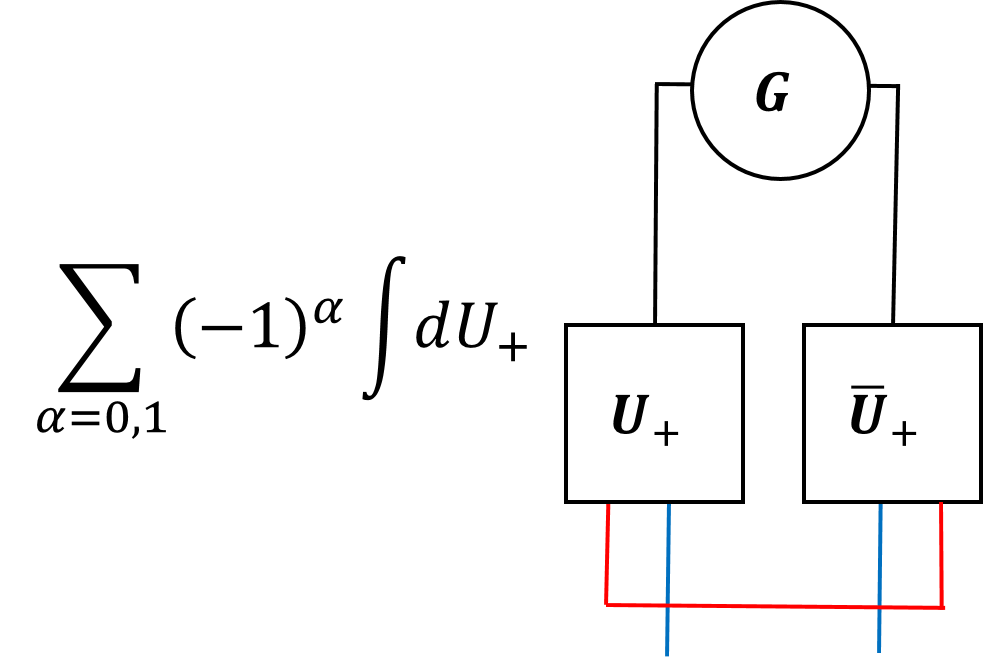}{0.3},
\end{equation}
where we use the following relations:
\begin{equation}
    \ipic{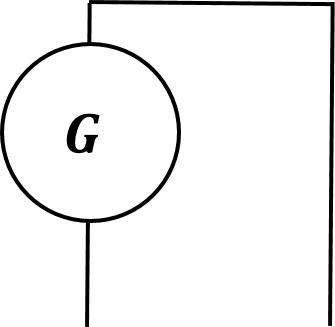}{0.3} = \ipic{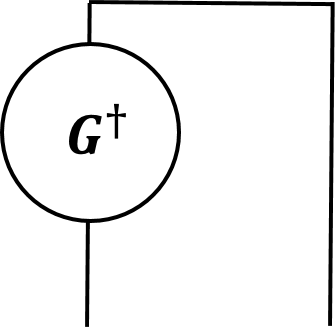}{0.3} = \ipic{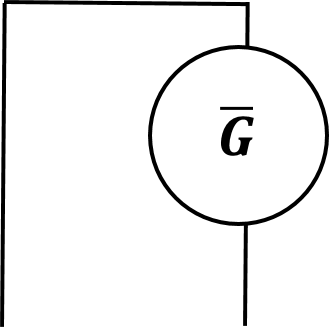}{0.3},
\end{equation}
and ignore the constant term. The dangling blue legs in Eq.~(\ref{derivative_mean_compact_u-_int}) are connected to the constant state $|\phi\rangle$. The integrands for $\alpha = 0$ and $\alpha = 1$ cancel with each other, which means that the expectation value of the derivative of the loss function is $0$.

In the case that $U_+$ forms the unitary 1-design, one can integrate the $U_+$ part with respect to the full unitary space. Then the integration of  the derivative site (red-colored site in Eq.~(\ref{loss_global}) ) becomes:
\begin{equation}
\label{derivative_mean_compact_u+}
        \ipic{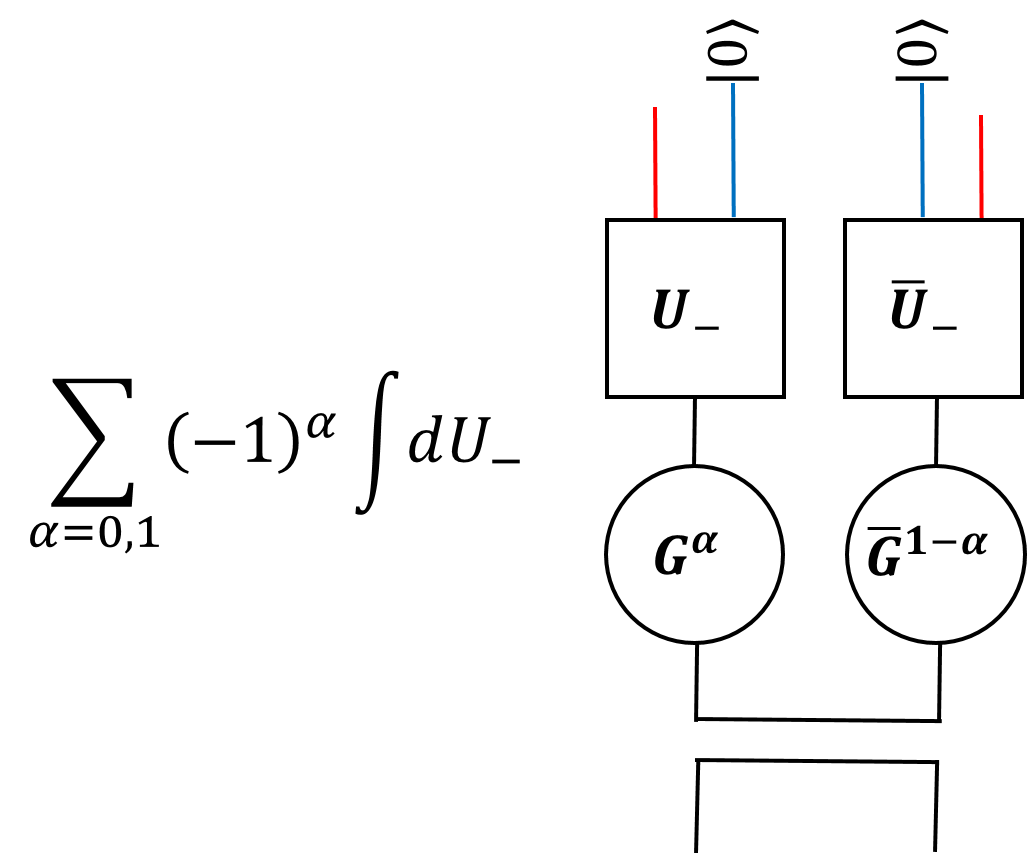}{0.3}.
\end{equation}
Similarly, by integrating all the unitary operators from the $2$-th site to the $n$-th site and ignore the trivial connected parts, the dangling legs  are contracted and Eq.~(\ref{derivative_mean_compact_u+}) becomes:
\begin{equation}
\label{derivative_mean_compact_u+_int}
    \ipic{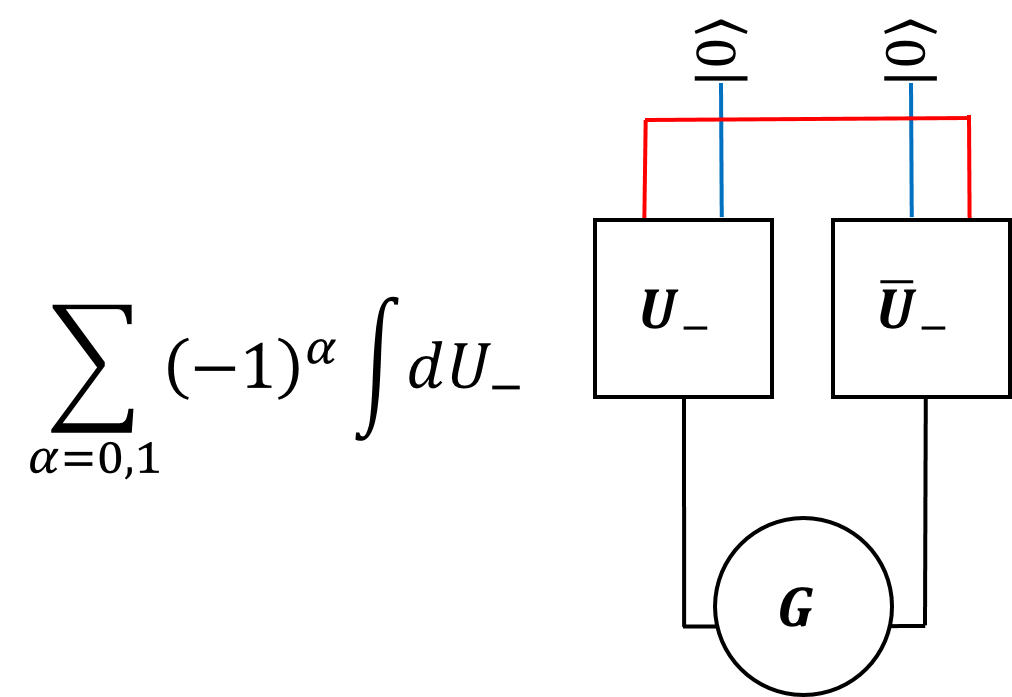}{0.3}.
\end{equation}
One can easily verify the $\alpha =0$ term and the $\alpha=1$ term cancel each other out, then Eq.~(\ref{derivative_mean_compact_u-_int}) equals to $0$. Thus,we conclude that the mean value of the derivative of the global loss function 
\begin{equation}\label{average_global_0}
\langle \frac{ \partial\mathcal{L}_g}{\partial \theta_k}\rangle =0,
\end{equation}
given that at least one of the two unitary operators  $\{U_-, U_+\}$ forms the unitary 1-design.




\subsection{Variance of the derivative of global loss function}
\label{variance_derivative_glf}
The variance of the derivative of the global loss function can be written as:
\begin{equation}
    \text{Var}(\partial_k \mathcal{L}_g) = \langle (\frac{ \partial\mathcal{L}_g}{\partial \theta_k})^2 \rangle  - \langle \frac{ \partial\mathcal{L}_g}{\partial \theta_k}\rangle^2.
\end{equation}
With the vanishing expectation value $\langle \frac{ \partial\mathcal{L}_g}{\partial \theta_k}\rangle=0$, the variance can be further reduced to
\begin{equation}
        \text{Var}(\partial_k \mathcal{L}_g) = \langle (\frac{\partial \mathcal{L}_g}{\partial \theta_k})^2 \rangle,
\end{equation}
which indicates that we only need to consider the mean square of the derivative of the loss function. For our convenient, we express the Weingarten calculus of the second moment operator of the Haar-random unitary in the following graph representation: 
\begin{equation}
\label{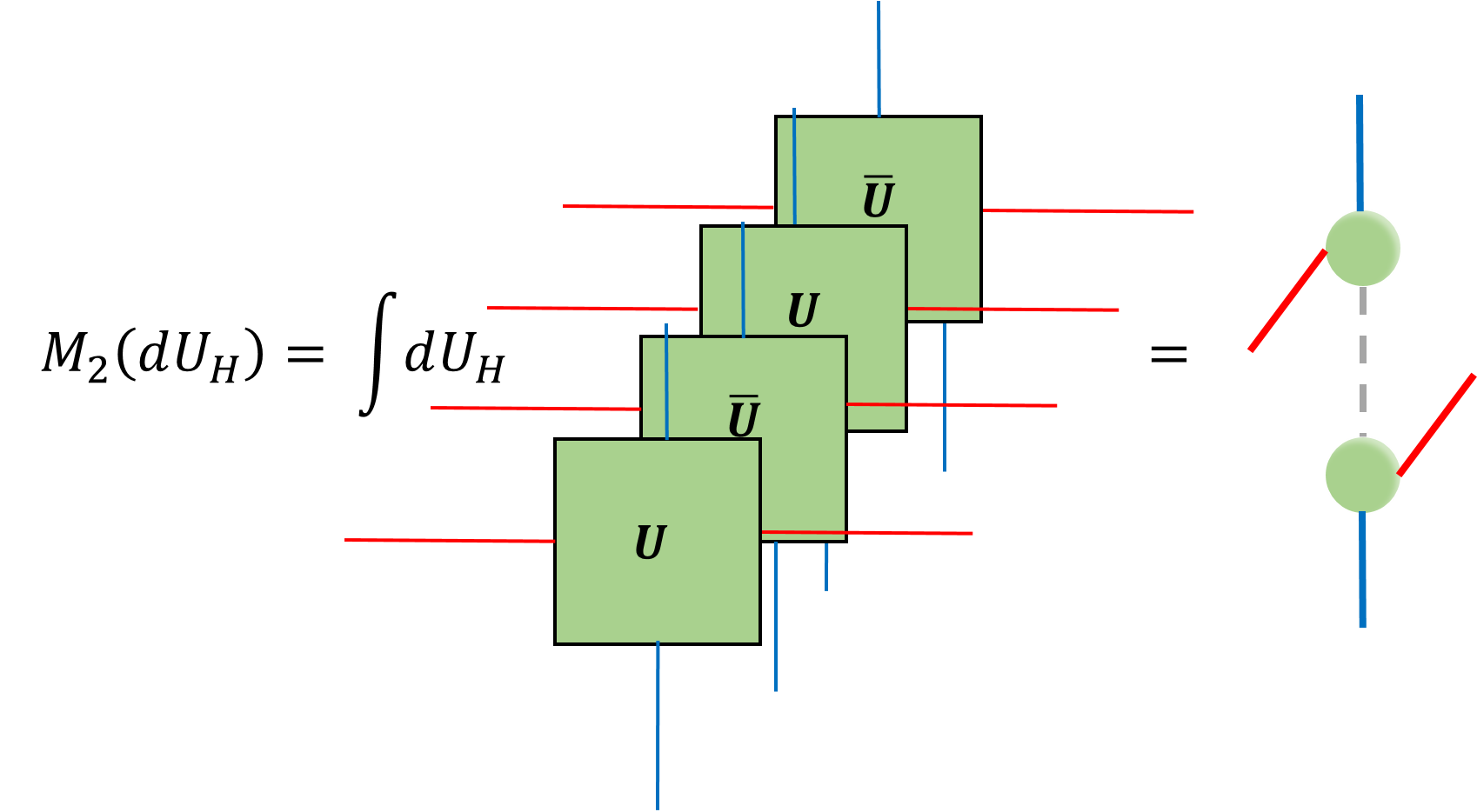}
        \ipic{uniform_2design.png}{0.3},
\end{equation}
where the red and blue lines on the right side of the equation represent a compact form of the four dangling legs with the bond dimension and the physical dimension respectively. The green circle denotes the sum of all
possible connections of the four dangling legs and has the following two cases (obtained from the calculation of 2-moment in Eq.~(\ref{graph-2-design}) )
\begin{equation}
        \ipic{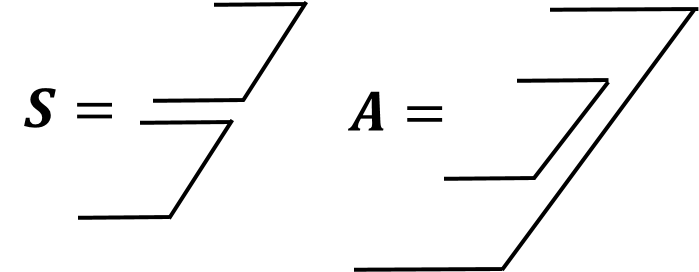}{0.3}.
\end{equation}
where $S$ denotes the case of the symmetric connection, and $A$ denotes denotes the case of anti-symmetric connection, so that $\ipicg{S+A_green_dot}{0.32}=S+A$. The grey dashed line in Eq.~(\ref{uniform_2design.png}) corresponds to the weight of each connection, and has the following four cases:
\begin{equation}
\begin{aligned}
    \ipic{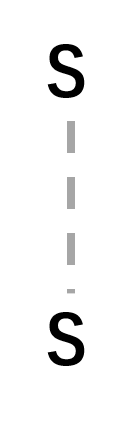}{0.3} &= \ipic{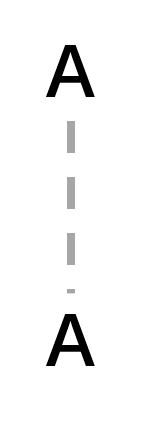}{0.3} = \frac{1}{(Dd)^2 - 1},\\
    \ipic{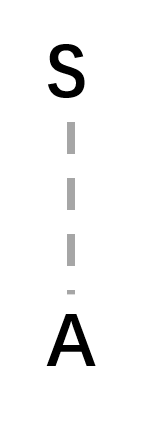}{0.3} &= \ipic{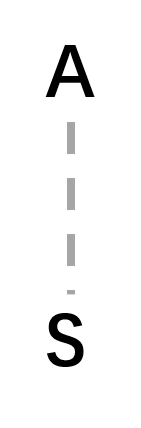}{0.3} = -\frac{1}{Dd\left[(Dd)^2 - 1\right]}.\\
\end{aligned}
\end{equation}


Now we calculate the variance of the derivative of the global loss function.  
The square of the  derivative of the loss function has the following formula:
\begin{equation}
    \ipic{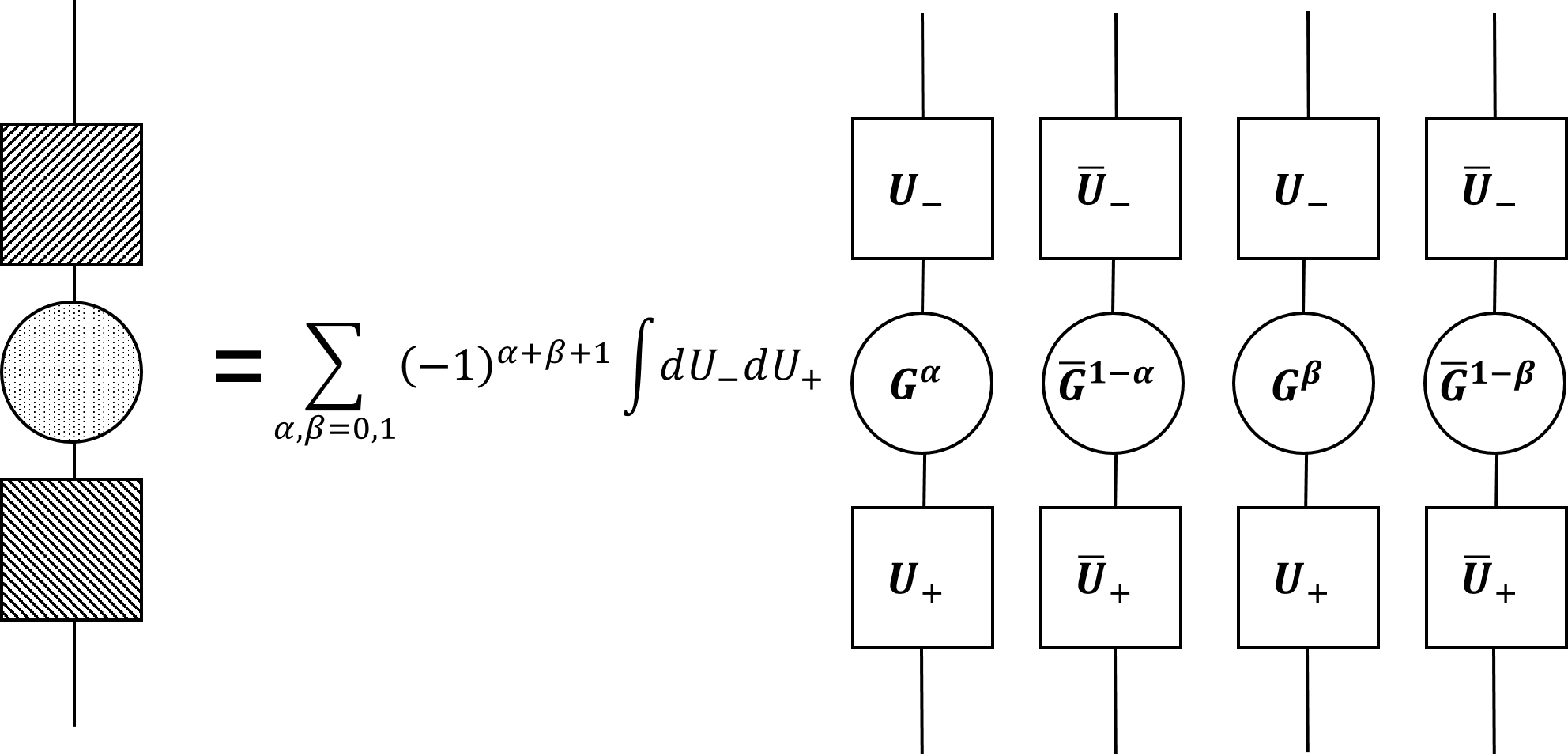}{0.3}.
\label{quadratic_derivative}
\end{equation}
In the following proof, based on the fact that the random unitary matrices form the approximate unitary $2$-design, here we consider the case that at least one of the two unitary operators $\{U_-,U_+\}$ in Eq.~(\ref{quadratic_derivative}) forms the unitary $2$-design.

To calculate the term $\langle (\frac{ \partial\mathcal{L}_g}{\partial \theta_k})^2\rangle$, 
we first consider the case that  $U_-$ forms the unitary $2$-design.  By integrating the randomly initialized unitary matrices on all sites, we obtain the following graphical result:
\begin{equation}
    \ipic{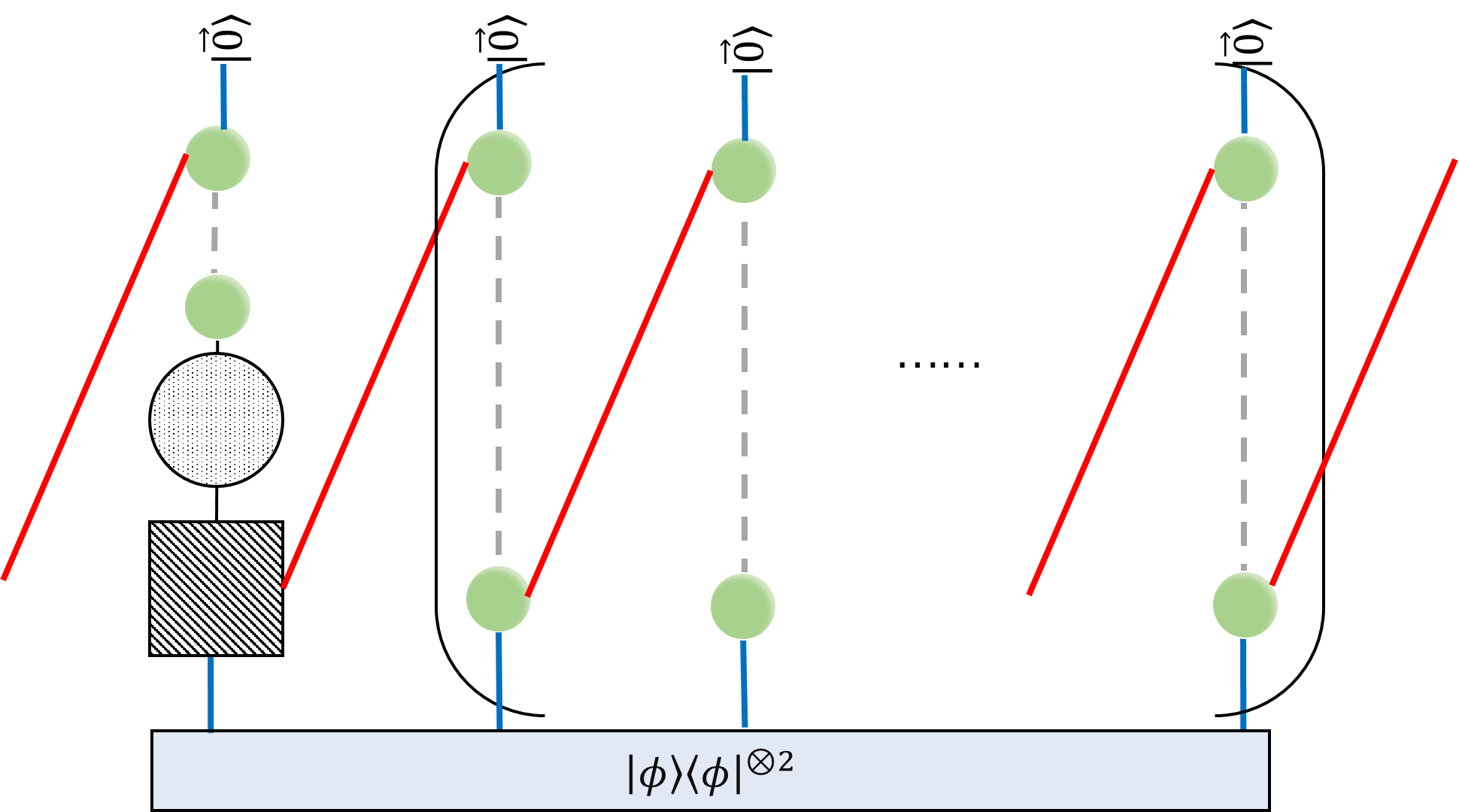}{0.3},
\label{graph_global}
\end{equation}
where the term on the derivative site takes the form:
\begin{equation}\label{u-symbol}
    \ipic{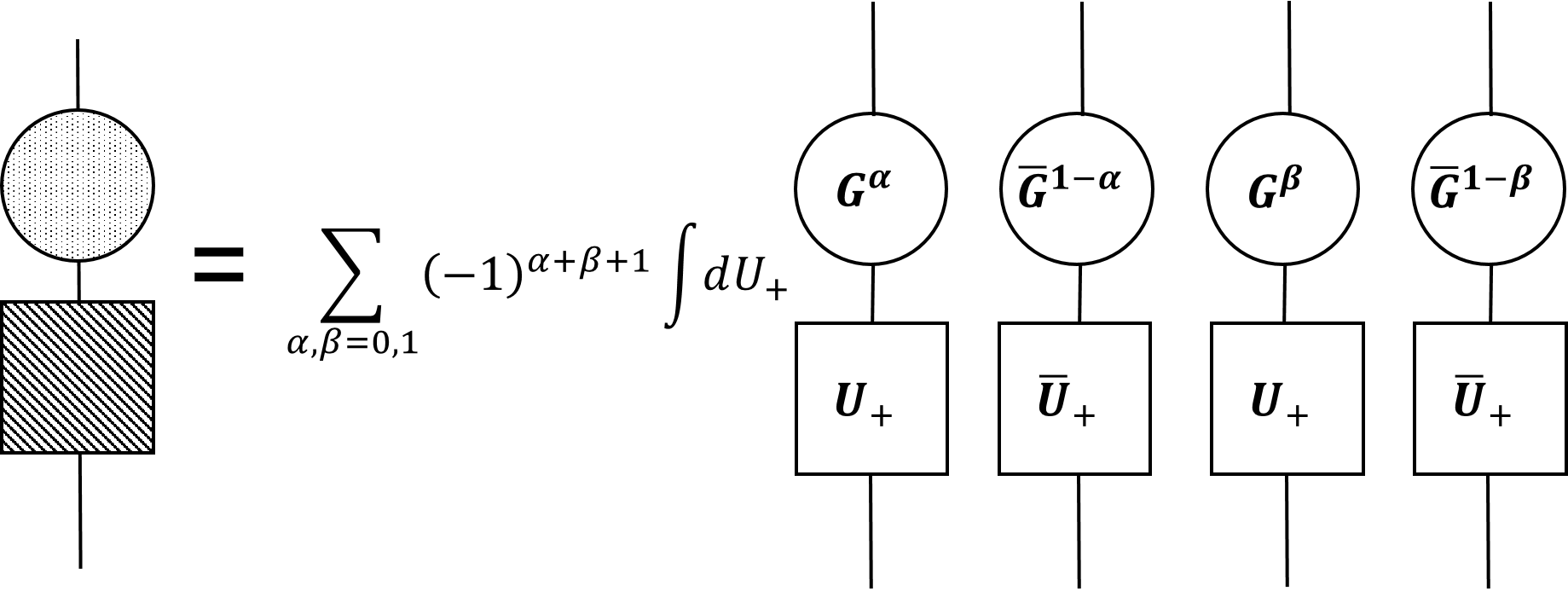}{0.3},
\end{equation}
and we denote the reference state by $|\vec{0}\rangle = |0\rangle \langle 0 | ^{\otimes 2}$ in Eq.~(\ref{graph_global}) based on the Choi-Jamiolkowski isomorphism \cite{jamiolkowski1972linear,choi1975completely}. 

With different connection of the dangling legs, we have the following relations:
\begin{equation}
\begin{aligned}
           \ipic{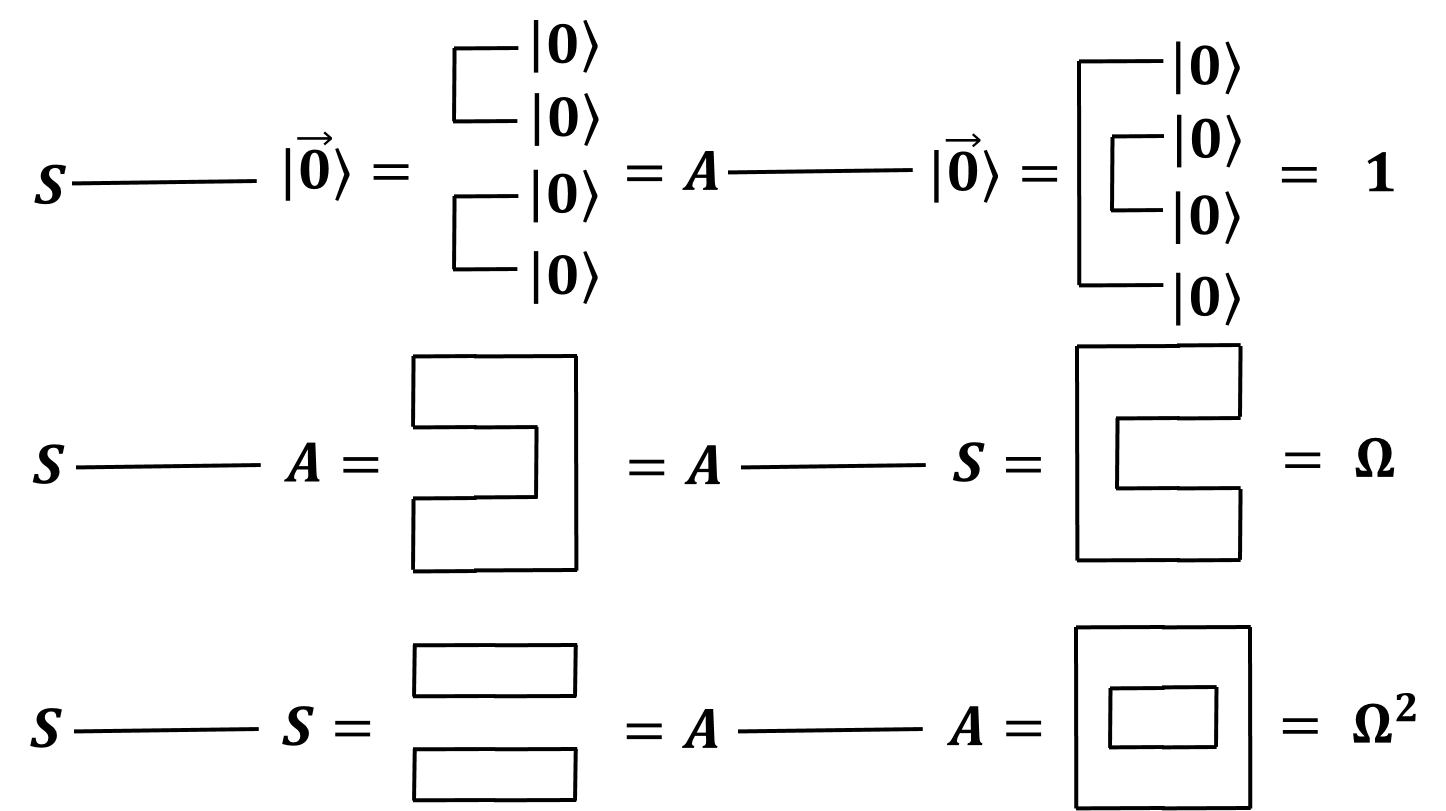}{0.3} ,
\end{aligned}
\end{equation}
where $\Omega$ is the dimension of the dangling leg. 

As was mentioned in Lemma~\ref{CS_Inequality}, the Cauchy-Schwarz inequality for tensor networks reads:
\begin{equation}
    \sum_{\Pi = {S,A}}|\langle \phi|^{\otimes 2}(\langle\bar{ \phi} |^{\otimes 2}) |\Pi\rangle| \leq 1,
\end{equation}
where $|\bar{\phi}\rangle$ is complex conjugation of $|\phi\rangle$, and $\Pi $ is an arbitrary choice of all the possible combinations of the $S$ and $A$ connections. 

With the above result of Cauchy-Schwarz inequality for tensor networks, we  easily obtain that the graphic structure of the integrated result is upper bounded by:
\begin{equation}\label{global_u-2design_1}
\ipic{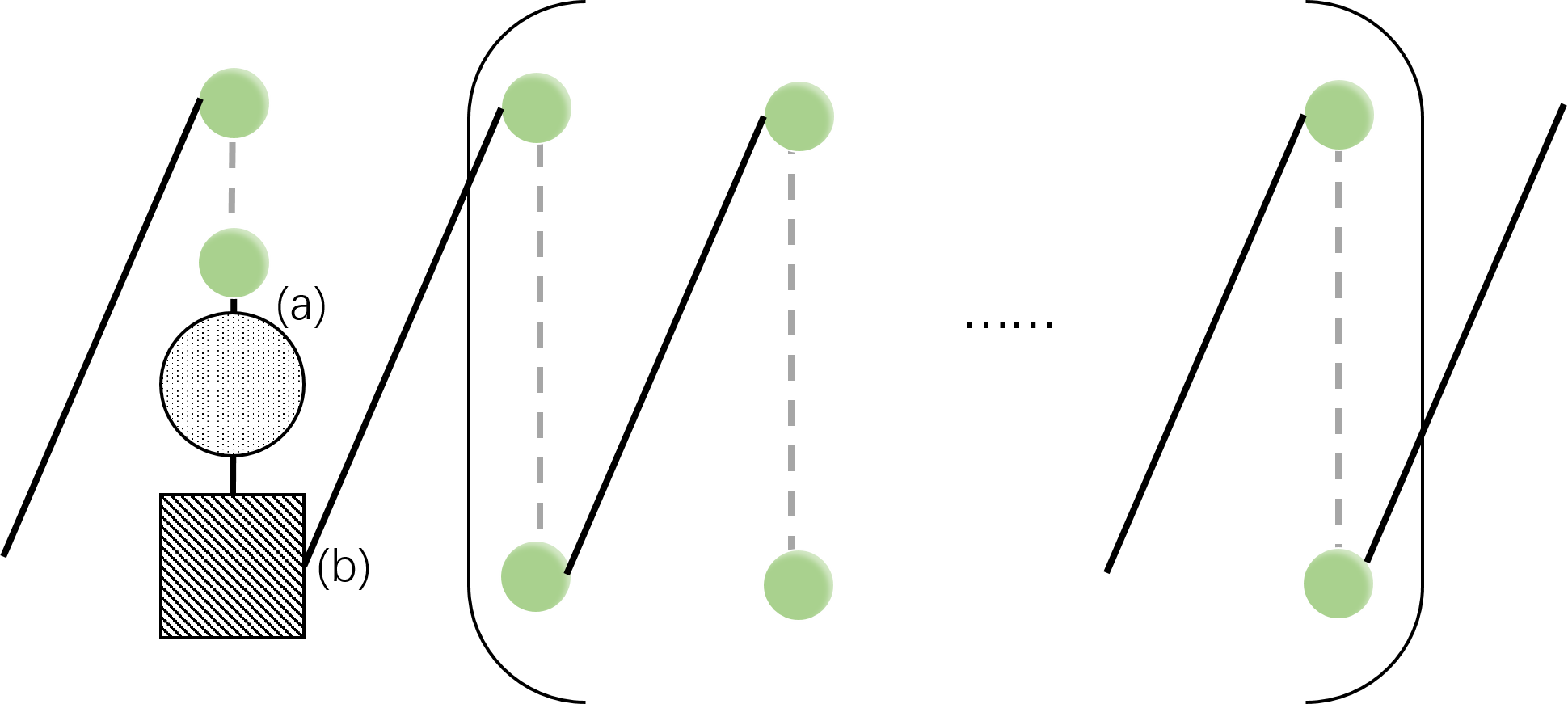}{0.3},
\end{equation}
where (a) denotes the possible connection of the four lines (noted in Eq.~(\ref{u-symbol})) between the green circle and the grey circle, and (b) denotes the possible connection of the four lines between the green circle and the black square.

Now we consider all of the possible cases for (a) and (b).

If (a) hosts the $S$ connection, we can graphically express the derivative site as:
\begin{equation}
\label{global_u-S_result}
\sum_{\alpha ,\beta= 0,1} (-1)^{\alpha + \beta +1} \int dU_+\ipic{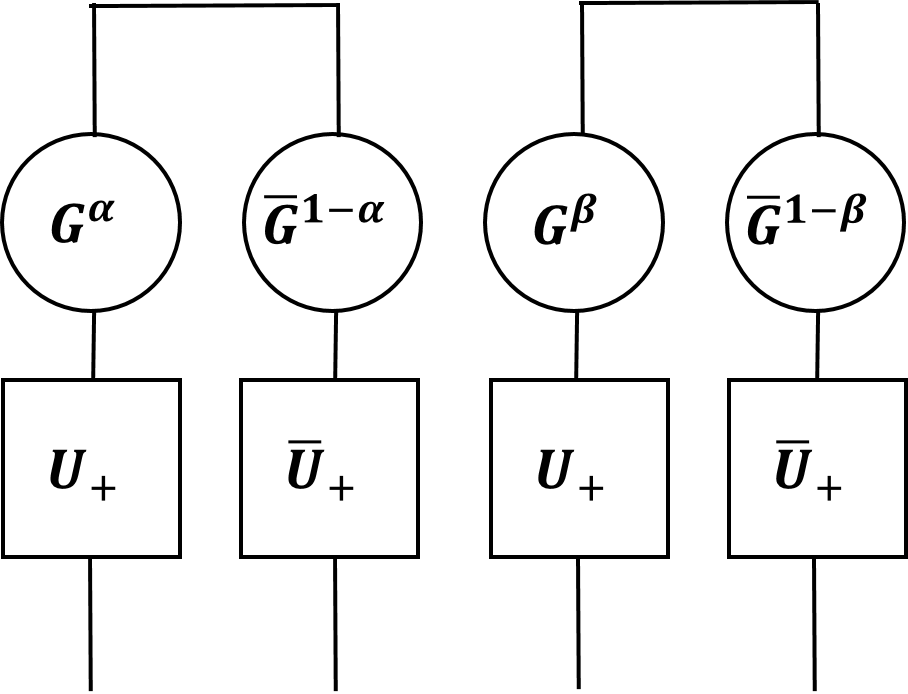}{0.3} =\sum_{\alpha ,\beta= 0,1} (-1)^{\alpha + \beta +1} \int dU_+ \ipic{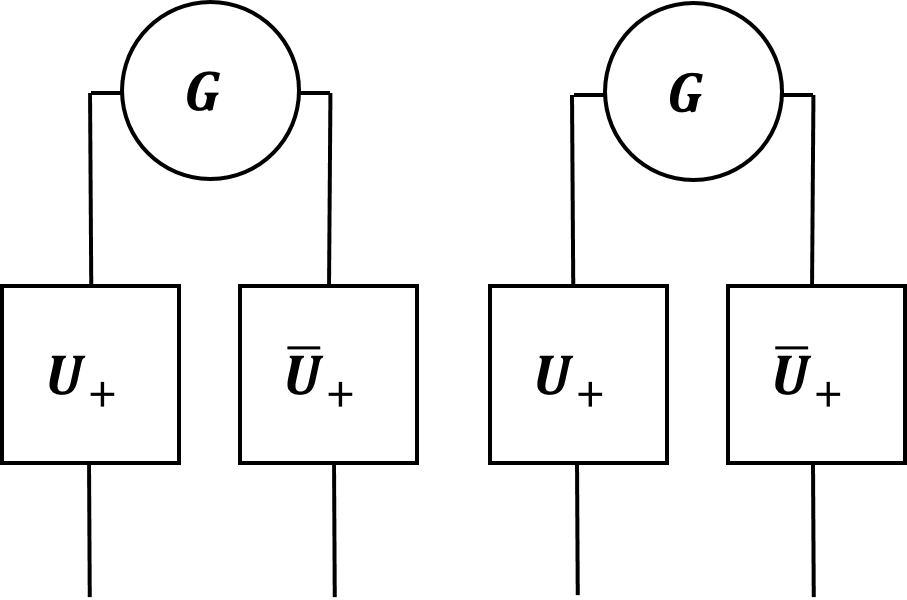}{0.3},
\end{equation}
where the integrated term is independent of the indices $\alpha$ and $\beta$. By summing over all the cases of $\alpha$ and $\beta$, we easily obtain that the value of Eq.~(\ref{global_u-S_result}) is $0$.

If (a) hosts the $A$ connection and (b) hosts the $S$ connection, the integration on the derivative site can be graphically expressed as:
\begin{equation}
\label{global_u-AS_result}
\sum_{\alpha ,\beta= 0,1} (-1)^{\alpha + \beta +1} \int dU_+\ipic{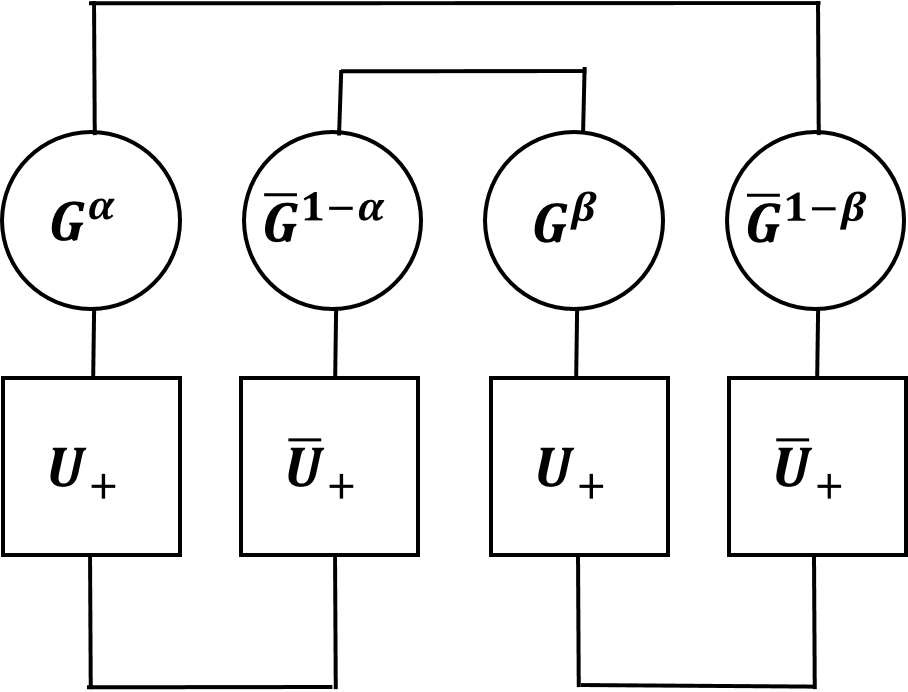}{0.3} =\sum_{\alpha ,\beta= 0,1} (-1)^{\alpha + \beta +1} \int dU_+ \ipic{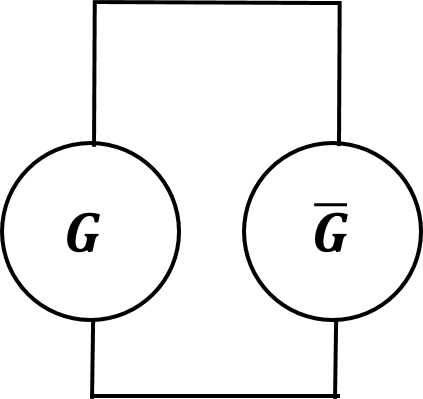}{0.3}.
\end{equation}
Similar to the case of Eq.~(\ref{global_u-S_result}), simple calculations show that the value of Eq.~(\ref{global_u-AS_result}) is $0$.

If both (a)  and (b) host the $A$ connections, the integrated graph on the derivative site is
\begin{equation}
\label{global_u-AA_result}
\sum_{\alpha ,\beta= 0,1} (-1)^{\alpha + \beta +1} \int dU_+\ipic{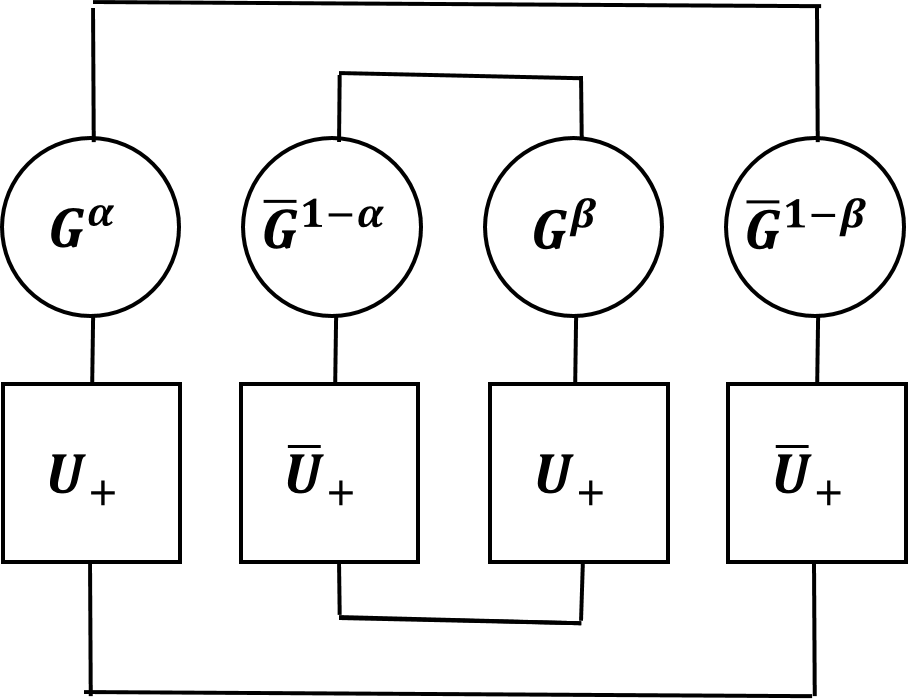}{0.3} =\sum_{\alpha ,\beta= 0,1} (-1)^{\alpha + \beta +1} \int dU_+ \ipic{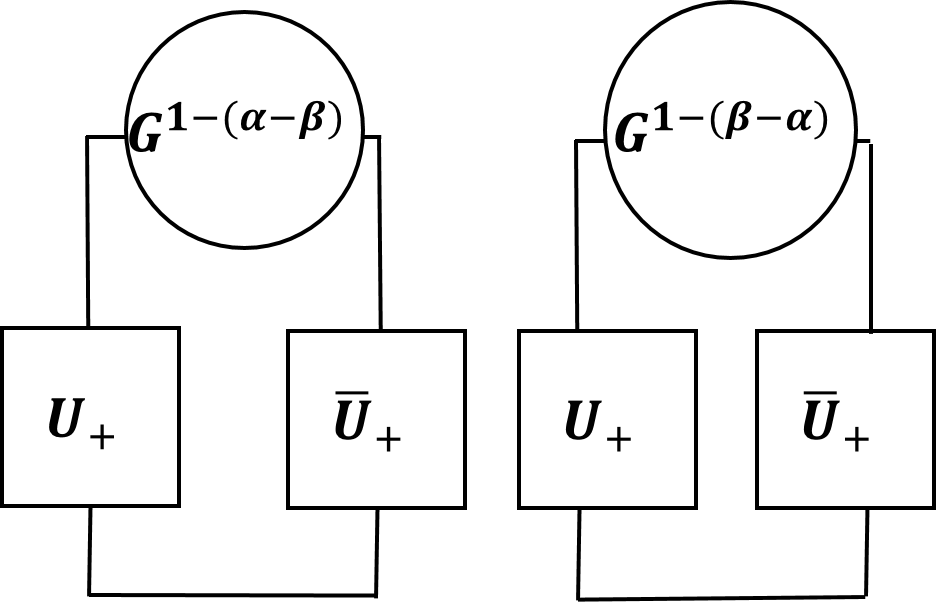}{0.3},
\end{equation}
which leads to a non-zero but constant term:
\begin{equation}
    C_{g1} = 2\text{Tr}(G^2) - 2\text{Tr}(G)^2. 
\end{equation}
Based on the results of Eqs.~(\ref{global_u-2design_1}, \ref{global_u-S_result}, \ref{global_u-AS_result}, \ref{global_u-AA_result}), and together with the relations  that $ A \rule[0.25\baselineskip]{0.75cm}{0.05em} A =  S \rule[0.25\baselineskip]{0.75cm}{0.05em} S  =  (Dd)^2 $ and $ S \rule[0.25\baselineskip]{0.75cm}{0.05em} A =  A \rule[0.25\baselineskip]{0.75cm}{0.05em} S  =  Dd $, then we can bound the variance $\text{Var}(\partial_k \mathcal{L}_g)$ by the following equations:
\begin{equation}\label{figure36}
\begin{aligned}
\text{Var}(\partial_k \mathcal{L}_g)=\langle (\partial_k \mathcal{L}_g)^2\rangle  \leq & \left|\ipic{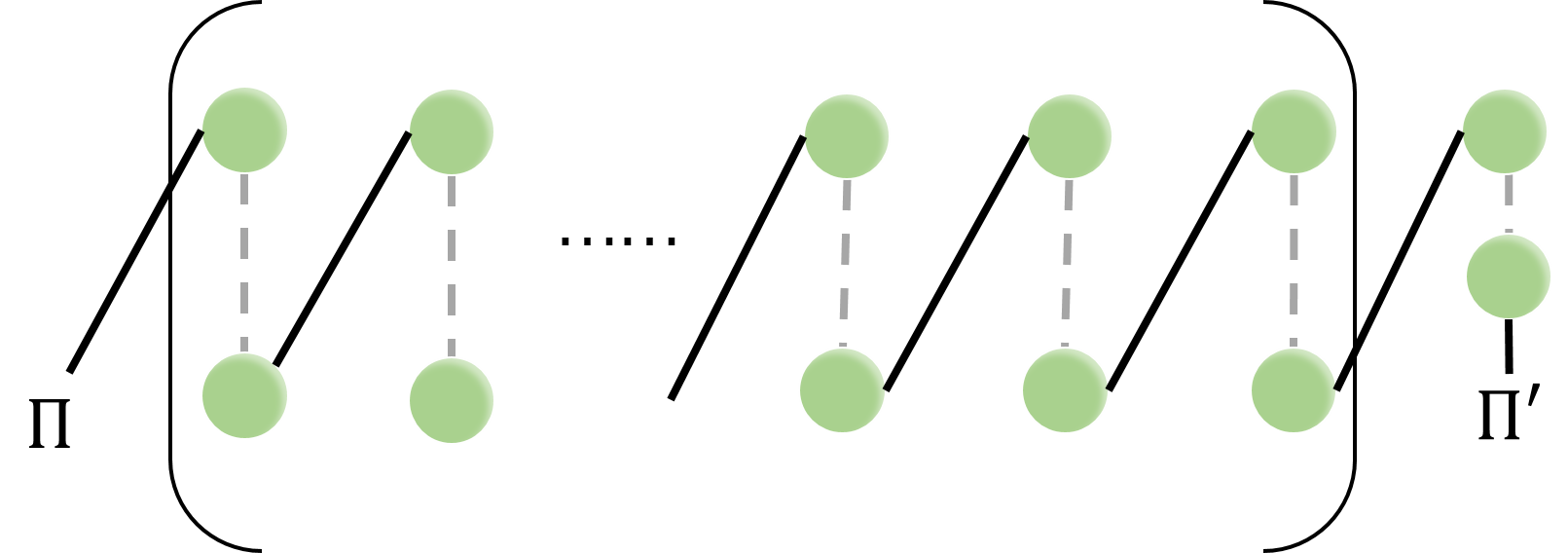}{0.3}\right| C_{g1}\\
\leq &2\frac{D d -1}{[(Dd)^2 -1]Dd} \frac{D^{2n}}{(d^2D^2-1)^n}\sum_{l=0}^{n}\sum_{r=0}^n{n\choose l}{n\choose r}D^{-l-r}d^{-r}C_{g1}  \\
= & 2 \frac{D d -1}{[(Dd)^2 -1]Dd} \frac{(1+\frac{1}{D})^{n-1}(1+\frac{1}{Dd})^{n-1}}{(d^2 - \frac{1}{D^2})^{n-1}} C_{g1},
\end{aligned}
\end{equation}
where $\Pi$ and $\Pi'$ are arbitrarily chosen from the $S$ and $A$ connections.

Now we consider the case that $U_+$ forms the unitary $2$-design. Following the Cauchy-Schwarz inequality in Lemma~\ref{CS_Inequality},  then the  Eq.~(\ref{graph_global}) is upper bounded by:
\begin{equation}
\ipic{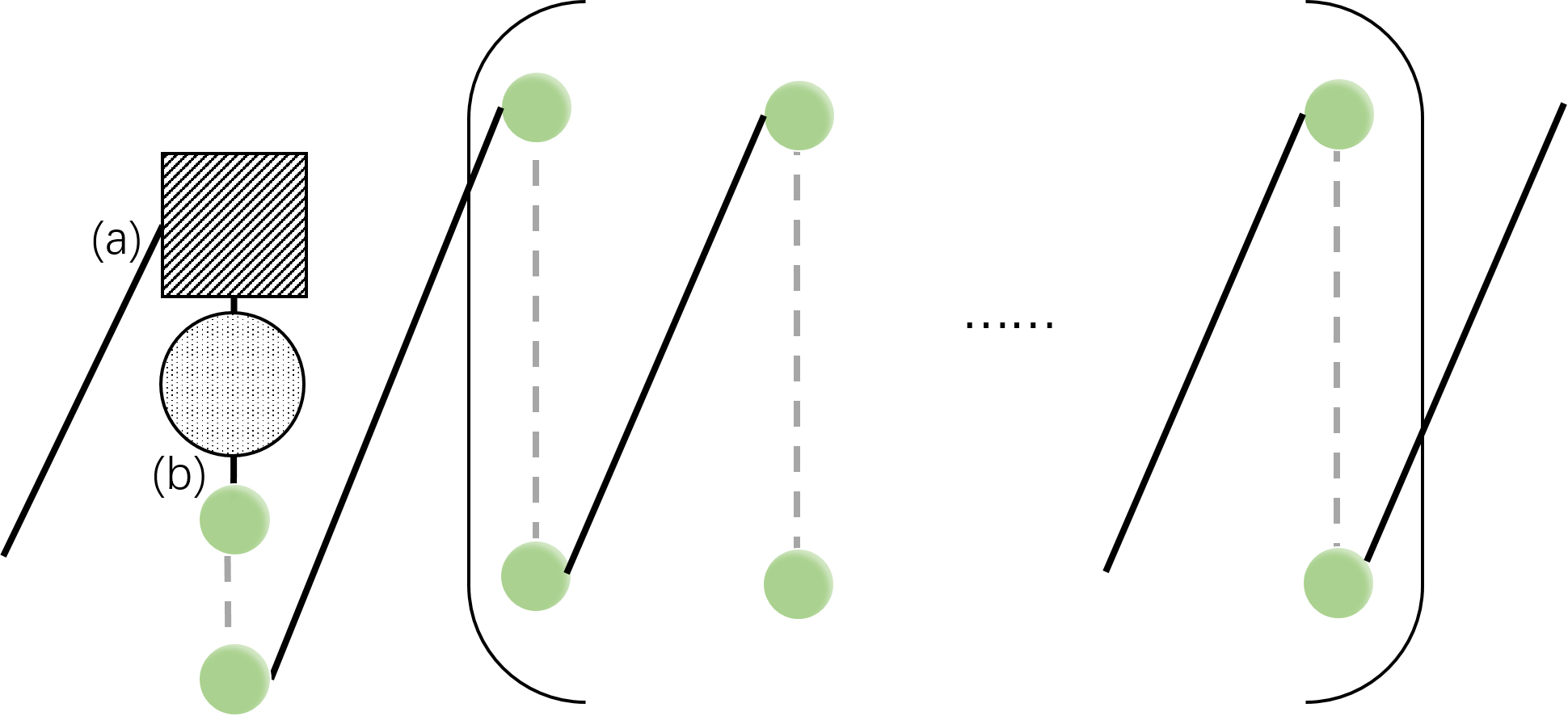}{0.3},
\end{equation}
where (a) denotes the possible connection of the four dangling lines  on the black square, and (b) denotes the possible connection of the four lines between the green circle and the grey circle.  The site hosting the derivative parameter is expressed as follows:
\begin{equation}\label{u+symbol}
    \ipic{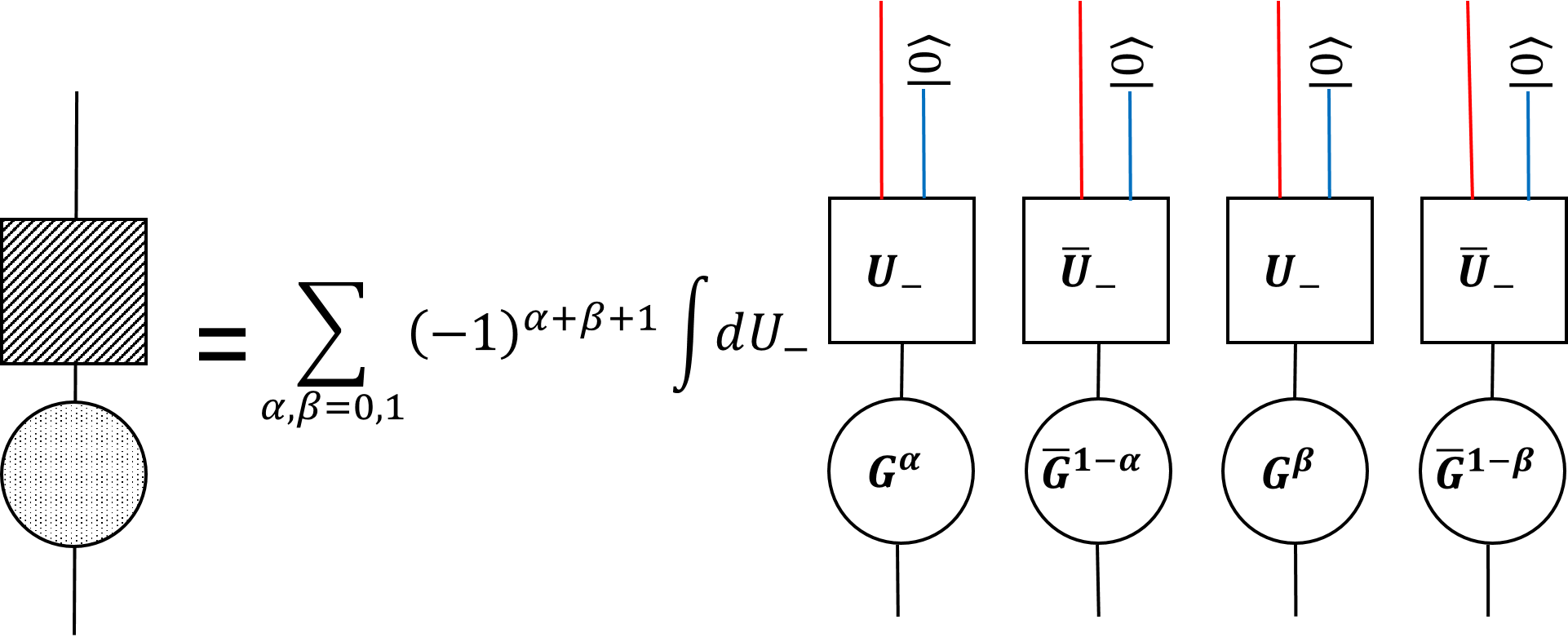}{0.3}.
\end{equation}
Similar to our previous discussion, when (a) hosts the $S$ connection  and (b) hosts the $A$ connection, the graph in Eq.~(\ref{u+symbol}) is depicted as:
\begin{equation}
    \ipic{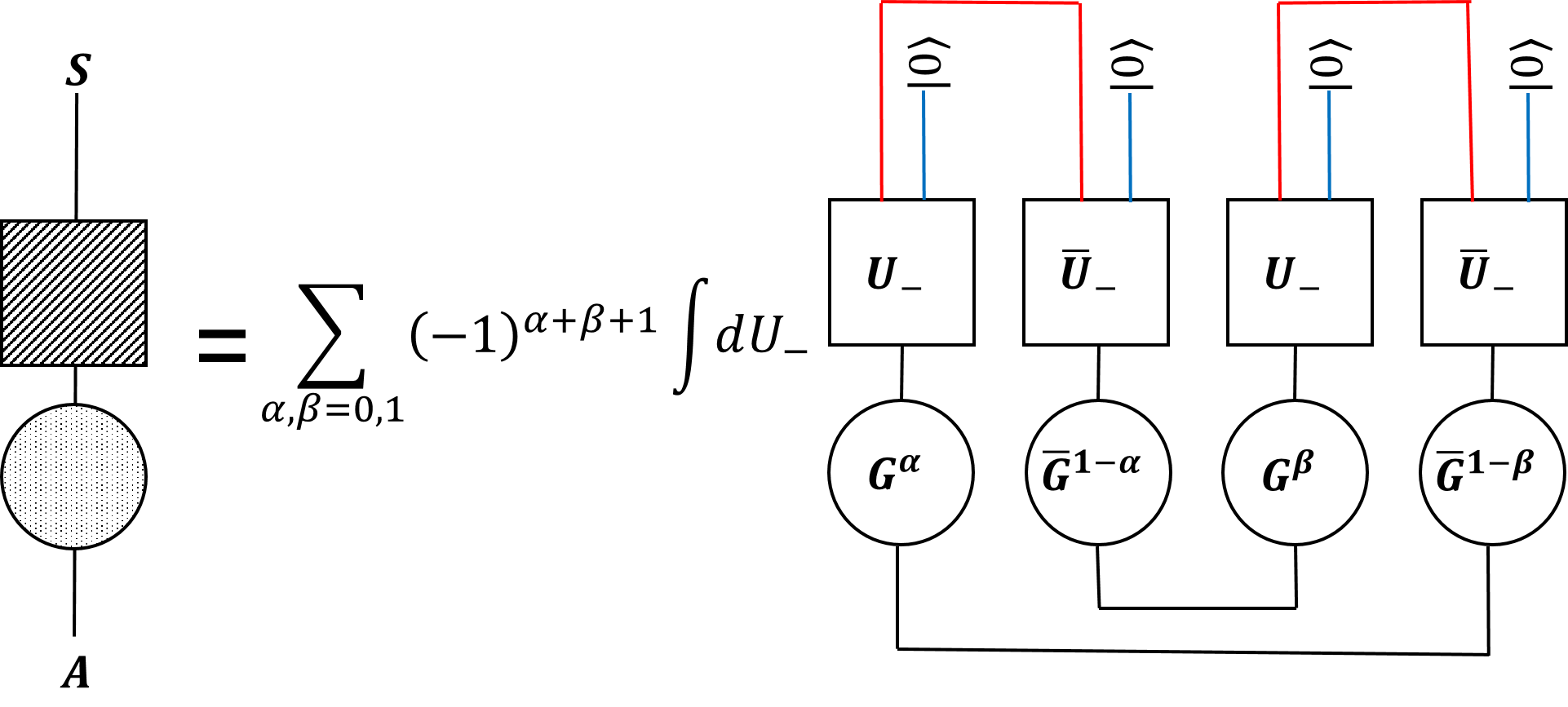}{0.3},
\end{equation}
which leads to a non-zero constant value:
\begin{equation}
    C_{g2} = \int dU_- 2[\text{Tr}(\rho G^2 \rho) - \text{Tr}((\rho G)^2)],
\end{equation}
where $\rho = U_- |0\rangle \langle 0| U_-^\dagger$.

When both  (a) and (b) hosts the $A$ connections, the graph in Eq.~(\ref{u+symbol}) is depicted as:
\begin{equation}
    \ipic{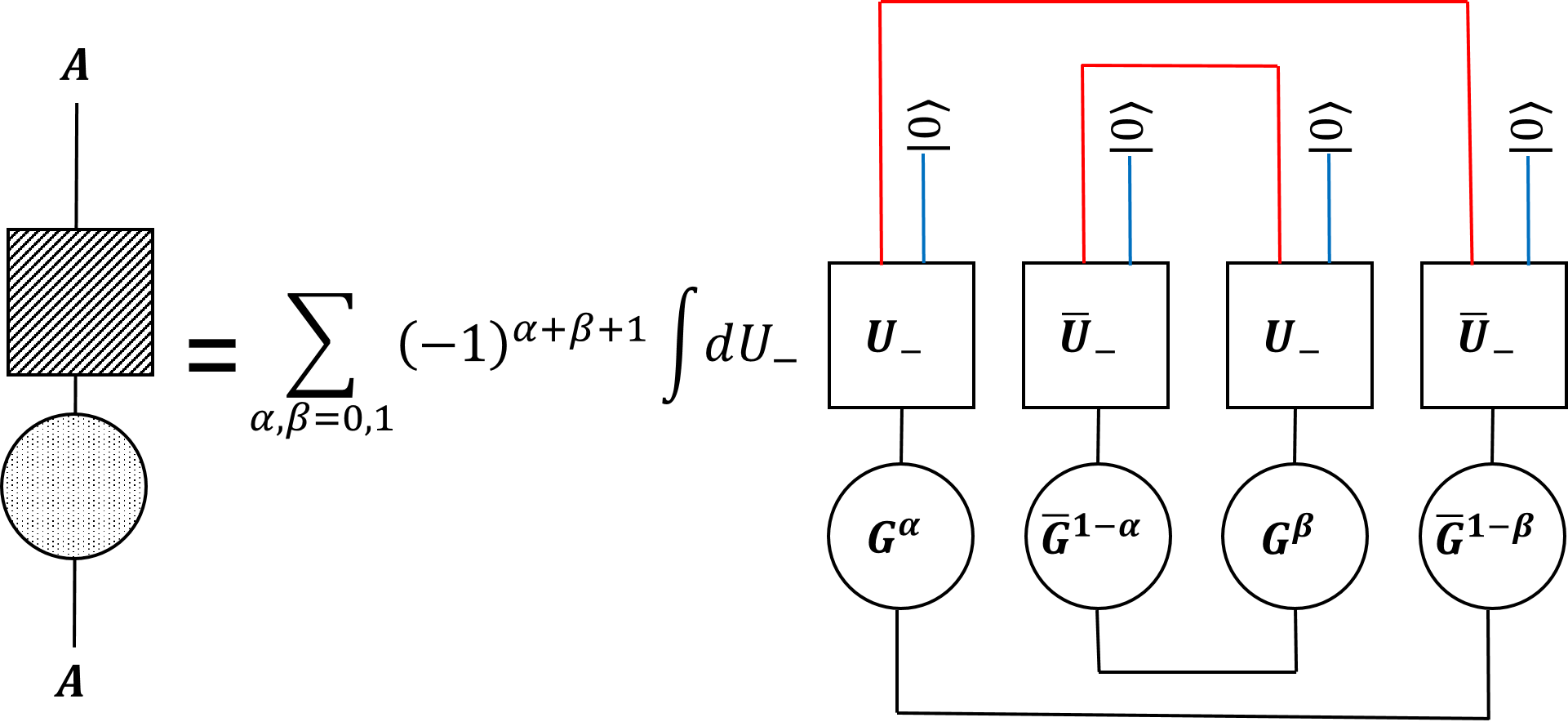}{0.3}
\end{equation}
which leads to a non-zero constant value:
\begin{equation}
    C_{g3} = \int dU_- 2[\text{Tr}(G^2 \rho) Dd - \text{Tr}(G\rho)^2].
\end{equation}
Thus the variance of the derivative of the global loss function  $\text{Var} (\partial_k \mathcal{L}_g)$  is bounded by:
\begin{equation}\label{var_global_g2_g3}
   \text{Var} (\partial_k \mathcal{L}_g)= \langle (\partial_k \mathcal{L}_g)^2\rangle \leq f(n)  \text{Max}(C_{g2}, C_{g3})
\end{equation}
with a $n$-dependent function 
\begin{equation}
    f(n) =  2 \frac{D d -1}{[(Dd)^2 -1]Dd} \frac{(1+\frac{1}{D})^{n-1}(1+\frac{1}{Dd})^{n-1}}{(d^2 - \frac{1}{D^2})^{n-1}}.
\end{equation}

In the last case, we consider that  both $U_-$ and $U_+$ form the unitary $2$-design. Following the Cauchy-Schwarz inequality in Lemma~\ref{CS_Inequality},  then the  Eq.~(\ref{graph_global}) is upper bounded by:
\begin{equation}
\ipic{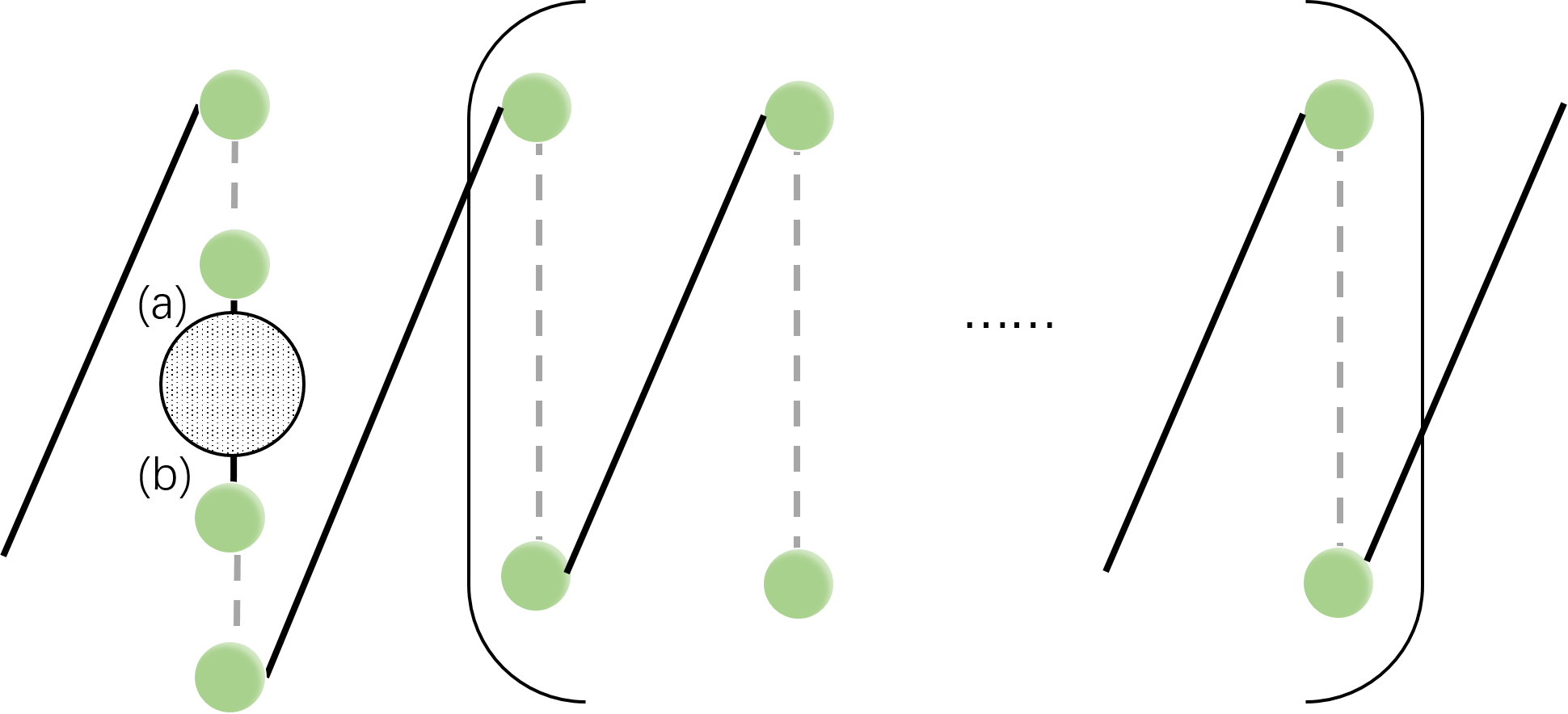}{0.3}
\end{equation}
where (a) denotes the possible connection of the four lines upon the grey circle, and (b) denotes the possible connection of the four lines upon the grey circle.

where the term on the derivative site is:
\begin{equation}
\ipic{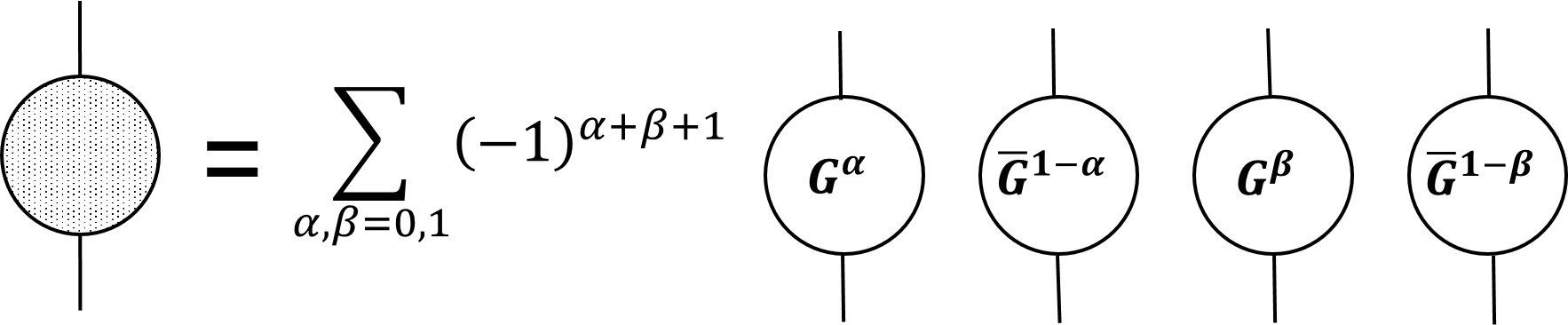}{0.3}.
\label{u+u-symbol}
\end{equation}
We find that only in the case that both (a)  and (b) take the $A$ connections, where the graph in Eq.~(\ref{u+u-symbol}) can be depicted as:
\begin{equation}\label{2designU+U-AA}
    \ipic{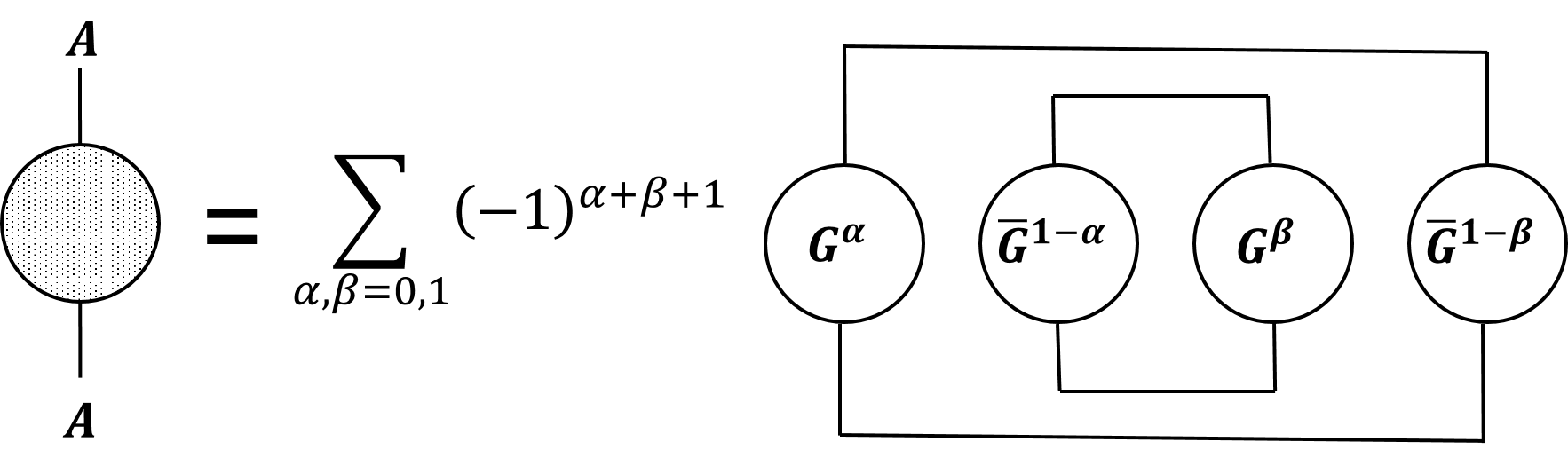}{0.3},
\end{equation}
the integration in Eq.~(\ref{2designU+U-AA}) takes the non-zero constant value:
\begin{equation}
    C_{g4} = 2\text{Tr}(G^2) Dd - 2 \text{Tr}(G)^2,
\end{equation}
and the variance  of the derivative of the global loss function  $\text{Var} (\partial_k \mathcal{L}_g)$  is bounded by
\begin{equation}\label{var_global_g4}
    \text{Var} (\partial_k \mathcal{L}_g)=\langle (\partial_\theta \mathcal{L}_g)^2\rangle \leq 2 
    \left[\frac{D d -1}{[(Dd)^2 -1]Dd}\right]^2 \frac{(1+\frac{1}{D})^{n-1}(1+\frac{1}{Dd})^{n-1}}{(d^2 - \frac{1}{D^2})^{n-1}} C_{g4}.
\end{equation}

 Based on the results in Eqs.~(\ref{figure36}, \ref{var_global_g2_g3}, \ref{var_global_g4}), where $C_{g(1,2,3,4)}$ are constant terms, one easily obtain that variance of the  of the derivative of the global loss function  $\text{Var} (\partial_k \mathcal{L}_g)$  is dominantly bounded  by the term $\frac{(1+\frac{1}{D})^{n-1}(1+\frac{1}{Dd})^{n-1}}{(d^2 - \frac{1}{D^2})^{n-1}}$, which decreases exponentially with respect to the system size $n$ for $d>1$ and $D>1$. 
 
 With the exponentially vanishing $\text{Var} (\partial_k \mathcal{L}_g)\sim \mathcal{O}(d^{-n})$, together with the average value of the term 
$\langle \frac{ \partial\mathcal{L}_g}{\partial \theta_k}\rangle =0$ in Eq.~(\ref{average_global_0}), we can show that
\begin{equation}\label{Prob:global}
{\rm Pr}\left(|\partial_k^{(i)}\mathcal{L}_g|>\epsilon\right)\leq \epsilon^{-2} \mathcal{O}(d^{-n})
\end{equation}
based on the Chebyshev's inequality, where $\rm Pr(\cdot)$ represents the probability. The result indicates the presence of the barren plateaus in the training process of  the tensor-network based machine learning models with respect to the global loss functions.

This completes the proof of the Theorem 1 in the main manuscript.

\section{Proof of theorem 2}
The general form of the local loss function can be written as:
\begin{equation}
    \mathcal{L}_l = \sum_i \langle \psi(\Theta) | \hat{O}_i |\psi(\Theta)\rangle, 
\end{equation}
where $\hat{O}_i$ is a local operator. For  simplicity, here we only consider the case that the local operator  acts on the $m$-th site of the lattice. Namely, the loss function can be written as:
\begin{equation}
\label{eq_loss}
    \mathcal{L}_l = \langle \psi(\Theta) | \hat{I}^{\otimes (m-1)} \hat{O} \hat{I}^{\otimes (n-m)} | \psi(\Theta) \rangle,
\end{equation}
where the lattice has $n$ sites. The graphical illustration of Eq.~(\ref{eq_loss}) is:
\begin{equation}
\ipic{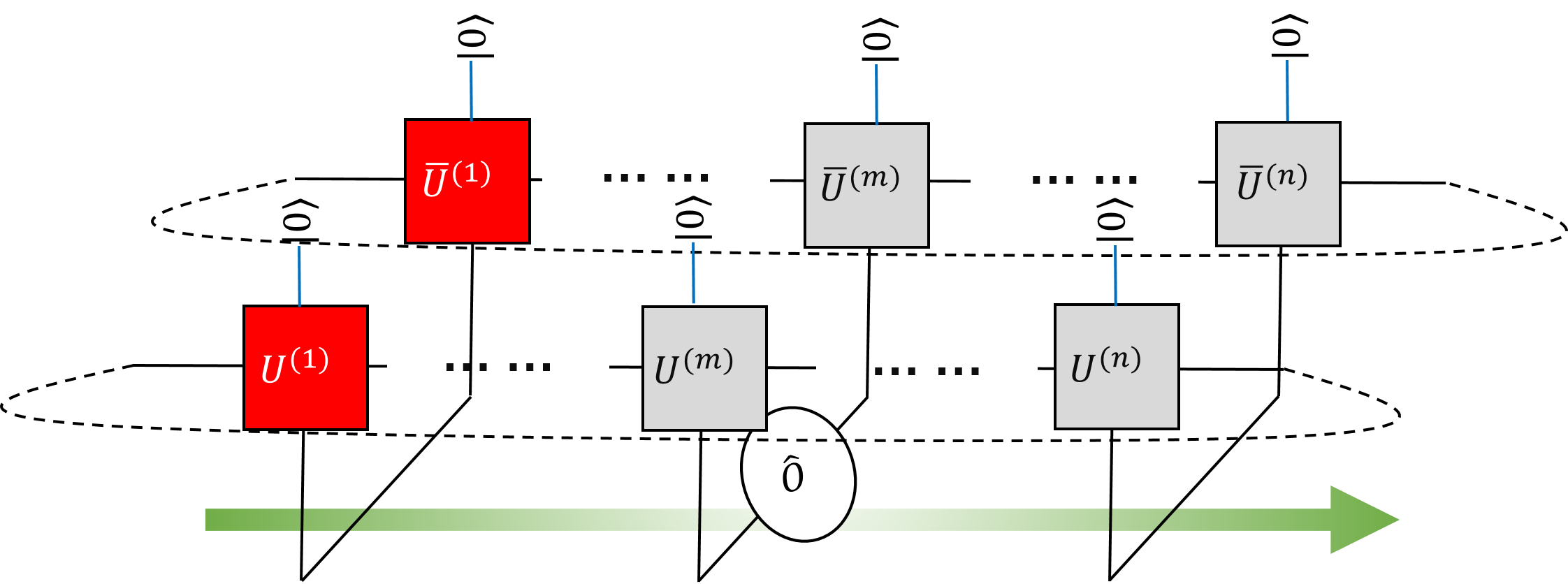}{0.3}.
\end{equation}
Without loss of generality, In our model, we consider the case that the first site hosts the derivative parameter (we denote such site by the derivative site) and the observable acts on the $m$-th site (we denote such site by the observable site). In the periodic condition, we take $m \leq \lfloor n/2 \rfloor$. We assume that the system size $n > 2$.  By using the unitary embedding techniques, our model can be identified as a quantum circuit that starts by the derivative site, passes through the observable site and loops over the sites in MPS states. 

In this proof, we divide  the calculation of the whole system  into three parts. The first part is the contraction over the derivative site. The second part is the contraction over the observable site, and the third part is the contraction over the self-connected sites. 

\subsection{Mean value of the derivative of local loss function}
\label{meanvalue_local}
Now we calculate the mean value  term $\langle \partial_k \mathcal{L}_l \rangle$. We first consider the off-site case (where the derivative site is not the same as the observable site).  Actually, to calculate the expectation value of the term $\langle \partial_k \mathcal{L}_l \rangle$,  we only need to care about the following term on the derivative site:
\begin{equation}
\begin{aligned}
\label{mean_local}
    \langle \partial_{\theta_k} (U(\Theta) U^\dagger(\Theta)) \rangle_{l_1,r_1} =& -\text{i}\langle U_+ G U_- |0\rangle \langle 0| U^\dagger_- U^\dagger_+\rangle_{l_1,r_1}\\
    & + \text{i} \langle U_+ U_- |0\rangle \langle 0| U^\dagger_- GU^\dagger_+\rangle_{l_1,r_1}.
\end{aligned}
\end{equation}
By integrating the nearest sites,  we obtain the graphical representation of Eq.~(\ref{mean_local}) as:
\begin{equation}
\label{mean_local_graph}
    \sum_{\alpha} (-1)^\alpha\int dU_- dU_+\ipic{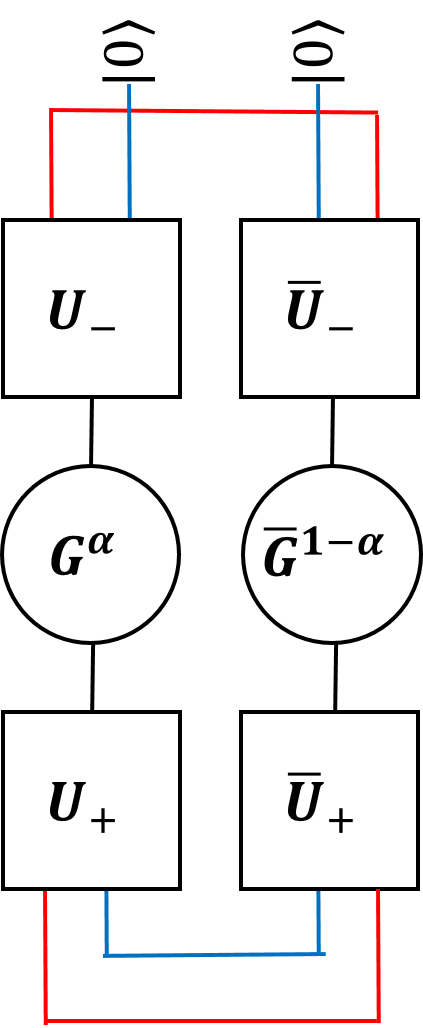}{0.3} = \int dU_- dU_+ \text{Tr}([G,\rho]) = 0,
\end{equation}
where we define $\rho = U_- |0\rangle \langle 0 | U^\dagger_-$. The above result indicates that the mean value  term $\langle \partial_k \mathcal{L}_l \rangle$ equals to zero for the off-site case. 

For the on-site case (where the derivative site is the same as the observable site),  we have the following expression on the derivative site:
\begin{equation}
\begin{aligned}
\label{mean_local_ob}
    \langle \partial_{\theta_k}( U(\Theta)\hat{O}U^\dagger(\Theta)) \rangle_{l_1,r_1} =& -\text{i}\langle U_+ G U_- |0\rangle \langle 0| U^\dagger_- U^\dagger_+ \hat{O} \rangle_{l_1,r_1}\\
    & + \text{i} \langle U_+ U_- |0\rangle \langle 0| U^\dagger_- GU^\dagger_+\hat{O}\rangle_{l_1,r_1}.
\end{aligned}
\end{equation}
By integrating the nearest sites, the graph representation of Eq.~(\ref{mean_local_ob}) is:
\begin{equation}
    \label{mean_local_ob_graph}
    \sum_{\alpha} (-1)^\alpha\int dU_- dU_+\ipic{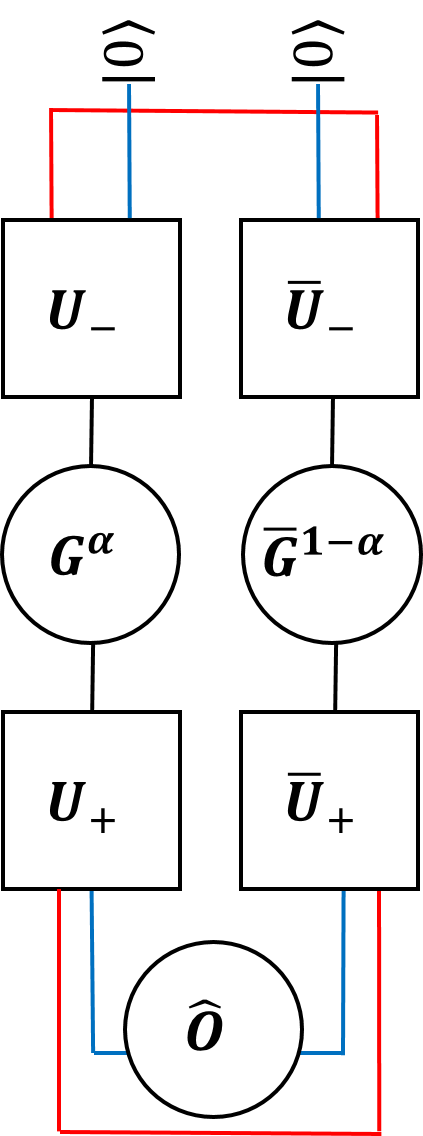}{0.3} .
\end{equation}
With the condition that either $U_-$ or $U_+$ forms the unitary $2$-design, similar to the calculation for global loss function, the sum of the integration with different values of $\alpha$ is $0$. So that the expectation value of derivative vanishes in the local loss function case. 

\subsection{Variance of the derivative of local loss function: off-site case}
\label{variance_local_offsite}


We first focus on the case where the derivative and the local operator are  applied on different site. Similar to our discussion in Sec.\ref{variance_derivative_glf}, in the derivative site, the variance of derivative can be graphically depicted as Eq.~(\ref{quadratic_derivative}).

Let us consider the case that the $U_-$ forms unitary $2$-design, where we integrated the $U_-$ part. We take the connection at the upper side of the generator $G$ as symmetric relations. In the graphical representation, the integration is:
\begin{equation}
\ipic{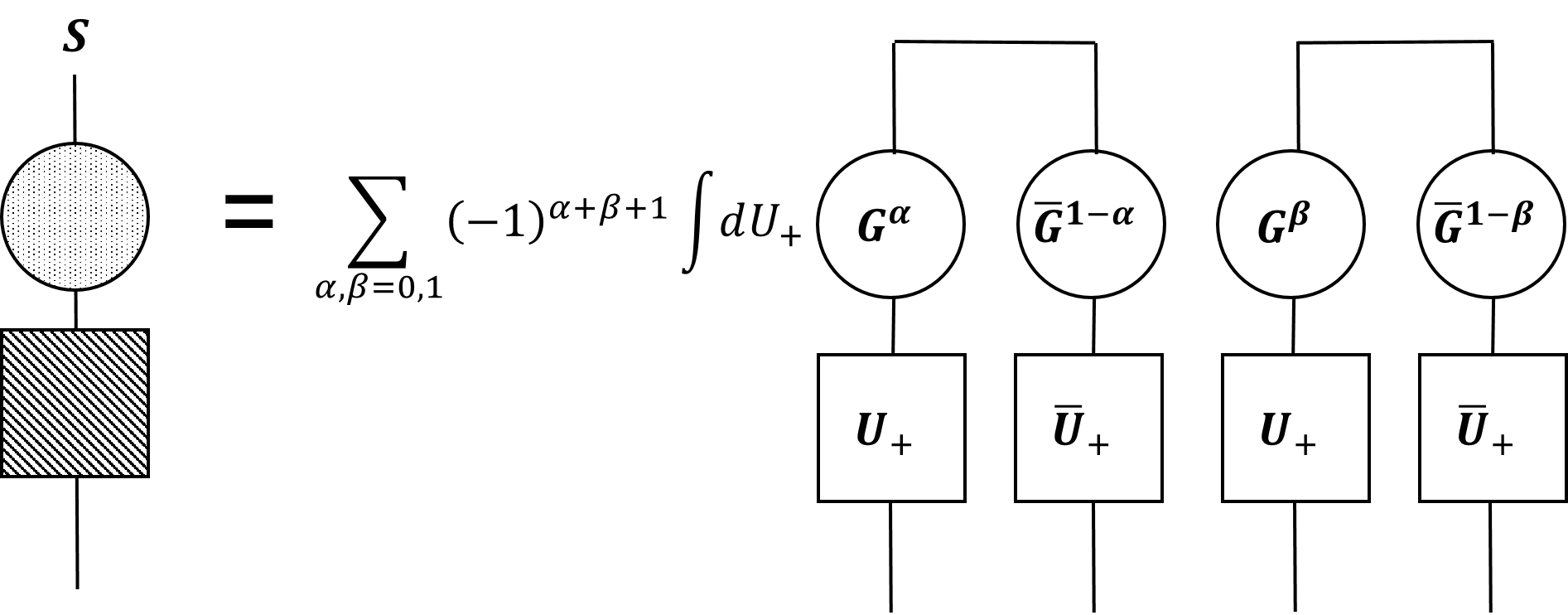}{0.3}.
\label{2design_upS}
\end{equation}
It is easy to see that for the integrands, the different combinations of the $\alpha$ and $\beta$ will generate the same structure, which will cancel with each other as sum over all possible values of $\alpha$ and $\beta$. 

The second case is the anti-symmetric connection at the upper index of the generator $G$, which will lead to the following two different types:
\begin{equation}
\ipic{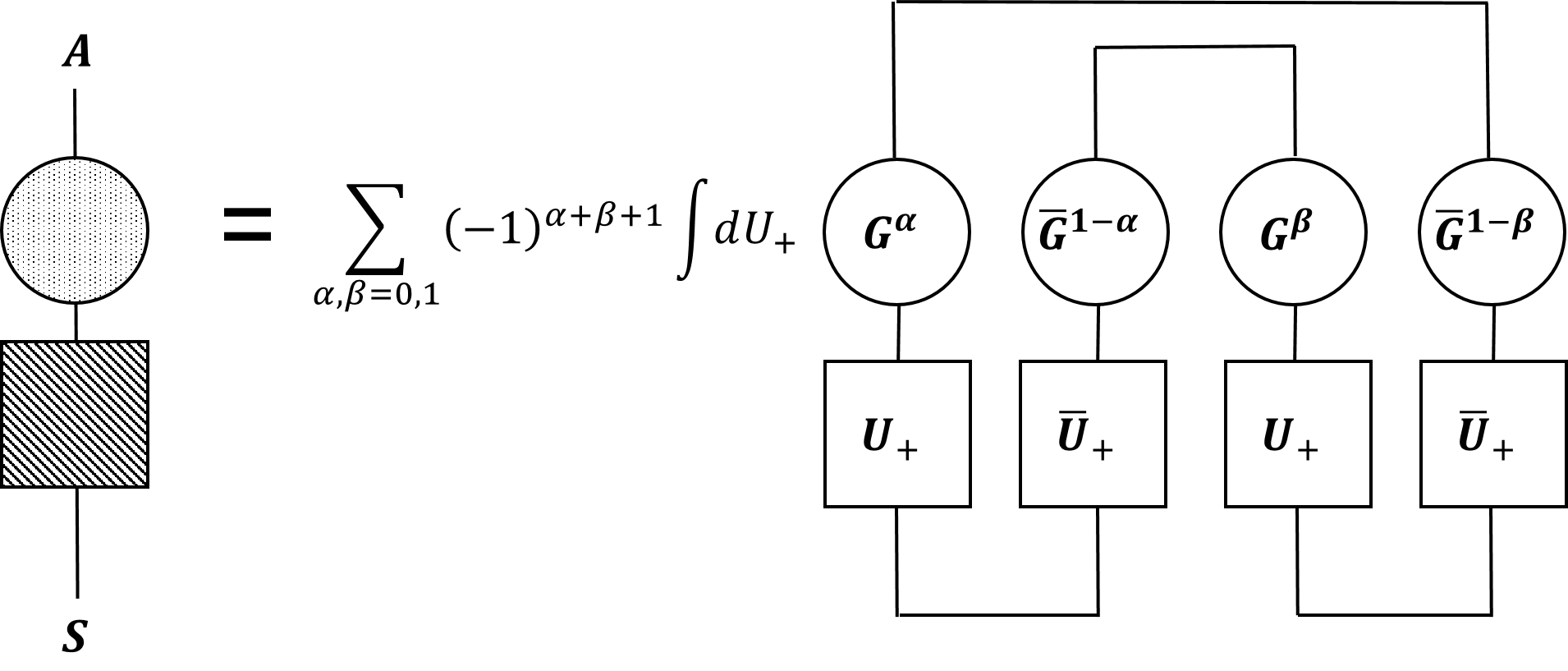}{0.3},
\end{equation}
and
\begin{equation}
\ipic{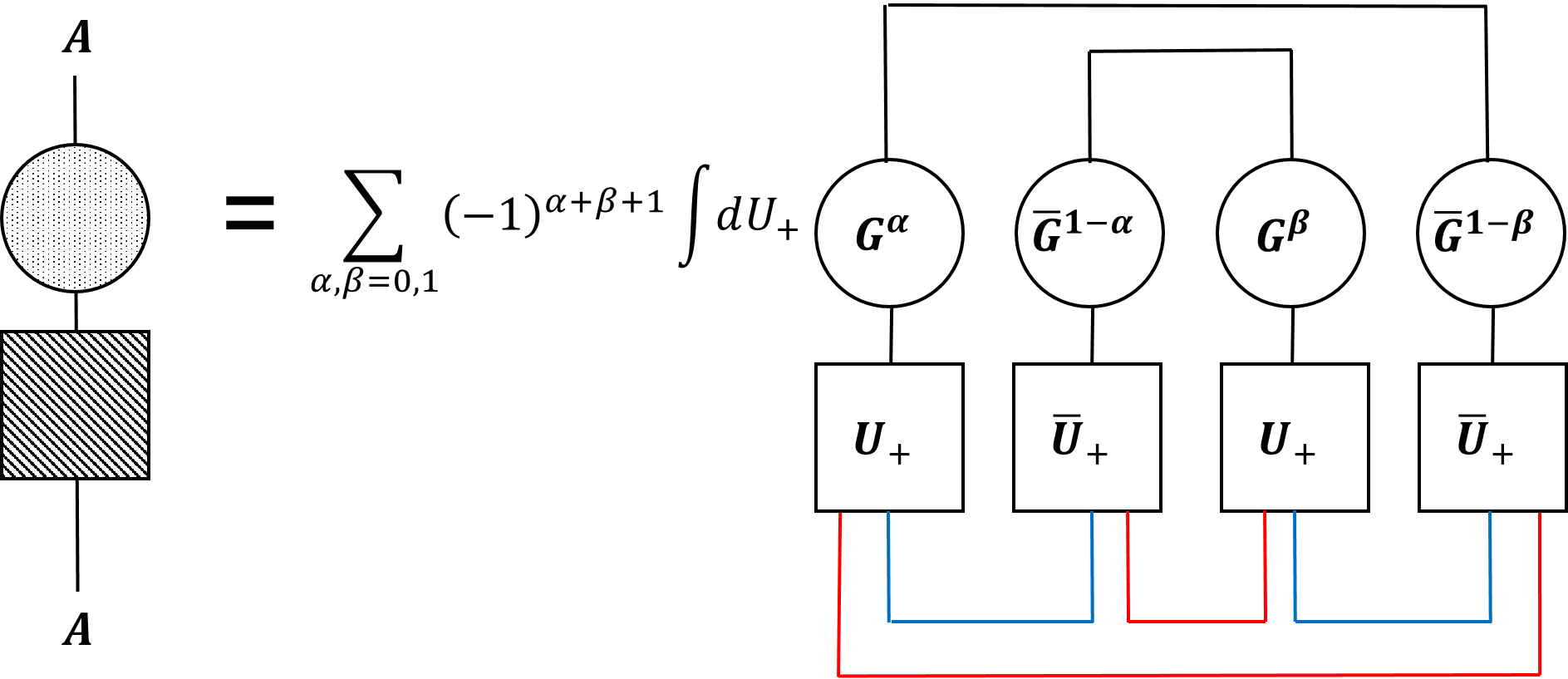}{0.3}.
\end{equation}
In the first connection type, if we
sum over the integrands with all possible values of $\alpha$ and $\beta$, the result will be:
\begin{equation}
    2\text{Tr}(G U_+ U^\dagger_+ G U_+U^\dagger_+) - 2\text{Tr}(G U_+ U^\dagger_+ U_+ U^\dagger_+ G) = 0.
\end{equation}

In the second type, if we sum over all the possible values of $\alpha$ and $\beta$, we obtain the contraction result is:
\begin{equation}
    C_1 = \int dU_+ 2[ D \text{Tr}_d(\text{Tr}_D(U_+ G^2 U^\dagger_+)) -  \text{Tr}_d(\text{Tr}_D(U_+ G U^\dagger_+)^2)].
\end{equation}

Then, let us consider the case that $U_+$ forms unitary $2$-design.  After integrated the $U_+$ part, we first take the symmetric connection at the bottom index of the generator $G$, which will generate the following structure:
\begin{equation}
\ipic{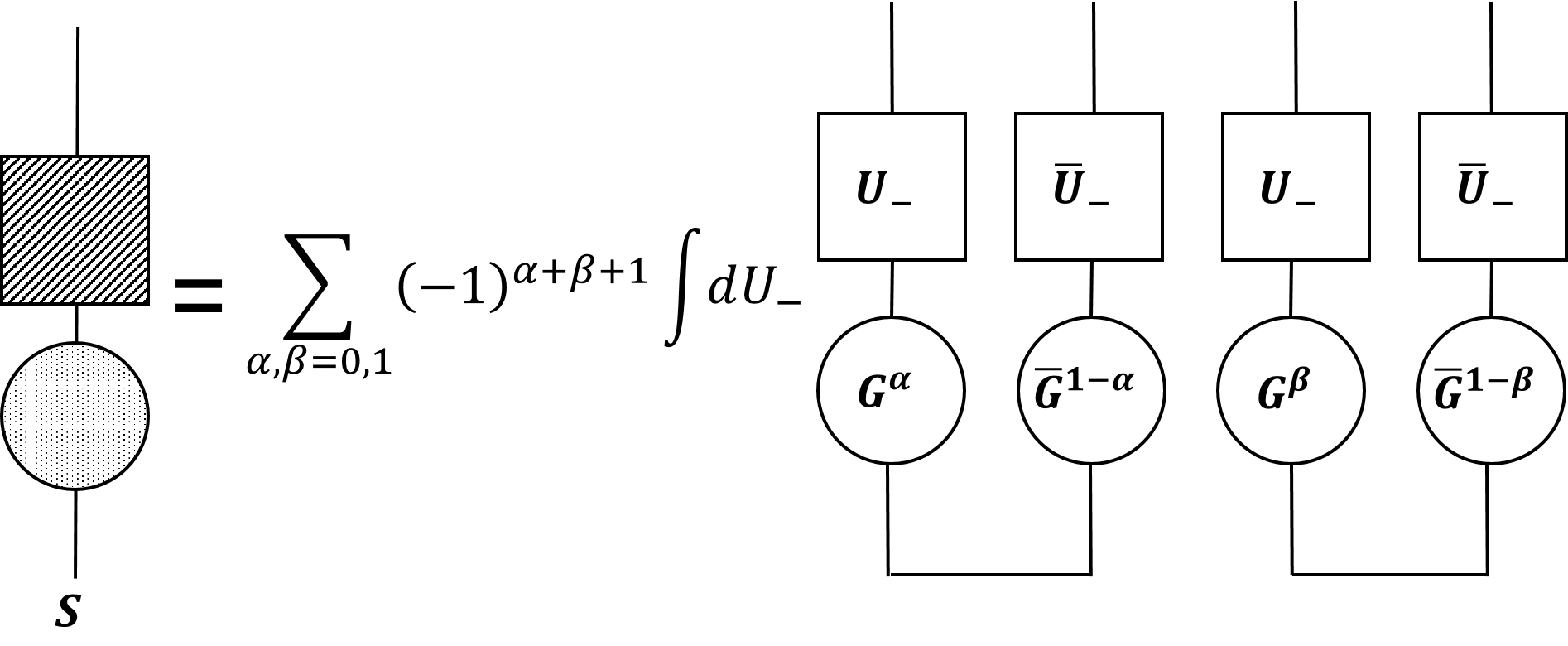}{0.3}
\end{equation}
same as our previous discussion in Eq.~\ref{2design_upS}, this term equals $0$ as we sum over the integrands with all possible values of $\alpha$ and $\beta$. 

And for the anti-symmetric connection at the bottom of the generator $G$, we can have the following structures:
\begin{equation}
\ipic{2design_U+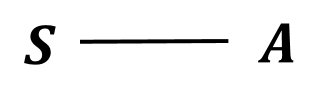}{0.3}
\end{equation}
After contraction of all the indexes, the output is:
\begin{equation}
    C_2 = \int dU_- 2[ \text{Tr}(\rho G^2 \rho)-\text{Tr}(\rho G \rho G)],
\end{equation}
where we denoted $U_- |0\rangle \langle 0 | U^\dagger_-$ as $\rho$. 
\begin{equation}
\ipic{2designU+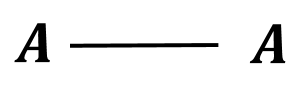}{0.3}
\end{equation}
we can obtain the output as:
\begin{equation}
    C_3 = \int dU_- 2 [Dd \text{Tr}(\rho G^2) - \text{Tr}(G \rho)^2].
\end{equation}

The third case is $U_-$ and $U_+$ are all form unitary $2$-design, at this case, symmetric connection of either side of indexes in the generator as:
\begin{equation}
\ipic{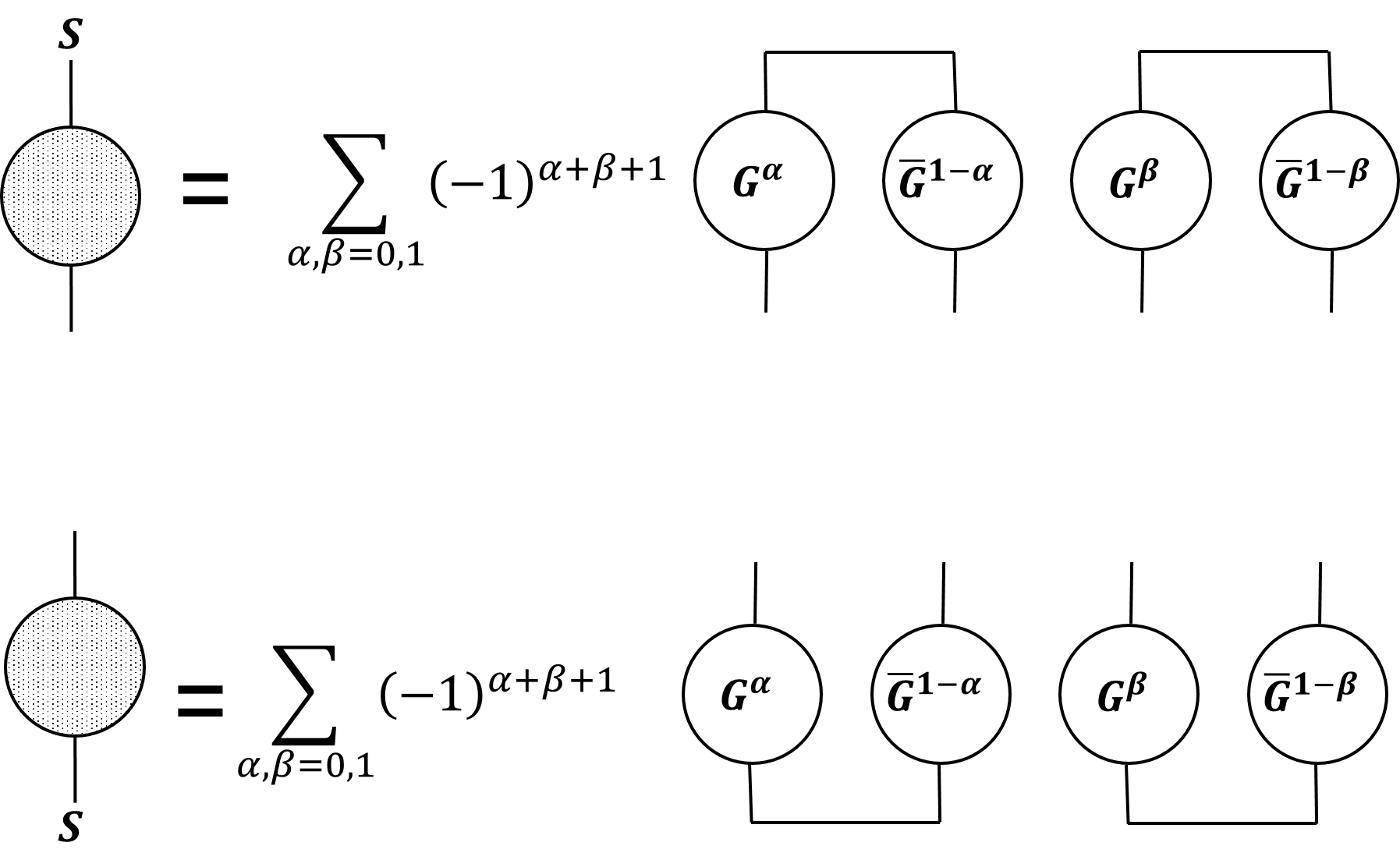}{0.3}
\end{equation}
will equal to $0$.

If the upper and bottom indexes of generator are all anti-symmetric connection, it will yield:
\begin{equation}
\ipic{2designU+U-AA.png}{0.3}
\end{equation}
with the output result:
\begin{equation}
    C_4 = 2[ \text{Tr}(G^2) D d -  \text{Tr}(G)^2]
\end{equation}
We find that only the anti-symmetric connection of the bottom indexes can obtain the non-zero solution.

Then, let us consider the observable site. The integration of the local operator site can be graphically draw as:
\begin{equation}
\ipic{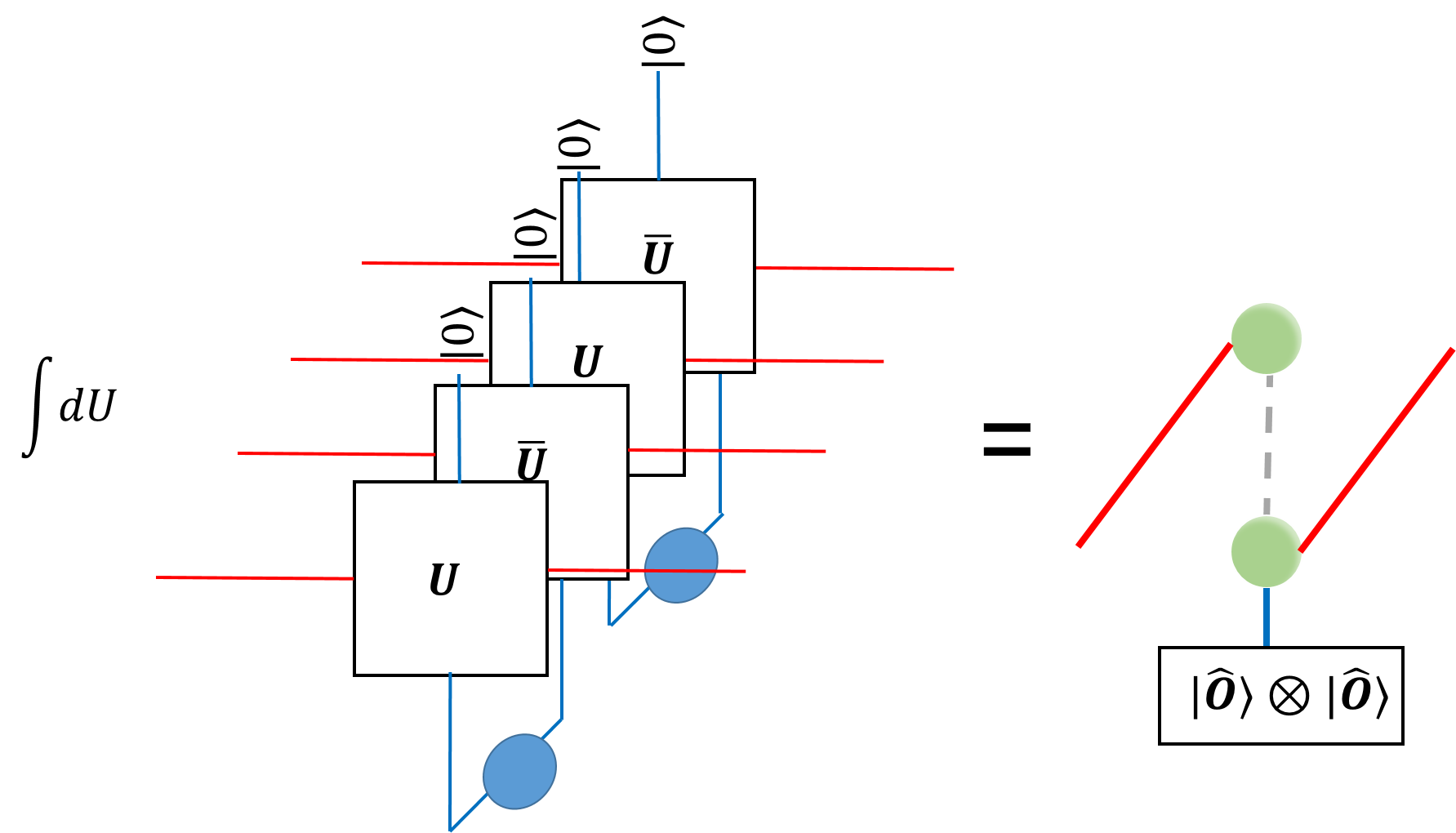}{0.3},
\end{equation}
where we use the choi-jamiolkowski isomorphism and graphically represent the local operator:
\begin{equation}
\ipic{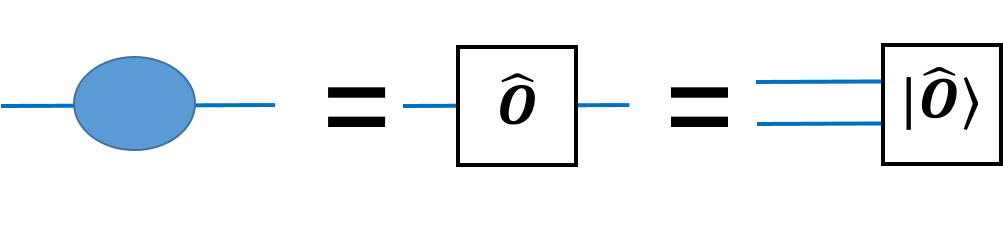}{0.3}.
\end{equation}

It is easy to verify the following results:
\begin{equation}
\ipic{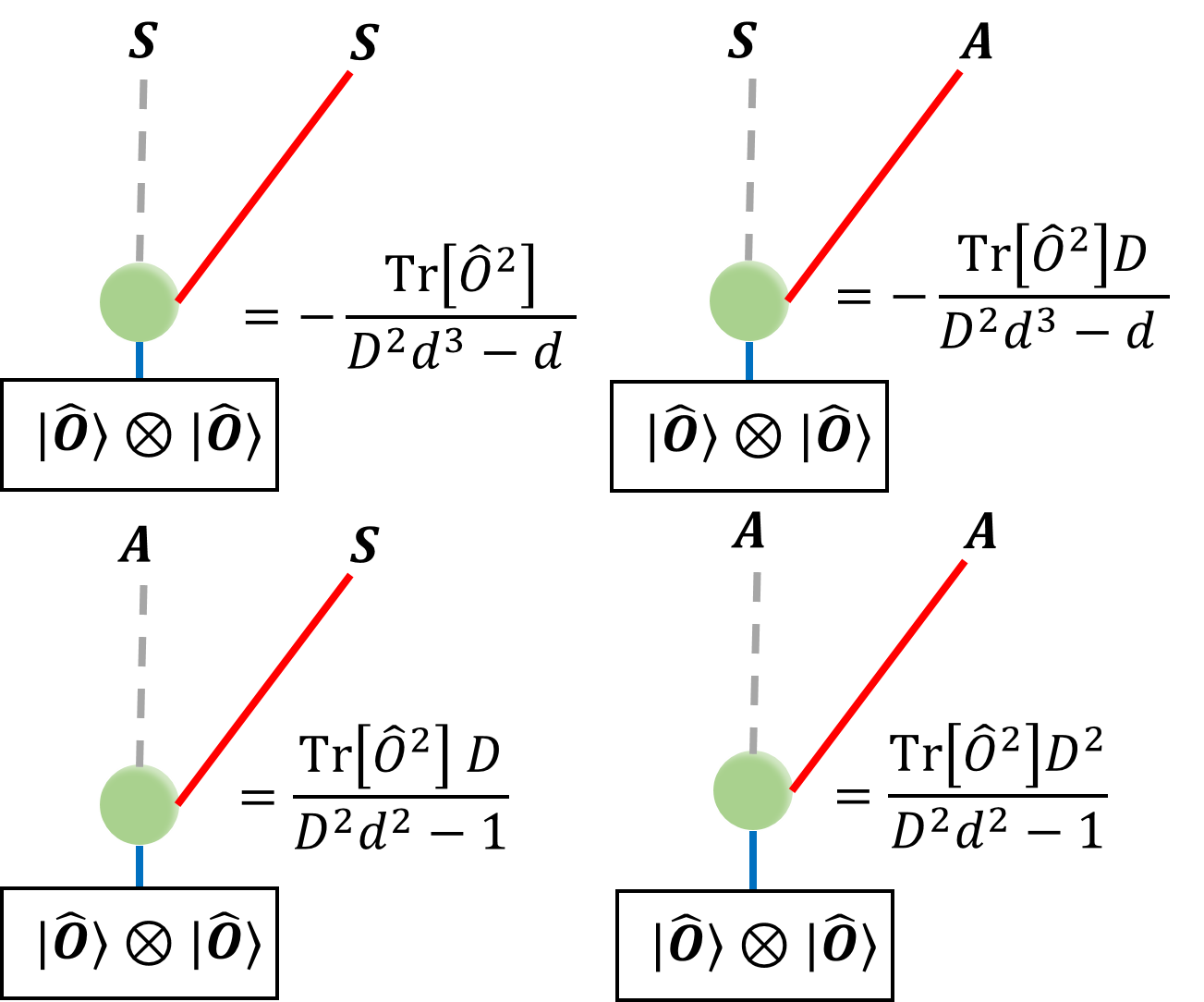}{0.3}
\end{equation}

For the self-connected sites, the integration will generate the form:
\begin{equation}
\ipic{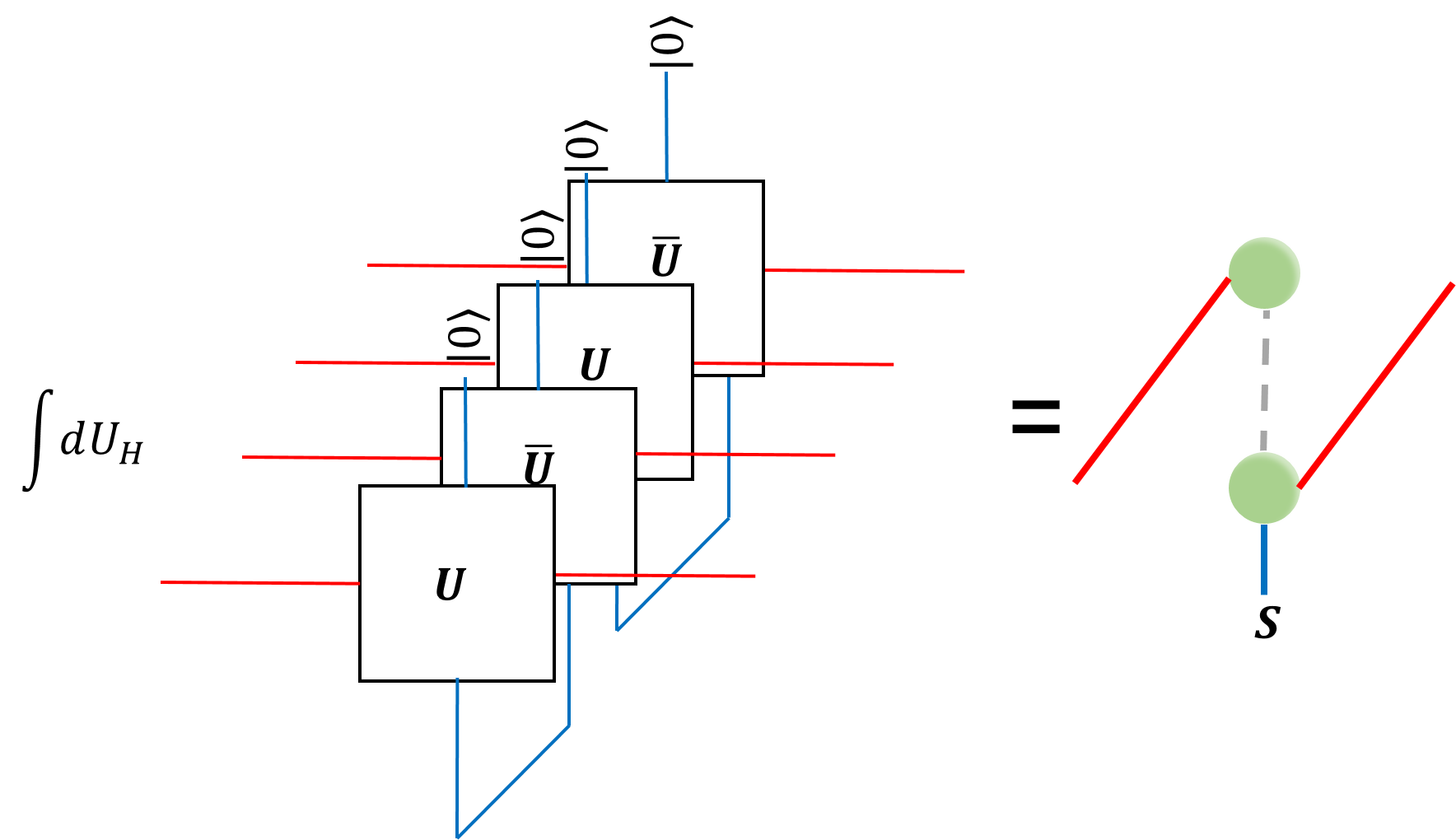}{0.3}
\end{equation}
 We have the following results:
\begin{equation}
\ipic{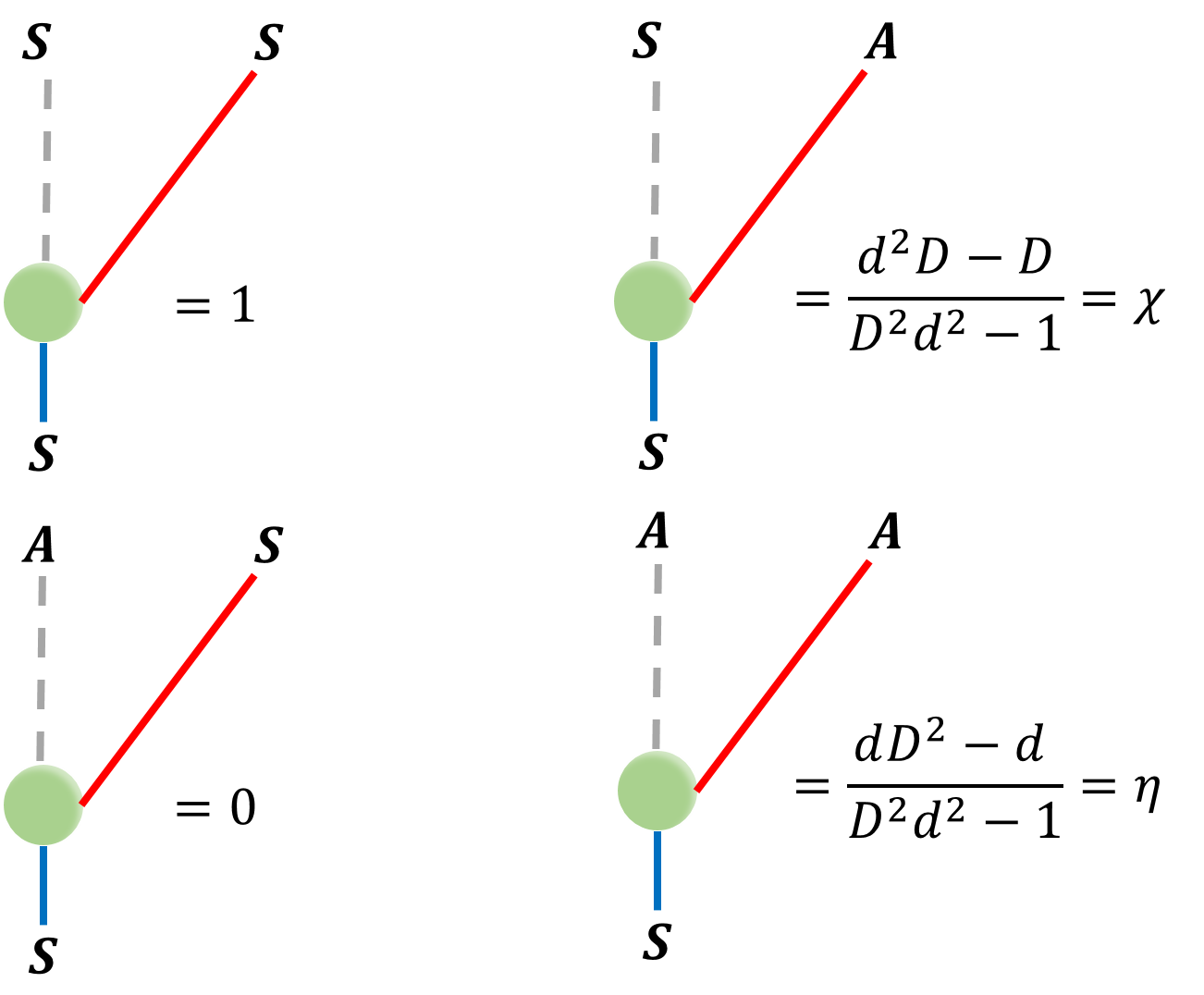}{0.3}.
\end{equation}
As we consider $L$ self-connected sites, the contraction result leads to:
\begin{equation}
\ipic{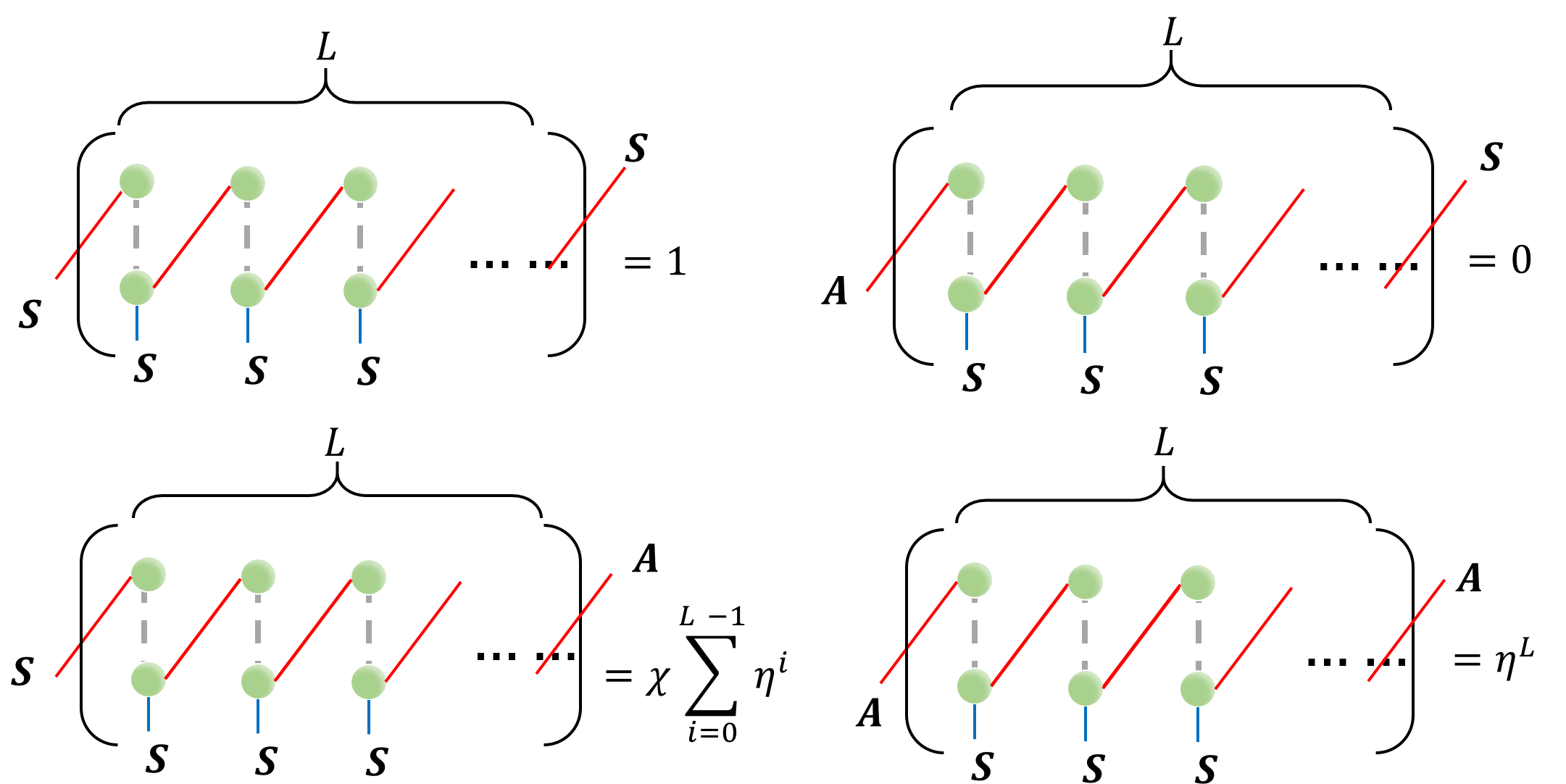}{0.3}
\end{equation}

Now, we combine the derivative site, observable site, and the self-connected sites together. In the case that $U_-$ forms unitary $2$-design, the variance of the derivative can be written as :
\begin{equation}
\label{overallu-}
    \ipic{overallu-}{0.3},
\end{equation}
where we define the distance $\Delta = m-1$.

According to our previous calculation, we obtain the following contraction result:
\begin{equation}
\label{outputoverallu-}
    \langle (\partial_\theta \mathcal{L}_l)^2\rangle =  C_1 \frac{\text{Tr}(\hat{O}^2)D \eta^{\Delta-1}}{(D^2 d^2-1)^2}
    \left[D \eta^{n-\Delta-1} + \frac{1-\eta^{n-\Delta-1}}{1-\eta} \chi -\frac{1}{D^2 d^2}\right].
\end{equation}


If $U_+$ forms unitary $2$-design,  the variance of the derivative can be written as:
\begin{equation}
    \ipic{overallu+}{0.3}.
\end{equation}
 The contraction result reads:
\begin{equation}
\label{outputoverallu+}
\begin{aligned}
       \langle (\partial_\theta \mathcal{L}_l)^2\rangle = & C_2\eta^\Delta\left[\frac{\text{Tr}(\hat{O}^2) D}{D^2 d^2 -1}\right]+\\
    &C_3\eta^{\Delta} \left[\frac{\text{Tr}(\hat{O}^2) D^2}{D^2 d^2 -1} \eta^{n-\Delta-1} + \frac{\text{Tr}(\hat{O}^2) D}{D^2 d^2 -1}\frac{1-\eta^{n-1}}{1-\eta}\chi\right].
\end{aligned}
\end{equation}

If both $U_+$ and $U_-$ form unitary $2$-design, the variance of derivative can be written as:
\begin{equation}
    \ipic{overallu-u+}{0.3},
\end{equation}
and the contraction result reads:
\begin{equation}
\label{outputoverallu-u+}
     \langle (\partial_\theta \mathcal{L}_l)^2\rangle =  C_4 \frac{\text{Tr}(\hat{O}^2)D \eta^{\Delta}}{(D^2 d^2-1)^2}
    \left[D \eta^{n-\Delta-1} + \frac{1-\eta^{n-\Delta-1}}{1-\eta} \chi -\frac{1}{D^2 d^2}\right].
\end{equation}

As we can see that in Eqs.~(\ref{outputoverallu-}, \ref{outputoverallu+}, \ref{outputoverallu-u+}),  all of them show an exponentially decay with respect to the distance $\Delta$ between the observable site and the derivative site. As we set the distance $\Delta$ as constant, the increasing $n$ will eventually converge to a constant number.  

\subsection{Variance of the derivative of local loss function: on-site case}
\label{variance_local_onsite}
Then, let us consider the on-site case (where the derivative site is the same as the observable site). As we assume $U_-$ forms unitary $2$-design, the integration over $U_+$ has the form:
\begin{equation}
\ipic{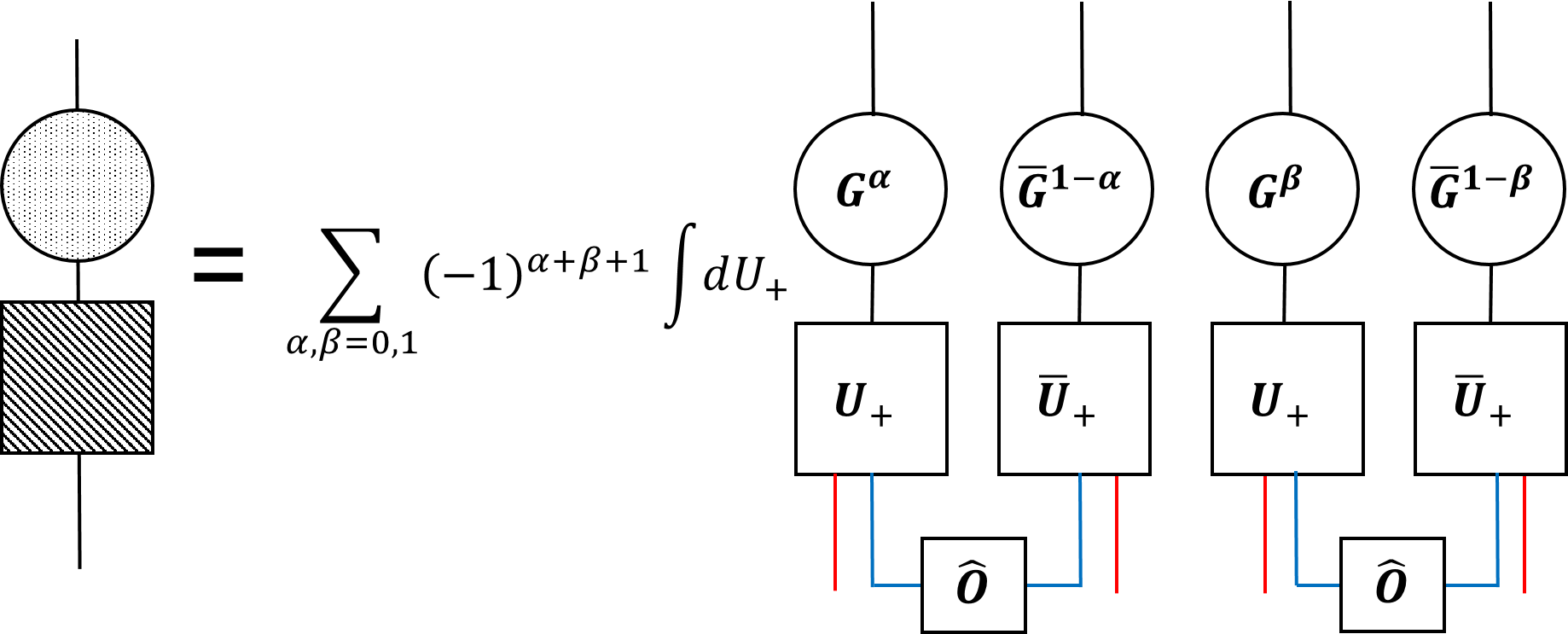}{0.3}
\end{equation}
Let us consider the four connection case as follows:
\begin{equation}
\ipic{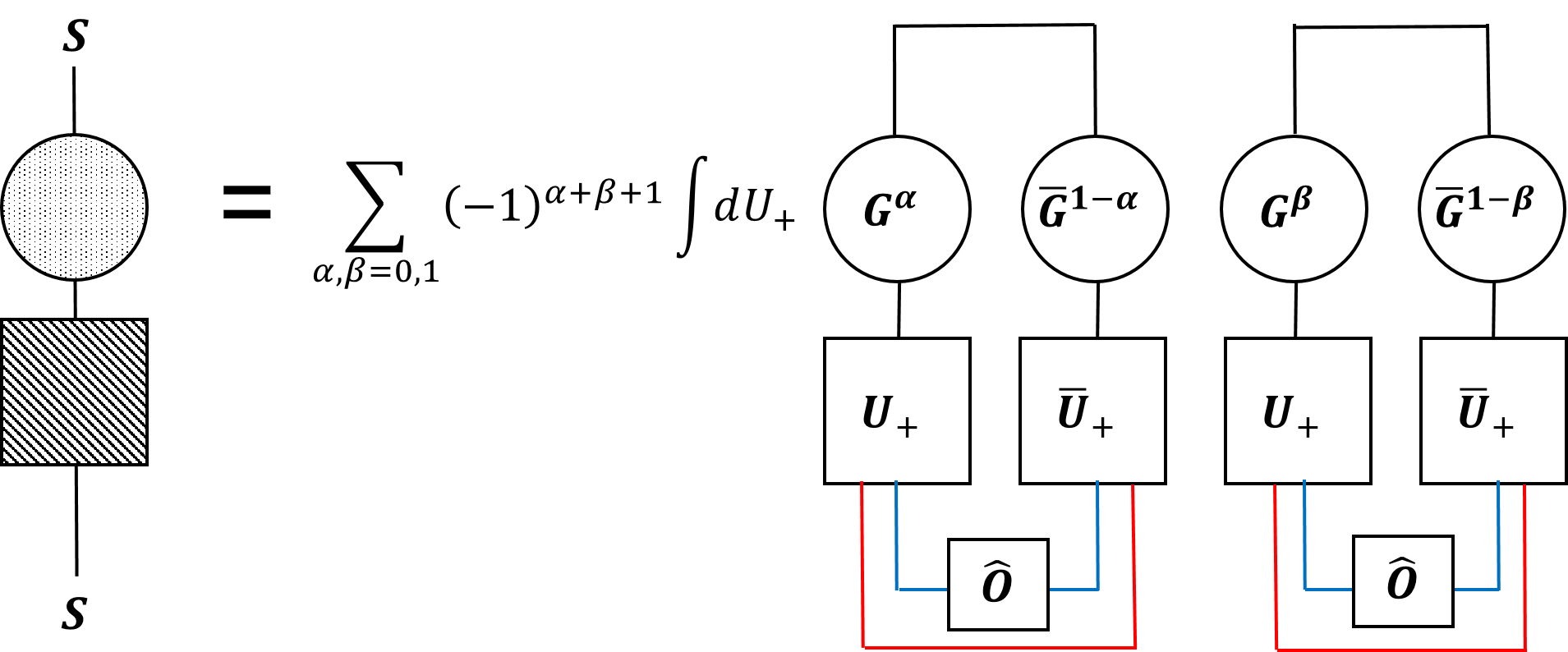}{0.3}
\end{equation}
\begin{equation}
\ipic{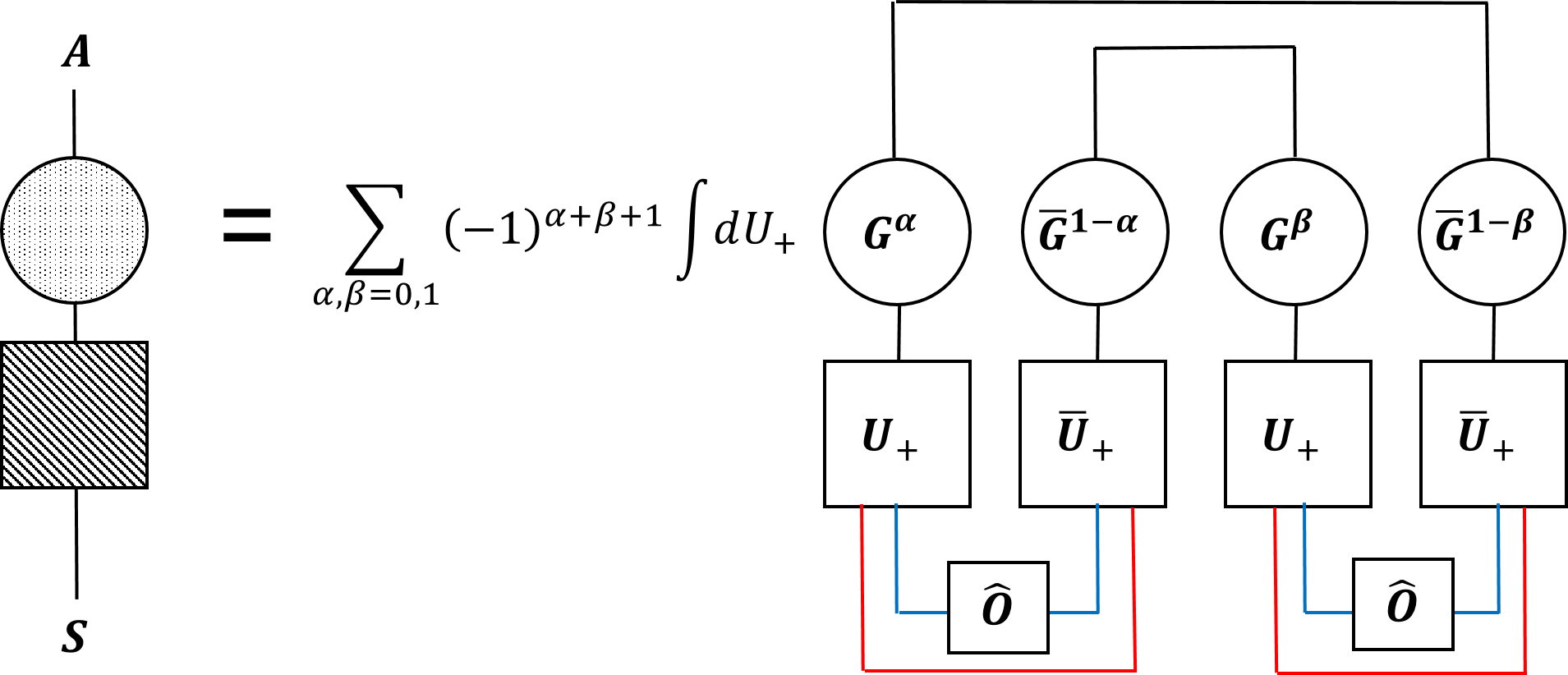}{0.3}
\end{equation}
\begin{equation}
\ipic{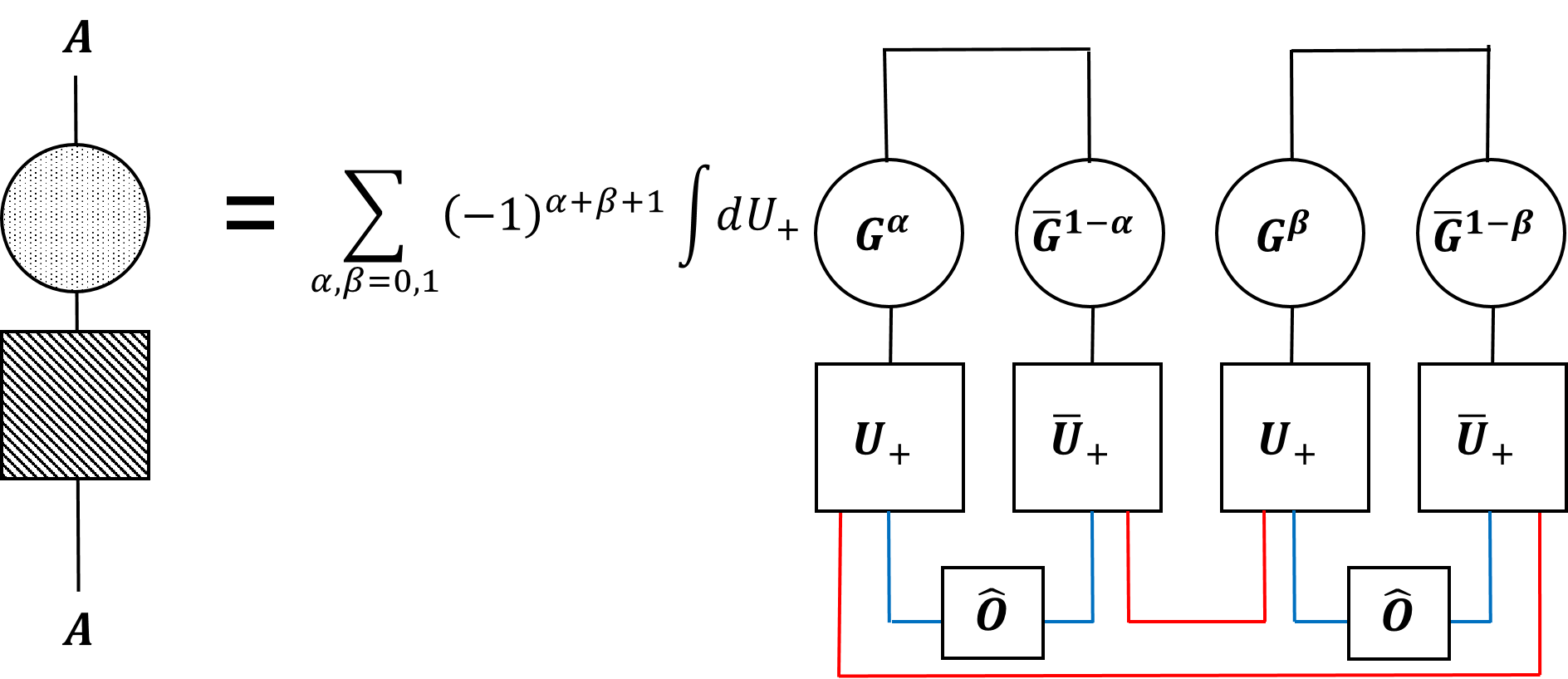}{0.3}
\end{equation}
\begin{equation}
\ipic{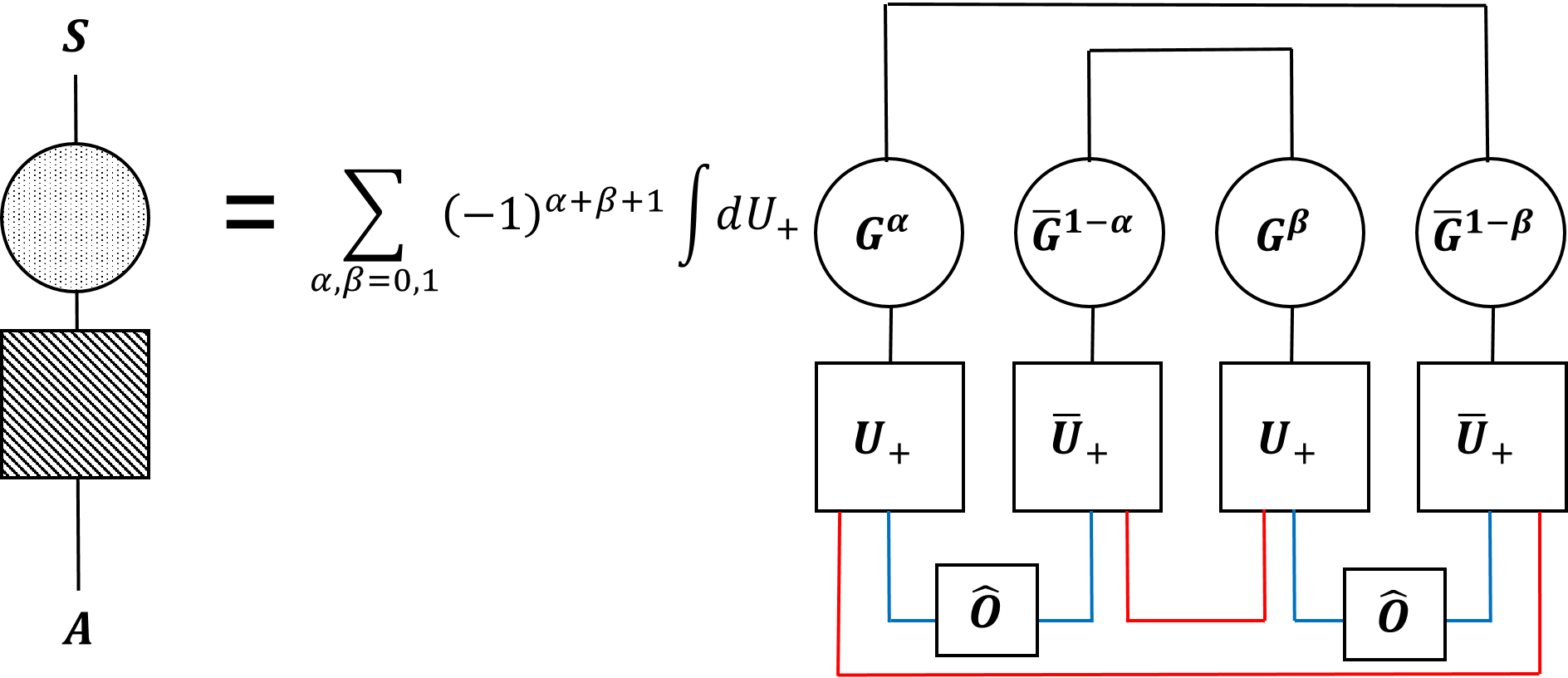}{0.3}
\end{equation}
The first and third connection cases are easy to see that it leads to $0$. The integration in the second connection case leads to:
\begin{equation}
    C_5 = \int dU_+ 2 [\text{Tr}(\sigma G^2 \sigma) - \text{Tr}(\sigma G \sigma G)]
\end{equation} 
where we denoted $\sigma = U_+ (\hat{O}\otimes I) U^\dagger_+$. In the last connection case, the integrated result leads to:
\begin{equation}
    C_6 = \int dU_+ 2 \text{Tr}_d\left\{D \text{Tr}_D\left[U_+ G^2 U^\dagger_+\right] \hat{O}^2-\left[\text{Tr}_D(U_+ GU^\dagger_+)\hat{O}\right]^2\right\}.
\end{equation}
The variance of derivative can be written as:
\begin{equation}
\ipic{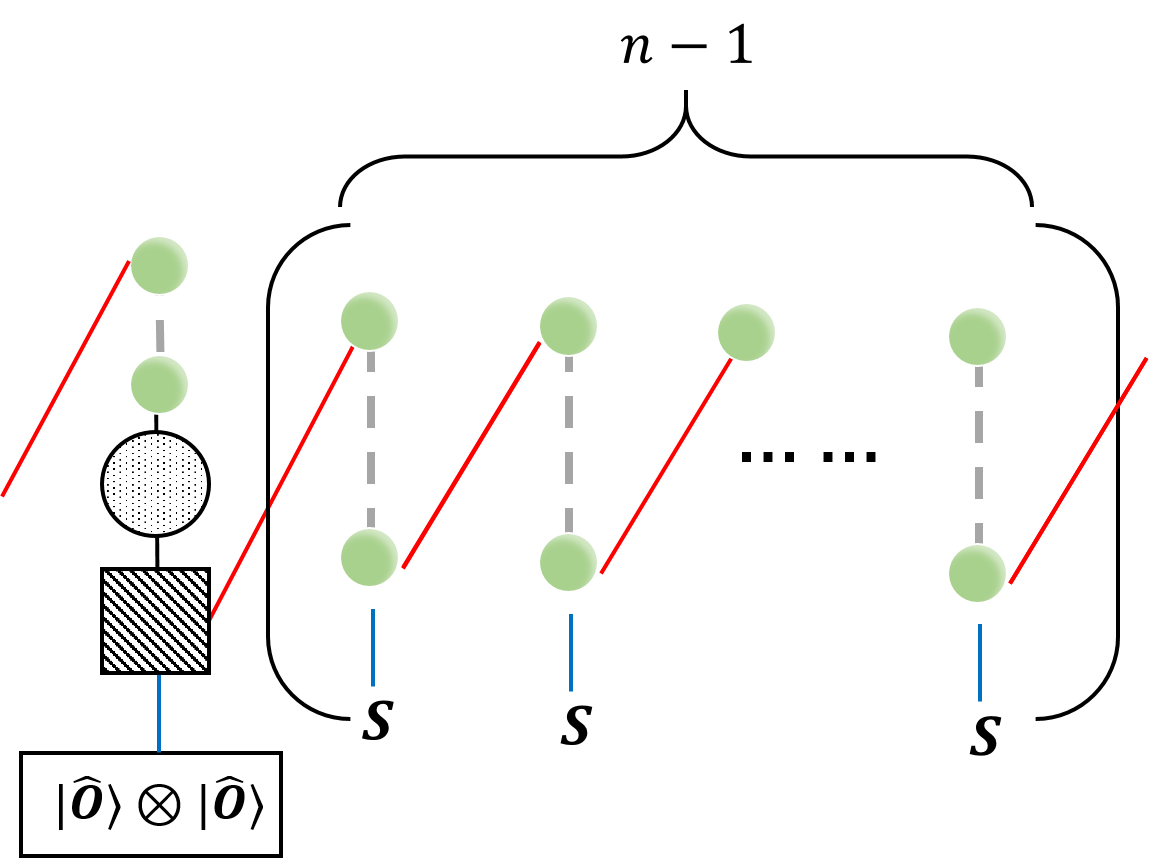}{0.3}
\end{equation}
and the contraction result reads:
\begin{equation}
\label{onsiteoverallu-}
    \langle (\partial_\theta \mathcal{L}_l)^2\rangle = \frac{C_5}{(Dd)^2-1}(\frac{1-\eta^{n-1}}{1-\eta}\chi - \frac{1}{Dd})+ \frac{C_6}{(Dd)^2 -1 }\eta^{n-1}
\end{equation}
For the case where only the $U_+$ forms unitary $2$-design, same as our previous discussion, the variance of derivative can be written as:
\begin{equation}
\ipic{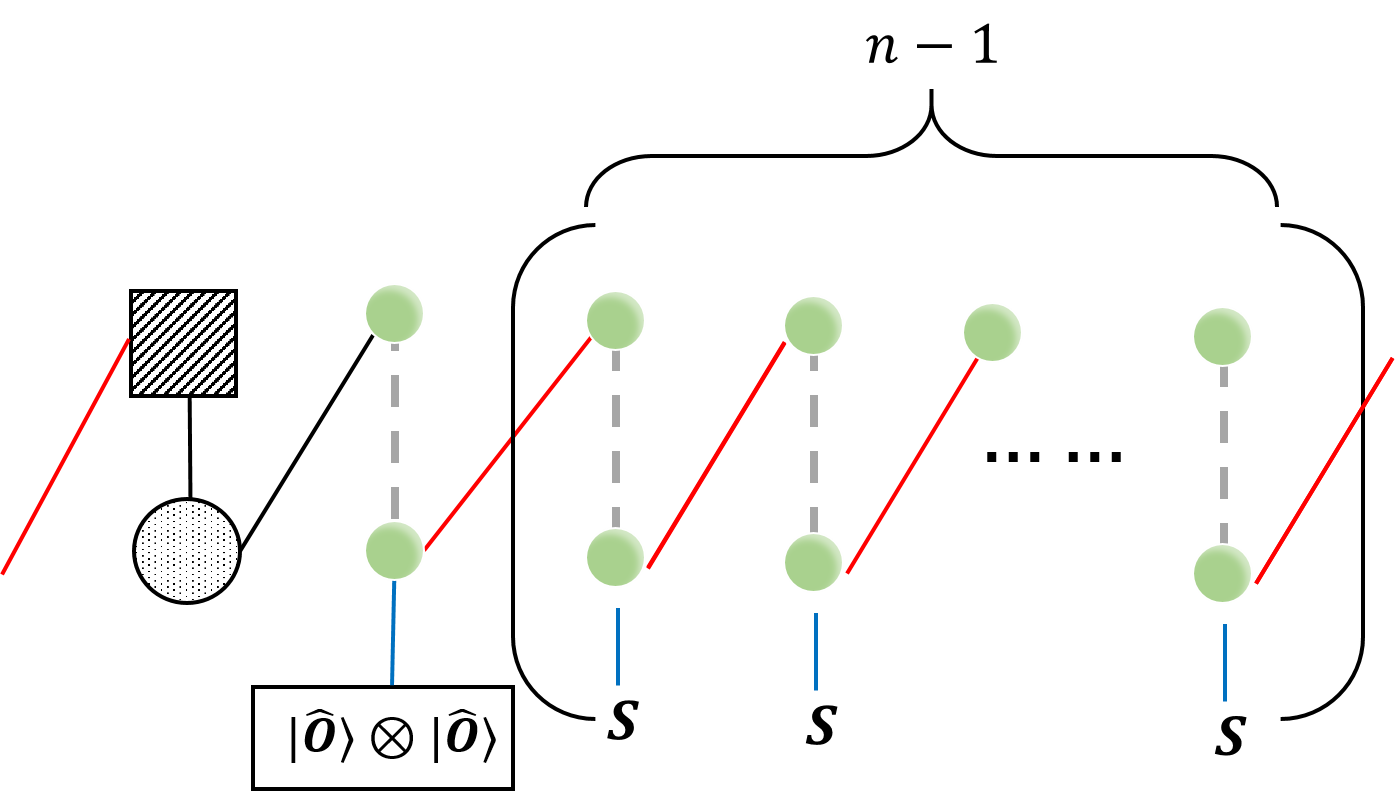}{0.3},
\end{equation}
where we can see that it has the similar structure in Eq.~(\ref{outputoverallu+}) with $\Delta = 0$, so that the contraction result reads:
\begin{equation}
\label{onsiteoverallu+}
  \langle (\partial_\theta \mathcal{L}_l)^2\rangle = C_2\left[\frac{\text{Tr}(\hat{O}^2) D}{D^2 d^2 -1}\right]+
    C_3 \left[\frac{\text{Tr}(\hat{O}^2) D^2}{D^2 d^2 -1} \eta^{n-1} + \frac{\text{Tr}(\hat{O}^2) D}{D^2 d^2 -1} \frac{1-\eta^{n-1}}{1-\eta}\chi\right].
\end{equation}

For the case where both $U_+$ and $U_-$ form unitary $2$-design, the variance of derivative can be written as:
\begin{equation}
\ipic{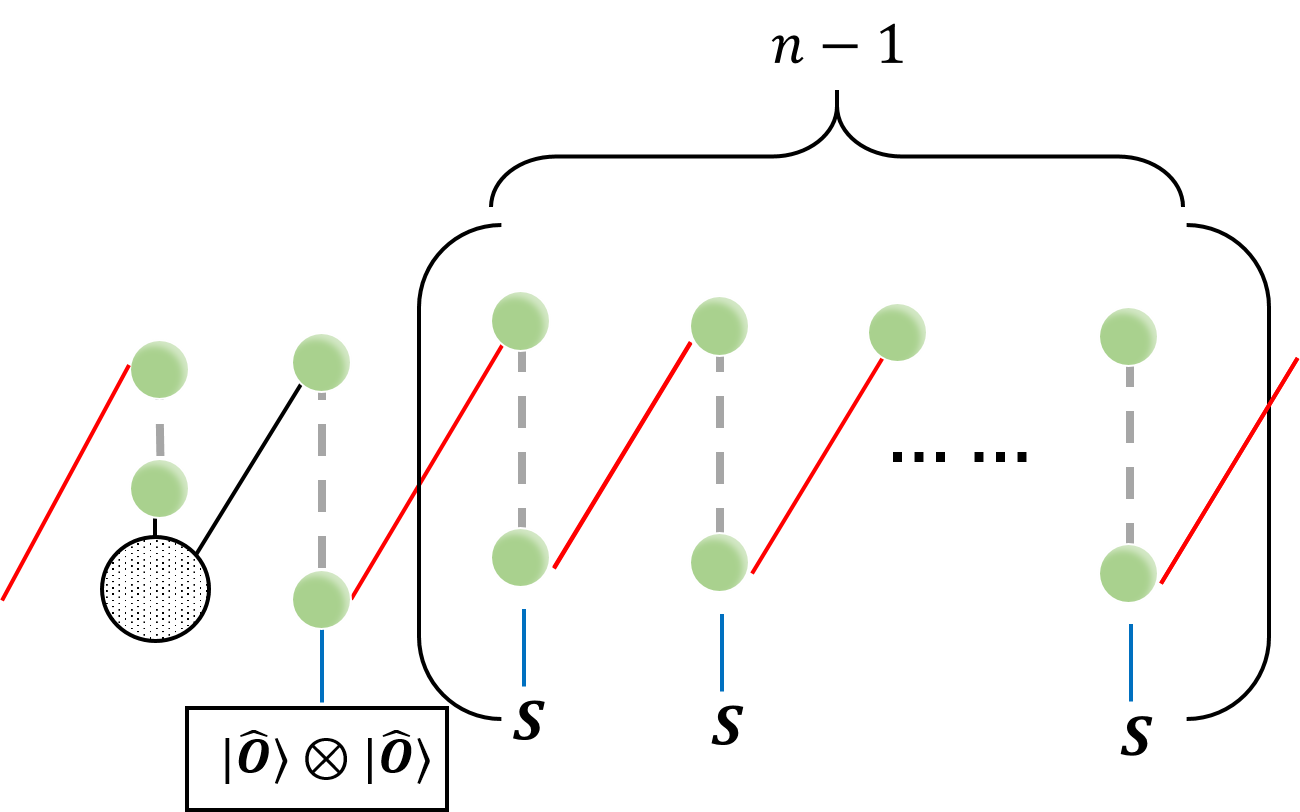}{0.3},
\end{equation}
where we can see that it has the similar structure in Eq.~(\ref{outputoverallu-u+}) with $\Delta = 0$, so that the contraction result reads:
\begin{equation}
\label{onsiteoverallu-u+}
         \langle (\partial_\theta \mathcal{L}_l)^2\rangle =  C_4 \frac{\text{Tr}(\hat{O}^2)D}{(D^2 d^2-1)^2}
    \left[D \eta^{n-1} + \frac{1-\eta^{n-1}}{1-\eta} \chi -\frac{1}{D^2 d^2}\right].
\end{equation}
As we can see that in Eqs.~(\ref{onsiteoverallu-}, \ref{onsiteoverallu+}, \ref{onsiteoverallu-u+}), for all of them, the increasing $n$ will eventually converge to a constant number that only depends on the virtual bond dimension $D$, the physical dimension $d$, and the trace of the square of the local observable $\text{Tr}(\hat{O}^2)$.

We combine the results in the off-site case in Eqs.~(\ref{outputoverallu-}, \ref{outputoverallu+}, \ref{outputoverallu-u+}) and the on-site case in Eqs.~(\ref{onsiteoverallu-}, \ref{onsiteoverallu+}, \ref{onsiteoverallu-u+}). As we fix the distance $\Delta$ (where $\Delta = 0$ in the on-site case),  the variance of $\partial_k \mathcal{L}_l$ scales as:
\begin{equation}
    \text{Var}(\partial_k \mathcal{L}_l) \sim \mathcal{O}\left(\text{Tr}(\hat{O}^2)\frac{P(D,d)}{Q(D,d)} \right),
\end{equation}
where $P(D,d)$ and $Q(D,d)$ are certain polynomials of $D$ and $d$ with constant degrees. This result indicates an absence of the barren plateaus in training process of the tensor-network based machine learning model in the local loss function case.

As the distance $\Delta > 0$, the variance 
\begin{equation}
    \text{Var}(\partial_k \mathcal{L}_l)\leq \mathcal{O}(d^{-\Delta}),
\end{equation}
which indicates that the derivative is upper bounded by an exponentially small number with respect to the distance $\Delta$, and only the derivatives with respect to nearby parameters play a role in the training process. 

With the discussions in this section, we complete the proof of Theorem 2 in the main manuscript.

\section{Numerical Details}

We implement the derivative of the loss function by the automatic differentiation package in the TensorFlow library \cite{Abadi2016TensorFlow}. The contraction of tensors is implemented by using the tensor network library \cite{roberts2019tensornetwork}. To speed up the computation, we use the GPU version of TensorFlow.

For the overlap global loss function:
\begin{equation}
     \mathcal{L}_g =1-\frac{|\langle\psi(\Theta)|\phi\rangle|^2}{\langle\psi(\Theta)|\psi(\Theta)\rangle \langle \phi | \psi \rangle },   
\end{equation}
we define the parameterized matrix product states $| \psi(\Theta) \rangle $ as:
\begin{equation}
    |\psi(\Theta)\rangle=\sum_{j_1, \dots, j_n}\text{Tr}\left[A_{j_1}^{(1)}(\Theta_1)A_{j_2}^{(2)}(\Theta_2)\dots A_{j_i}^{(i)}(\Theta_i)\dots A_{j_n}^{(n)}(\Theta_n)\right]|j_1,\dots j_n\rangle,
\end{equation}
where the tensor $A_{j_i}^{(i)}(\Theta_i)$ has the dimension $d \times D^2$ with the virtual bond dimension $D=2$ and the physical dimension $d = 2$.  The constant state
\begin{equation}
    |\phi\rangle = \sum_{j_1, \dots, j_n} \text{Tr}\left[C^{(1)}_{j_1}C^{(2)}_{j_2}\dots C^{(n)}_{j_n}\right]|j_1, j_2, \dots, j_n\rangle
\end{equation}
with each element in $C^{(i)}_{j_i}$ is fixed as $1$. 

For the KL divergence loss function
\begin{equation}
    \mathcal{L}_{g,2} = D_{KL}(Q(\phi)||P(\phi,\psi)),
\end{equation}
we setup a binary classification task (with the label $0$ for accept and label $1$ for reject), where we only consider one input data $|\phi\rangle$. The accept (reject) probability in hypothesis is $P(\phi, \psi)_{\text{accept}} = |\langle \phi | \psi(\Theta)\rangle|/\sqrt{Z_g}$ ($P(\phi, \psi)_{\text{reject}} = 1 - |\langle \phi | \psi(\Theta)\rangle|/\sqrt{Z_g}$), and the accept (reject) probability in data is $Q(\phi)_{\text{accept}} = 1$ ($Q(\phi)_{\text{reject}} = 0$).

In the local loss function, we define the loss function as:
\begin{equation}
        \mathcal{L}_l = \frac{\langle \psi (\Theta) |\hat{O}_m |\psi(\Theta)\rangle}{\langle \psi (\Theta) |\psi(\Theta)\rangle},
\end{equation}
where we take the local operator as the single site Pauli operator $\sigma_x$ which is acted on the $\lfloor \frac{n}{2} \rfloor$-th site. The virtual bond dimension and the physical dimension are the same as the global loss function case.

As for calculating the variance of $\partial_i^{(k)} \mathcal{L}$, we use the Monte Carlo method to generate $n_s$ number of  $\partial_i^{(k)} \mathcal{L}_l$. At each step, we randomly choose the elements of each parameterized tensor $A_{j_i}^{(i)}(\Theta_i)$  from the uniform distribution ranging from $-0.5$ to $0.5$. Then we calculate:
\begin{equation}
    \text{Var}(\partial_i^{(k)} \mathcal{L}) = \frac{1}{n_s}\sum_{s=1}^{n_s} \left\{\partial_k^{(i)}\mathcal{L}_l[\Theta(s)]\right\}^2 - \left\{\frac{1}{n_s} \sum_{s=1}^{n_s} \partial_k^{(i)}\mathcal{L}_l[\Theta(s)]\right\}^2.
\end{equation}
For every $10000$ steps, we check the convergence with the previous result. The relative convergence error is set as $10^{-3}$.

\end{document}